\pgfplotsset{compat=1.12}
\definecolor{DarkGreen}{rgb}{0.1,0.5,0.1}
\definecolor{DarkRed}{rgb}{0.5,0.1,0.1}
\definecolor{DarkBlue}{rgb}{0.1,0.1,0.5}
\definecolor{DarkishBlue}{RGB}{89, 131, 146}
\definecolor{Gray}{rgb}{0.2,0.2,0.2}
\definecolor{c1}{RGB}{38, 70, 83}
\definecolor{c2}{RGB}{42, 157, 143}
\definecolor{c3}{RGB}{233, 196, 106}
\definecolor{c5}{RGB}{231, 111, 81}
\definecolor{c4}{RGB}{244, 162, 97}
\lstdefinestyle{mystyle}{
    commentstyle=\color{DarkBlue},
    keywordstyle=\color{DarkRed},
    numberstyle=\tiny\color{Gray},
    stringstyle=\color{DarkGreen},
    basicstyle=\footnotesize,
    breakatwhitespace=false,         
    breaklines=true,                 
    captionpos=b,                    
    keepspaces=true,                 
    numbers=left,                    
    numbersep=5pt,                  
    showspaces=false,                
    showstringspaces=false,
    showtabs=false,                  
    tabsize=2
}
\def\draft{1}
\def\submit{0}
    \def\ShowAuthNotes{1}
    \def\ShowAuthNotes{0}
\newcommand{\forsubmit}[1]{#1}
\newcommand{\forreals}[1]{}
\newcommand{\forreals}[1]{#1}
\newcommand{\forsubmit}[1]{}
\def\isBiometrika{0}
\newcommand{\proofref}[1]{of \cref{#1}}
\newcommand{\proofref}[1]{Proof of \cref{#1}}
\newcommand{\paragraphref}[1]{\noindent\textbf{#1}}
\newcommand{\paragraphref}[1]{\paragraph{#1}}
\newtheorem{theorem}{Theorem}[section]
\newtheorem{remark}[theorem]{Remark}
\newtheorem{lemma}[theorem]{Lemma}
\newtheorem{corollary}[theorem]{Corollary}
\newtheorem{proposition}[theorem]{Proposition}
\newtheorem{assumption}[theorem]{Assumption}
\newtheorem{definition}[theorem]{Definition}%
\newtheorem*{definition*}{Definition}
\newtheorem*{proposition*}{Proposition}
\theoremstyle{definition}
\newtheorem*{example*}{Example}
\newtheoremstyle{example_contd}
{\topsep} {\topsep}%
{}%
{}%
{\bfseries}%
{.}%
{1em}%
{\thmname{#1} \thmnumber{ #2}\thmnote{#3} (continued)}%
\theoremstyle{example_contd}
\newcommand{\chapterref}[1]{\hyperref[ch:#1]{Chapter~\ref{ch:#1}}}
\newcommand{\claimref}[1]{\hyperref[claim:#1]{Claim~\ref{claim:#1}}}
\newcommand{\corollaryref}[1]{\hyperref[cor:#1]{Corollary~\ref{cor:#1}}}
\newcommand{\definitionref}[1]{\hyperref[def:#1]{Definition~\ref{def:#1}}}
\newcommand{\equationref}[1]{\hyperref[eq:#1]{Equation~\ref{eq:#1}}}
\newcommand{\factref}[1]{\hyperref[fact:#1]{Fact~\ref{fact:#1}}}
\newcommand{\figureref}[1]{\hyperref[fig:#1]{Figure~\ref{fig:#1}}}
\newcommand{\tableref}[1]{\hyperref[tab:#1]{Table~\ref{tab:#1}}}
\newcommand{\itemref}[1]{\hyperref[item:#1]{Item~(\ref{item:#1})}}
\newcommand{\lemmaref}[1]{\hyperref[lem:#1]{Lemma~\ref{lem:#1}}}
\newcommand{\propref}[1]{\hyperref[prop:#1]{Proposition~\ref{prop:#1}}}
\newcommand{\propositionref}[1]{\hyperref[prop:#1]{Proposition~\ref{prop:#1}}}
\newcommand{\remarkref}[1]{\hyperref[rem:#1]{Remark~\ref{rem:#1}}}
\newcommand{\sectionref}[1]{\hyperref[sec:#1]{Section~\ref{sec:#1}}}
\newcommand{\theoremref}[1]{\hyperref[thm:#1]{Theorem~\ref{thm:#1}}}
\newcommand{\assumptionref}[1]{\hyperref[ass:#1]{Assumption~\ref{ass:#1}}}
\newcommand{\exampleref}[1]{\hyperref[exmp:#1]{Example~\ref{exmp:#1}}}
\newcommand{\algoref}[1]{\hyperref[algo:#1]{Algorithm~\ref{algo:#1}}}
\newcommand{\Esymb}{\mathbb{E}}
\newcommand{\E}{\Esymb}
\DeclareMathOperator*{\EE}{\mathbb{E}}
\DeclareMathOperator*{\Var}{\mathrm{Var}}
\newcommand{\Ex}[1]{\E\left[#1\right]}
\newcommand{\widebar}[1]{\overline{#1}}
\newcommand{\tv}{\text{TV}}
\newcommand{\var}[1]{\Var\Paren{#1}}
\newcommand{\flatfrac}[2]{#1/#2}
\newcommand{\mper}{\,.}
\newcommand{\mcom}{\,,}
\newcommand{\cA}{{\cal A}}
\newcommand{\cD}{{\cal D}}
\newcommand{\cP}{{\cal P}}
\newcommand{\Paren}[1]{\left(#1 \right )}
\newcommand{\Brac}[1]{\left[#1 \right]}
\newcommand{\Set}[1]{\left\{#1\right\}}
\newcommand{\Abs}[1]{\left\lvert#1\right\rvert}
\newcommand{\Norm}[1]{\left\lVert#1\right\rVert}
\newcommand{\R}{\mathbb{R}}
\newcommand{\N}{\mathbb N}
\newcommand{\Ind}[1]{\mathbbm{1}\Set{#1}}
\newcommand{\independent}{\protect\mathpalette{\protect\independenT}{\perp}}
\def\independenT#1#2{\mathrel{\rlap{$#1#2$}\mkern2mu{#1#2}}}
\newcommand{\ignore}[1]{}
\newcommand{\KL}{\mathrm{KL}}
\DeclareMathOperator*{\argmin}{arg\,min}
\renewcommand{\epsilon}{\varepsilon}
\newcommand{\Unif}{\mathrm{Unif}}
\newcommand{\remove}[1]{}
\newcommand{\svert}{~\middle\vert~}
\newcommand{\ate}{\psi}
\newcommand{\sate}{\psi_{\DB}}
\newcommand{\ateP}{\psi_{\iid}}
\newcommand{\hate}{\widehat{\psi}}
\newcommand{\ite}{\psi}
\newcommand{\hite}{\widehat{\ite}}
\newcommand{\exps}[1]{\exp\Set{#1}}
\newcommand{\logs}[1]{\log\Paren{#1}}
\newcommand{\logp}[1]{\log\Paren{#1}}
\newcommand{\bz}{\mathbf{z}}
\newcommand{\bZ}{\mathbf{Z}}
\newcommand{\bY}{\mathbf{Y}}
\newcommand{\Hoeff}{\text{Hoeff}}
\newcommand{\barG}{\widebar{G}}
\newcommand{\barn}{\widebar{n}}
\newcommand{\hsigma}{\widehat{\sigma}}
\newcommand{\hmu}{\widehat{\mu}}
\newcommand{\ghate}{\widehat{\vartheta}}
\newcommand{\bghate}{\boldsymbol{\ghate}}
\newcommand{\estimator}{\widehat{\psi}}
\newcommand{\fM}{\mathfrak{M}}
\newcommand{\inv}{{-1}}
\newcommand{\minimaxrisk}{\fM\Paren{\estimator\Paren{\cP}}}
\newcommand{\EB}{\texttt{Stud}}
\newcommand{\EBCI}{\texttt{Stud-CI}}
\newcommand{\MBCR}{\texttt{MBCR}}
\newcommand{\CR}{\texttt{CR}}
\newcommand{\tBern}{\texttt{Bern}}
\newcommand{\SB}{\mathrm{SB-}}
\newcommand{\DB}{{\mathrm{DB}}}
\newcommand{\atePz}{\widebar{\psi}_{\PP_0}}
\newcommand{\atePo}{\widebar{\psi}_{\PP_1}}
\newcommand{\atePk}{\widebar{\psi}_{\PP_k}}
\newcommand{\atePM}{\widebar{\psi}_{\PP_M}}
\newcommand{\atePi}{\widebar{\psi}_{\PP_i}}
\newcommand{\atePj}{\widebar{\psi}_{\PP_j}}
\newcommand{\PP}{\mathbb P}
\newcommand{\normtp}[1]{\Norm{#1}_{L_2(\PP)}}
\newcommand{\etainvi}{{\eta^\inv(i)}}
\newcommand{\betai}{{\beta(i)}}
\newcommand{\Mirru}{{(\mathrm{u})}}
\newcommand{\Mirrl}{{(\ell)}}
\newcommand{\RR}{\mathbb R}
\newcommand{\iid}{{\mathrm{i.i.d.}}}
\newcommand{\simiid}{\overset{\iid}{\sim}}
\newcommand{\Bern}{\mathrm{Bern}}
\newcommand{\Bigoh}{{\mathcal O}}
\newcommand{\Dcal}{\mathcal D}
\newcommand{\brackone}{{(1)}}
\newcommand{\bracktwo}{{(2)}}
\newcommand{\brackM}{{(M)}}
\newcommand{\CLT}{{\mathrm{CLT}}}
\newif\ifverbose %
\newcommand*\@dblLabelI {}
\newcommand*\@dblLabelII {}
\newcommand*\@dblequationAux {}
\def\@dblequationAux #1,#2,%
\def\@dblLabelI{\label{#1}}\def\@dblLabelII{\label{#2}}}
\newcommand*{\doubleequation}[3][]{%
    \par\vskip\abovedisplayskip\noindent
    \if\relax\detokenize{#1}\relax
       \let\@dblLabelI\@empty
       \let\@dblLabelII\@empty
    \else %
       \@dblequationAux #1,%
    \fi
    \makebox[0.475\linewidth-1.5em]{%
     \hspace{\stretch2}%
     \makebox[0pt]{$\displaystyle #2$}%
     \hspace{\stretch1}%
    }%
    \makebox[0.05\linewidth-1.5em]{
    \hspace{\stretch1}
    \makebox[0pt]{\quad~and~\quad}
    \hspace{\stretch2}
    }
    \makebox[0.475\linewidth-1.5em]{%
     \hspace{\stretch1}%
     \makebox[0pt]{$\displaystyle #3$}%
     \hspace{\stretch2}%
    }%
    \makebox[4.15em][r]{(%
  \refstepcounter{equation}\theequation\@dblLabelI, 
  \refstepcounter{equation}\theequation\@dblLabelII)}%
  \par\vskip\belowdisplayskip
}
\title{On Nonasymptotic Confidence Intervals for Treatment Effects in Randomized Experiments}
\author{Ricardo J. Sandoval$^\star$, Sivaraman Balakrishnan$^\dagger$, Avi Feller$^\star$, Michael I. Jordan$^{\star \ddagger}$, and Ian Waudby-Smith$^\star$

\vspace{0.4cm}
$^\star$University of California, Berkeley\\ $^\dagger$Carnegie Mellon
University\\ $^\ddagger$\'Ecole Normale Sup\'erieure \& Inria, Paris}
\affil{}
\date{\today}
\begin{document}

\maketitle

\pagenumbering{gobble}
\pagenumbering{arabic}

\begin{abstract}
We study nonasymptotic (finite-sample) confidence intervals for treatment effects in randomized experiments.
In the existing literature, the effective sample sizes of nonasymptotic confidence intervals tend to be looser than the corresponding central-limit-theorem-based confidence intervals by a factor depending on the square root of the propensity score. We show that this performance gap can be closed, designing nonasymptotic confidence intervals that have the same effective sample size as their asymptotic counterparts.  Our approach involves systematic exploitation of negative dependence or variance adaptivity (or both). We also show that the nonasymptotic rates that we achieve are unimprovable in an information-theoretic sense.

\end{abstract}

\section{Introduction}\label{section:introduction}
Randomized experiments are ubiquitous in scientific inquiry, providing researchers with a general paradigm for estimating the causal effects of interventions. The paradigm is supported by an extensive literature that derives asymptotic point estimates and confidence intervals for treatment effects.  In particular, asymptotic $(1-\alpha)$-confidence intervals $C'_n \coloneqq [L'_n, U'_n]$ generally provide the following guarantee:
\begin{equation}\label{eq:intro-asympci}
    \liminf_{n \to \infty} \PP\Paren{\ate \in C'_n} \geq 1-\alpha \mcom
\end{equation}
where $\ate$ is the average treatment effect (to be defined more formally later). Although asymptotic approximations have the virtue of simplicity and generality, they are silent on whether the approximation is an accurate one for a specific sample size. In contrast, nonasymptotic derivations aim to provide results that apply to all sample sizes.  For example, a nonasymptotic $(1-\alpha)$-confidence interval $C_n \coloneqq [L_n, U_n]$ has the following guarantee:
\begin{equation}\label{eq:intro-nonasympci}
    \forall n \in \N,~\PP\Paren{\ate \in C_n} \geq 1-\alpha \mper
\end{equation}
In this work, we focus exclusively on confidence intervals satisfying \eqref{eq:intro-nonasympci}.

Although asymptotic confidence intervals provide a weaker overall theoretical guarantee than nonasymptotic intervals, they are sometimes preferred because of the insight that they can provide into design choices such as sample sizes. For example, in the case of treatment effect estimation in causal inference, while existing asymptotic and nonasymptotic intervals both scale with their sample sizes $n$ as $1/\sqrt{n}$, they differ in terms of their \emph{effective sample sizes}. Consider a randomized experiment with a binary treatment so that the probability (the ``propensity score'') of being assigned to the treatment group is $\pi \in (0, 1/2]$.
Then, a standard central-limit-theorem-based confidence interval $C_n^{\CLT}$ and a Hoeffding-style nonasymptotic confidence interval $C_n^\Hoeff$ exhibit asymptotic scalings with respect to both $n$ and $\pi$ of

\doubleequation[eq:intro-scaling-juxtaposition-clt,eq:intro-scaling-juxtaposition-hoeff]{\Abs{C_n^\CLT} \asymp \frac{1}{\sqrt{n\pi}}}{\Abs{C_n^\Hoeff} \asymp \frac{1}{\sqrt{n \pi^2}}\mcom}
\noindent respectively. See \citet{robins2012robins}, \citet[Proposition 3]{aronow2021nonparametric}, \citet[Section 2.2]{ding2025randomization}, and \citet[Section 4.3]{tchetgen2012causal} for the pursuit of such nonasymptotic intervals with the scaling of \eqref{eq:intro-scaling-juxtaposition-hoeff}.
We can interpret 
\eqref{eq:intro-scaling-juxtaposition-clt} as scaling with an ``effective'' sample sizes of $n\pi$:
under complete randomization, exactly $n \pi$ individuals are assigned to the treatment group; under Bernoulli randomization, $n \pi$ are assigned in expectation. From this vantage point, the effective sample size of the nonasymptotic Hoeffding-style confidence interval in \eqref{eq:intro-scaling-juxtaposition-hoeff} is $n\pi^2$.

\begin{figure}
    \centering
    \begin{subcaptionblock}{0.325\textwidth}
        \centering 
        \includegraphics[width=\linewidth]{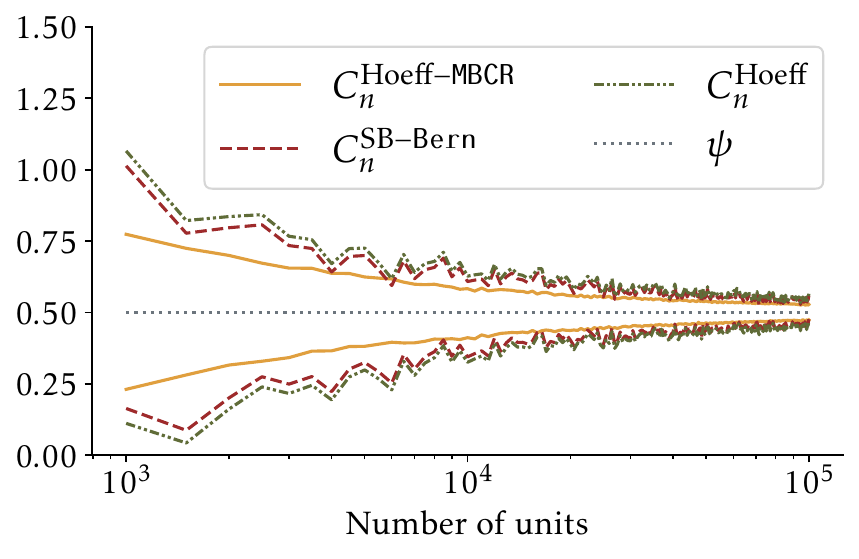}
        \caption{$\pi = 0.1$}
    \end{subcaptionblock}
    \hfill
    \begin{subcaptionblock}{0.325\textwidth}
        \centering 
        \includegraphics[width=\linewidth]{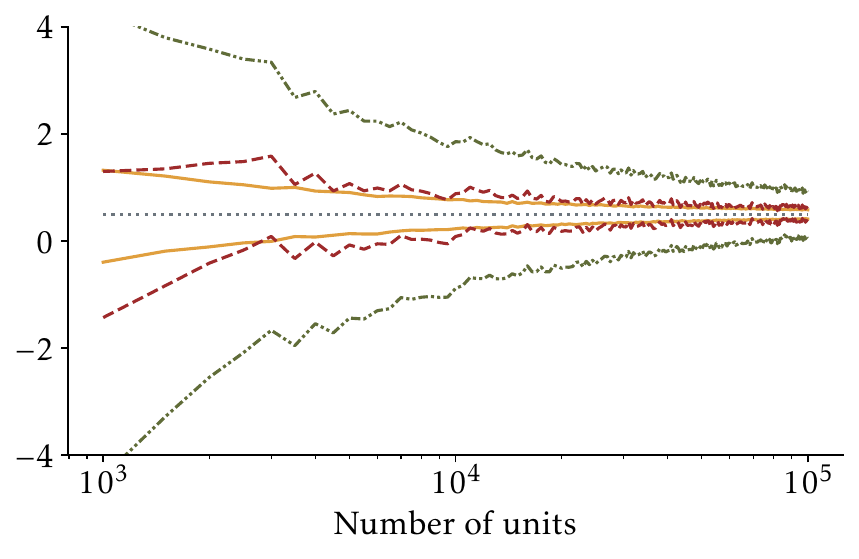}
        \caption{$\pi = 0.01$}
    \end{subcaptionblock}
    \hfill
    \begin{subcaptionblock}{0.325\textwidth}
        \centering 
        \includegraphics[width=\linewidth]{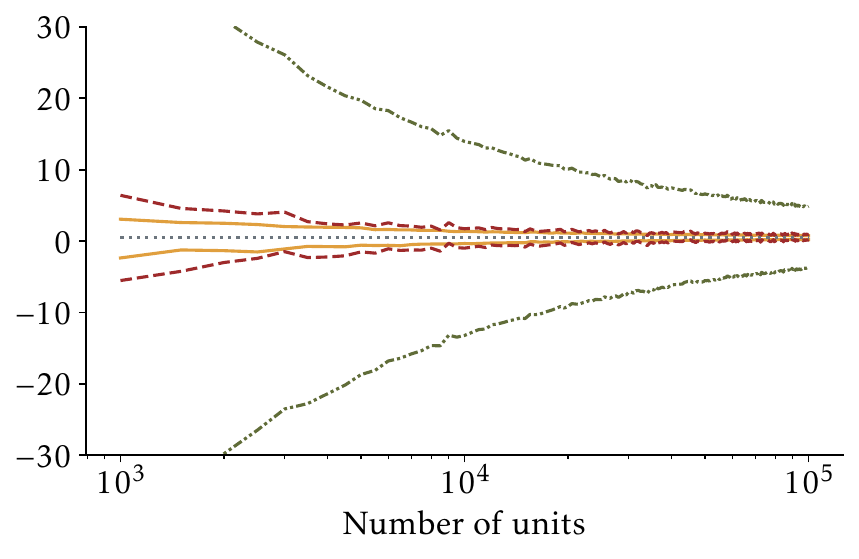}
        \caption{$\pi = 0.001$}
    \end{subcaptionblock}
    \caption{A comparison of the scaling between the nonasymptotic Hoeffding-style confidence intervals present in the literature ($C_n^{\Hoeff}$) and two of our proposed nonasymptotic confidence intervals ($C_n^{\Hoeff\mathrm{-}\MBCR}$ and $C_n^{\SB\Bern}$). The additional $1/\sqrt{\pi}$ factor in the effective sample size of the existing nonasymptotic confidence intervals leads to these intervals being quite loose---as compared to our proposed confidence intervals---in ``small-$\pi$'' regimes.}
    \label{fig:comparison-with-existing-cis}
\end{figure}

A natural question is 
whether this inferior effective sample size is a necessary property of nonasymptotic confidence interval construction or whether more delicate proof techniques can lead to nonasymptotic intervals with the same effective sample size of $n\pi$ as in the asymptotic case. In this paper,
we show that the $1/\sqrt{n \pi^2}$ scaling is in fact an artifact of existing proof techniques and a scaling of $1/\sqrt{n\pi}$ can be obtained for nonasymptotic confidence intervals for treatment effect estimation.  See \cref{fig:comparison-with-existing-cis} for a demonstration of our new confidence intervals under different values of the propensity score when compared to existing techniques. Our approach follows Hoeffding-type sub-Gaussian arguments that have been used in the literature, but it differs in the way that these arguments are employed. In particular, under complete randomization---see \cref{thm:hoeffding-mbcr}---our improvements are due to a new analysis technique that is able to exploit the appropriate (negative) dependence in the treatment effect estimation problem. Some intuition alongside a proof sketch can be found immediately following \cref{corollary:scaling-hoeffding-mbcr}. Additionally, in \cref{thm:sub-bernoulli-cis} we show that under Bernoulli randomization, the looser scaling can be ameliorated through sharper concentration inequalities for sub-Bernoulli random variables. Finally, in \cref{sec:information-theoretic-bounds} we present an information-theoretic analysis that establishes that the $1/\sqrt{n\pi}$ scaling is unimprovable.

\section{Preliminaries}\label{section:preliminaries}
We consider a setting with $n$ experimental units and two treatment arms. Each unit $i \in [n] \coloneqq \Set{1, \dots, n}$ is randomly assigned to either the treatment or control arm. We will denote as $Z_i \in \Set{0,1}$ the treatment assignment of unit $i \in [n]$, where $Z_i = 1$ denotes that the unit received the treatment and $Z_i = 0$ denotes that they received the control. For now we will remain agnostic to the randomization distribution from which $\bZ \coloneqq \Paren{Z_i}_{i=1}^n$ is sampled but we will later consider both complete and Bernoulli randomization.
Throughout, we will denote the number of units assigned to the treatment and control arms as $n_1 \coloneqq \sum_{i=1}^n Z_i$ and $n_0 \coloneqq n - n_1$, respectively.

We follow the potential outcomes framework \citep{neyman1923applications, rubin1974estimating} where each unit has a potential outcome under treatment and control. We assume that the potential outcomes are bounded and take their values in the unit interval: $(Y_i(0), Y_i(1)) \in [0,1] \times [0,1]$ for each $i \in [n]$, where $Y_i(0)$ represents the potential outcome under control and $Y_i(1)$ represents the potential outcome under treatment. One could have assumed $\Paren{Y_i(0), Y_i(1)} \in [a, b] \times [a,b]$ for arbitrary $a,b \in \R$ such that $a < b$, but without loss of generality these can be rescaled to $[0,1]$ through the transformation $x \mapsto (x - a) / (b-a)$. Our results hold under both the design-based finite population and superpopulation settings and we treat these settings jointly via the following assumption.
\begin{assumption}[Generalized potential outcome distribution]
\label{assumption:generalized-potential-outcomes}
    The potential outcomes are drawn from a distribution taking the following form: $\Paren{Y_i(0), Y_i(1)}_{i=1}^n \sim P \equiv \prod_{i=1}^n P_i$ so that $\Paren{Y_i(0), Y_i(1)} \sim P_i$ and $(Y_i(0), Y_i(1)) \in [0,1] \times [0,1]$ almost surely for each $i \in [n]$.
\end{assumption}
Assumption~\ref{assumption:generalized-potential-outcomes} reduces to the 
independent and identically distributed ($\iid$) superpopulation setting 
whenever $P_1 = \cdots = P_n$, and to the design-based finite-population 
setting whenever $P_1, \dots, P_n$ are taken to be degenerate distributions 
at the nonrandom tuples $(y_1(0), y_1(1)), \dots, \\(y_n(0), y_n(1))$.

Under the 
potential outcomes framework,
only the outcome under the assigned intervention is realized. That is, the outcome that is observed in the experiment is $Y_i \coloneqq Z_i Y_i(1) + \Paren{1 - Z_i} Y_i(0)$ for each unit $i \in [n]$. We note that in our definition of the observed outcomes we have implicitly assumed the stable unit treatment value assumption (SUTVA) --- an assumption that is sometimes referred to as ``consistency'' \citep{kennedy2024semiparametric}. This assumption posits that the outcome $Y_i$ for unit $i$ is always given by $Y_i(z)$ when they are assigned treatment level $z \in \Set{0,1}$, regardless of whether unit $j$ was assigned treatment level $z'$ for $j \neq i$ and $z' \in \Set{0,1}$.  

We focus on estimating the average treatment effect, which under the $\iid$ superpopulation and design-based settings are defined as
\begin{equation}
    \ateP \coloneqq \E_{P}\Brac{Y(1) - Y(0)} \quad \mathrm{and} \quad \sate \coloneqq \frac{1}{n} \sum_{i=1}^n \Brac{y_i(1) - y_i(0)} \mcom
\end{equation}
respectively. At times, we will deliberately be agnostic as to whether our results hold for $\ateP$ or $\sate$, in which cases we will simply write $\ate$ instead.

Identifiability of the average treatment effect relies on SUTVA in conjunction with positivity and no unmeasured confounding of the treatment assignments, assumptions that we now proceed to state formally.
\begin{assumption}[Positivity]
\label{assumption:positivity}
    We say that a treatment assignment satisfies \emph{positivity} if \\$0 < \pi < 1$, where $\pi \coloneqq \PP(Z_i = 1)$.
\end{assumption}

\begin{assumption}[No unmeasured confounding]
\label{assumption:ignorability}
    We say that a treatment assignment satisfies \emph{no unmeasured confounding} if $\Paren{Y_i(1), Y_i(0)} \independent Z_i$ for all $i \in [n]$.
\end{assumption}
Both positivity and no unmeasured confounding hold by design in our setting 
since we are considering randomized experiments.
Assumption~\ref{assumption:positivity} and Assumption~\ref{assumption:ignorability} allow us to 
express the average treatment effect as a function of the distribution
of the observed data (i.e., identify the average treatment effect). Under 
the $\iid$ superpopulation setting the average treatment effect is identified as
follows
\begin{equation}
    \ateP = \E_{\iid}\Brac{Y_i \svert Z_i = 1}
    - \E_{\iid}\Brac{Y_i \svert Z_i = 0} \mcom
\end{equation}
where we note that the expectation $\E_{\iid}[\cdot]$ is taken over both the potential outcomes and treatment assignments.
Under the design-based setting, we have that the 
average treatment effect is identified as
\begin{equation}
    \sate = \frac{1}{n} \sum_{i=1}^n \E_{\DB}\Brac{Y_i \svert Z_i = 1}
    - \E_{\DB}\Brac{Y_i \svert Z_i = 0} \mcom
\end{equation}
where the only source of randomness comes from the treatment assignments. We reflect this by using the expectation $\E_{\DB}\Brac{\cdot}$.

\subsection{Canonical randomization procedures}
In this work, we focus on providing nonasymptotic confidence intervals for the average treatment effect when using the Horvitz-Thompson estimator~\citep{horvitz1952generalization}:
\begin{equation}
    \estimator \coloneqq \frac{1}{n} \sum_{i=1}^n 
    Y_i \Paren{\frac{Z_i}{\pi} - \frac{1 - Z_i}{1 - \pi}} \mcom
\label{eq:ht-estimator}
\end{equation}
under two designs which are popular in the causal inference literature: Bernoulli randomization and complete randomization. We formally define these two randomization procedures here.
\begin{definition}[Bernoulli randomization]
\label{definition:bernoulli-randomization}
    For a treatment assignment probability $\pi \in (0, 1/2]$, the treatment assignment $Z_i$ for unit $i \in [n]$ is given by an independent and identically distributed $\mathrm{Bern}\Paren{\pi}$ random variable. Said differently, for any $(z_1, \dots, z_n) \in \{0, 1\}^n$ with $n_1$ ones and $n_0$ zeros,
    \begin{equation}
        \PP_{\emph{\tBern}}\left ((Z_1, \dots, Z_n) = (z_1, \dots, z_n)\right ) = \pi^{n_1}(1-\pi)^{n_0}.
    \end{equation}
\end{definition}

\begin{definition}[Complete randomization]
\label{definition:complete-randomization}
    Fix the number of treated units, $n_1 \in \N$, and control units, $n_0 \in \N$, such that $n = n_1 + n_0$. Under  \emph{complete randomization}, the treatment assignment vector $\Paren{Z_1, \dots, Z_n}$ has the 
    distribution given by
    \begin{align}
        \PP_{\emph{\CR}}\Paren{\Paren{Z_1, \dots, Z_n} = \Paren{z_1, \dots, z_n}} = \frac{1}{{n \choose n_1}} \mcom
    \end{align}
    where $\Paren{z_1, \dots z_n}$ is any permutation of the vector of $n_1$ ones followed by $n_0$ zeros.
\end{definition}

The construction of our confidence intervals under complete randomization will depend on a tool we call ``mini-batch complete randomization.'' This is a randomization procedure that is distributionally equivalent to complete randomization but which allows us to better exploit the negative dependence between treatment assignments.

\subsection{Mini-batch complete randomization}\label{section:mini-batch-complete-randomization}

At a high level, mini-batch complete randomization comprises a series of two random permutations. The first permutation is over all units, which induces a random ordering that is then used to group units together. Then, within each group we randomly assign units to the treatment and control arms by randomly permuting that group's treatment assignment vector. As alluded to above, this randomization scheme is marginally identical to complete randomization (see \cref{proposition:mbcr-cr-equivalence}) with the advantage of supplying additional bookkeeping of the negative dependence between treatment assignments.
We now proceed to describe mini-batch complete randomization formally.

We define the marginal propensity score $\pi$ as
\begin{align}
    \pi \coloneqq \frac{n_1}{n} \leq \frac{1}{2} \mper
\end{align}
To perform mini-batch complete randomization, we first compute the size of the batches, or groups, as
\begin{align}
\label{eq:group-size}
    G \equiv G\Paren{\pi} \coloneqq \left\lceil \frac{1}{\pi} \right\rceil \geq 2 \mper
\end{align}
For example, if $\pi = \flatfrac{1}{K}$ for an integer $K \geq 2$, then $G = K$, and if $\pi = \flatfrac{2}{7}$, then $G = 4$, and so on.
Let $T$ denote the total number of groups of size $G$:
\begin{align}
    T \coloneqq \begin{cases}
      n_1 & \text{if}~\left \lceil 1/\pi \right \rceil = 1/\pi \mcom\\
      n_1-1 & \text{if}~n-(n_1 - 1) G \geq 2 \mcom\\
      n_1-2 & \text{otherwise}.
    \end{cases}
\end{align}
Let $\widebar G$ be the size of the final group given by
\begin{equation}
  \widebar G = n - T G,
\end{equation}
so that the number of units randomized to one of the treatment or control groups is given by $TG + \widebar G = n$.
Let $\widebar n_1$ be the number of units to be assigned to treatment in the final group, given by
\begin{equation}
    \widebar n_1 \coloneqq \begin{cases}
      0 & \text{if}~\left \lceil 1/\pi \right \rceil = 1/\pi \mcom\\
      1 & \text{if}~n-(n_1 - 1) G \geq 2 \mcom\\
      2 & \text{otherwise}.
    \end{cases}
\end{equation}
Notice that at each of the first $T$ rounds we  assign exactly one unit to treatment and at the final round we assign $\widebar n_1$ to the treatment arm. This ensures that exactly $n_1$ units are assigned to the treatment arm altogether.

In what follows, let $S(N)$ be the group of permutations on $\{1, \dots, N\}$ for any positive integer $N$. Let $a \equiv (a_1, \dots, a_n)$ be a vector containing $n_0$ zeros and $n_1$ ones arranged as a single $1$ followed by $G-1$ zeros, then a single $1$ followed by $G-1$ zeros and so on $T$ times, and finally $\widebar n_1$ ones followed by $\widebar G - \widebar n_1$ zeros. The procedure to construct $a$ is best illustrated via examples. Suppose in particular that $n = 9$ with $n_1 = 3$ and $n_0 = 6$. Then,
\begin{equation}
  a = (1, 0, 0, 1, 0, 0, 1, 0, 0).
\end{equation}
Similarly, if $n = 9$ with $n_1 = 4$ and $n_0 = 5$, then 
\begin{equation}
  a = (1, 0, 0, 1, 0, 0, 1, 1, 0),
\end{equation}
and so on. Given an organized vector of pre-randomization treatment allocations, mini-batch complete randomization---stated formally in \cref{algorithm:mbcr} found in \cref{sec:mbcr-algorithm}---is nothing but a particular random permutation of $a$. The permutation first consists of a random permutation of within-subgroup treatment assignments, denoted by $\beta$ in \cref{algorithm:mbcr}, and then a uniformly random permutation of the entire $n$-vector, denoted by $\eta$ in \cref{algorithm:mbcr}, to ultimately yield their composition $\rho \coloneqq \eta \circ \beta$. The vector of treatment assignments $\bZ \equiv (Z_1, \dots, Z_n)$ is $\bZ = (a_{\rho(1)} , \dots, a_{\rho(n)})$.

As discussed above, mini-batch complete randomization is distributionally equivalent to complete randomization in the sense that both satisfy \cref{definition:complete-randomization}. We state this result formally in the following proposition.

\begin{proposition}
\label{proposition:mbcr-cr-equivalence}
    Let $\PP_{\emph{\MBCR}}$ and $\PP_{\emph{\CR}}$ be the probability distributions induced by mini-batch complete randomization and complete randomization, respectively. Then, for any permutation $(z_1, \dots, z_n)$ of $n_1$ ones and $n_0$ zeros,
    \begin{align}
        \PP_{\emph{\MBCR}}\Paren{\Paren{Z_1, \dots, Z_n} = \Paren{z_1, \dots, z_n}} = \PP_{\emph{\CR}}\Paren{\Paren{Z_1, \dots, Z_n} = \Paren{z_1, \dots, z_n}} = \frac{1}{{n \choose n_1}} \mper
    \end{align}
\end{proposition}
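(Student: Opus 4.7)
The plan is to show that the composed permutation $\rho = \eta \circ \beta$ driving mini-batch complete randomization is, marginally, uniformly distributed on $S(n)$, after which the conclusion will follow from a one-line counting argument. The key observation is that the uniform distribution on the symmetric group is right-invariant: for any fixed $\sigma \in S(n)$ and any uniformly random $\eta$ on $S(n)$, the permutation $\eta \circ \sigma$ is also uniformly distributed on $S(n)$. Applied to our setting, conditioning on the within-group permutation $\beta$---which is sampled in \cref{algorithm:mbcr} independently of the outer uniform permutation $\eta$---yields that $\rho = \eta \circ \beta$ is uniformly distributed on $S(n)$ given $\beta$, hence marginally uniform on $S(n)$ as well.

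With uniformity of $\rho$ established, fix any permutation $(z_1, \dots, z_n)$ of $n_1$ ones and $n_0$ zeros and let $\cI_1 \coloneqq \Set{j \in [n] : z_j = 1}$ and $\cJ_1 \coloneqq \Set{j \in [n] : a_j = 1}$, with $\cI_0$ and $\cJ_0$ defined analogously for the zeros. The event $\Set{(a_{\rho(1)}, \dots, a_{\rho(n)}) = (z_1, \dots, z_n)}$ occurs if and only if $\rho$ maps $\cI_1$ bijectively onto $\cJ_1$ (equivalently, $\cI_0$ bijectively onto $\cJ_0$). Since $\Card{\cI_1} = \Card{\cJ_1} = n_1$ and $\Card{\cI_0} = \Card{\cJ_0} = n_0$, the number of such $\rho \in S(n)$ is exactly $n_1! \cdot n_0!$. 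Combining this count with the uniformity of $\rho$ on $S(n)$ yields
\begin{equation}
    \PP_{\MBCR}\Paren{(Z_1, \dots, Z_n) = (z_1, \dots, z_n)} = \frac{n_1! \, n_0!}{n!} = \frac{1}{\binom{n}{n_1}},
\end{equation}
which agrees with the complete-randomization probability in \cref{definition:complete-randomization}.

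There is no genuine obstacle here: the whole argument reduces to the right-invariance of the uniform measure on $S(n)$ together with elementary counting. The only thing that needs verifying---and it is immediate from inspecting the construction rather than being a real proof burden---is that in \cref{algorithm:mbcr} the outer permutation $\eta$ is indeed sampled uniformly on $S(n)$ and independently of $\beta$. Note also that the counting step is insensitive to the specific mini-batch structure encoded in the template vector $a$; all that matters is that $a$ contains exactly $n_1$ ones and $n_0$ zeros, so the proof would go through verbatim for any other ``organized'' template with the same marginal composition.
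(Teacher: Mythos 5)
Your proposal is correct and follows essentially the same route as the paper's proof: condition on the within-group permutation $\beta$, use the fact that $\eta$ is uniform on $S(n)$ and independent of $\beta$ to conclude that the conditional assignment probability is $1/\binom{n}{n_1}$, and then marginalize over $\beta$. Your version simply makes explicit (via right-invariance of the uniform measure and the $n_1!\,n_0!$ count) what the paper asserts directly, so there is no substantive difference.
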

The proof can be found in \cref{section:equivalence-mbcr-cr}.  It consists of showing that after conditioning on the within group treatment assignment, $\beta$, the assignment probability under mini-batch complete randomization is the same as that of complete randomization.

\begin{remark}[Connection to stratified randomized experiments]
    Our mini-batch complete randomization procedure is reminiscent of 
    stratified randomized experiments---or matched-pairs experiments in their
    most ``extreme'' case. Rather than optimizing for covariate balance 
    in the treatment and control groups, the objective of mini-batch complete
    randomization is to have a procedure that imposes enough structure that enables
    us to adequately exploit the negative dependence amongst treatment assignments.
\end{remark}

Under mini-batch complete randomization different units may have different conditional propensity scores; thus, we can no longer use the standard Horvitz-Thompson estimator. Instead, for the first $T$ groups---which we denote as $g_t$ for $t = 1, \dots , T$---the propensity score conditional on the unit-wide permutation $\eta$ is $\flatfrac{1}{G}$, because only one unit in each group of size $G$ is assigned to the treatment. Meanwhile, for the last group $\widebar g$ the propensity score is $\flatfrac{\widebar n_1}{\widebar G}$, because $\widebar n_1$ units of the $\widebar G$ are assigned to the treatment. Our Horvitz-Thompson estimator thus takes the following form.

\begin{definition}[Horvitz-Thompson estimator under mini-batch complete randomization]
\label{definition:mbcr-ht-estimator}
The \emph{Horvitz-Thompson estimator} under the mini-batch complete randomization design is
    \begin{align}
    \label{eq:mbcr-horvitz-thompson-estimator}
        \estimator' \coloneqq \frac{1}{n} \sum_{t=1}^T \sum_{i \in g_t}  Y_{\eta^{-1}(i)} \left ( \frac{Z_{\beta(i)}}{1/G} - \frac{(1-Z_{\beta(i)})}{1-1/G} \right ) + \frac{1}{n} \sum_{i\in \widebar g}  Y_{\eta^{-1}(i)} \left ( \frac{Z_{\beta(i)}}{1/\widetilde G} - \frac{(1-Z_{\beta(i)})}{1-1/\widetilde G} \right ) \mcom
    \end{align}
    where $\widetilde G \coloneqq \flatfrac{\widebar G}{\widebar n_1}$.
\end{definition}
Notice how the indices for the treatment assignments and the observed outcomes are no longer the usual ``$i$''; rather, they are $\beta(i)$ and $\eta^{-1}(i)$, respectively. These are necessary transformations because we are now keeping track of the groups and units within each group. Furthermore, in the case that $T = n_1$ the Horvitz-Thompson estimator under mini-batch complete randomization \eqref{eq:mbcr-horvitz-thompson-estimator} is equal to the standard Horvitz-Thompson estimator from \cref{eq:ht-estimator}. 
For simplicity, throughout the paper we will often refer to the estimator from \eqref{eq:mbcr-horvitz-thompson-estimator} as the Horvitz-Thompson estimator, but we reiterate that it is only equal to the usual Horvitz-Thompson estimator in settings in which $\pi = 1/K$ for an integer $K \geq 2$.

As we state in the following proposition, whose proof is provided in \cref{section:unbiased-ht}, the Horvitz-Thompson estimator from \eqref{eq:mbcr-horvitz-thompson-estimator} produces an unbiased estimate of the average treatment effect.
\begin{proposition}
\label{proposition:unbiased-ht}
    Under mini-batch complete randomization the Horvitz-Thompson estimator
    \eqref{eq:mbcr-horvitz-thompson-estimator} is a conditionally and
    maringally unbiased estimator of the sample average treatment effect
    \begin{equation}
        \E_{\DB}\Brac{\hate' \svert \eta} = \E_{\DB}
        \Brac{\hate'} = \frac{1}{n} \sum_{i=1}^n [y_i(1) - y_i(0)] \mper
    \end{equation}
    Moreover, under mini-batch complete randomization the Horvitz-Thompson
    estimator \eqref{eq:mbcr-horvitz-thompson-estimator} is a conditionally and
    maringally unbiased estimator of the population average treatment effect
    \begin{equation}
        \E_{\iid}\Brac{\hate' \svert \eta} = \E_{\iid}\Brac{\hate'} = \E_{P}\Brac{Y(1) - Y(0)} \mper
    \end{equation}
\end{proposition}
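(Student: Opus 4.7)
The plan is to prove conditional unbiasedness given $\eta$ first; the marginal result will then follow from the tower property (since the conditional expectation will turn out to be $\eta$-free), and the superpopulation statement will follow from one additional layer of conditioning on the potential outcomes. Conditional on $\eta$, the only randomness left in $\hate'$ is the within-group permutation $\beta$, and by construction each position $i \in g_t$ in one of the first $T$ groups is treated with conditional propensity $1/G$, while each $i \in \widebar g$ is treated with conditional propensity $\widebar n_1/\widebar G = 1/\widetilde G$. These two propensities are precisely the denominators used in the two sums defining $\hate'$ in \cref{eq:mbcr-horvitz-thompson-estimator}, which is why the mini-batch bookkeeping was introduced to begin with.

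The core of the proof is then the standard per-summand Horvitz-Thompson identity. For a fixed position $i$ with conditional propensity $p \in \set{1/G, 1/\widetilde G}$, I would expand the observed outcome via SUTVA as $Y_{\eta^{-1}(i)} = Z_{\beta(i)} Y_{\eta^{-1}(i)}(1) + (1 - Z_{\beta(i)}) Y_{\eta^{-1}(i)}(0)$, use $Z_{\beta(i)}(1-Z_{\beta(i)}) = 0$ to kill cross terms, and treat the potential outcomes as constants given $\eta$ in the design-based setting; the conditional expectation of each summand then collapses to
$$
\E_{\DB}\Brac{Y_{\eta^{-1}(i)}\Paren{\tfrac{Z_{\beta(i)}}{p} - \tfrac{1-Z_{\beta(i)}}{1-p}} \svert \eta} = y_{\eta^{-1}(i)}(1) - y_{\eta^{-1}(i)}(0).
$$
Summing this identity across the $n = TG + \widebar G$ positions, dividing by $n$, and then re-indexing by $j = \eta^{-1}(i)$ (which is valid since $\eta$ is a bijection on $[n]$) yields $\E_{\DB}[\hate' \svert \eta] = \sate$. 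Because the right-hand side does not depend on $\eta$, the marginal statement in the design-based setting is immediate from the tower property.

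For the superpopulation statement, I would condition additionally on the vector of potential outcomes. Since the experimenter-generated permutations $\eta$ and $\beta$ are independent of $\Paren{Y_i(0), Y_i(1)}_{i=1}^n$, the inner conditional expectation reduces to exactly the design-based calculation above and returns $\tfrac{1}{n}\sum_{i=1}^n [Y_i(1) - Y_i(0)]$; a final expectation under $P = \prod_{i=1}^n P_i$ then gives $\E_P[Y(1) - Y(0)] = \ateP$ by linearity together with \cref{assumption:generalized-potential-outcomes}. I do not expect a substantive obstacle here; the only thing to watch is the indexing of the composition $\rho = \eta \circ \beta$, which must pair the outcome $Y_{\eta^{-1}(i)}$ in each summand with a treatment indicator $Z_{\beta(i)}$ whose conditional propensity given $\eta$ genuinely matches the denominator used for that summand.
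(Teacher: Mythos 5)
Your proposal is correct and follows essentially the same route as the paper's proof: per-summand conditional unbiasedness given $\eta$ (using SUTVA, the independence of $\beta$ and $\eta$, and the fact that the conditional propensity $1/G$ or $1/\widetilde G$ matches the denominator), then linearity, re-indexing, and the tower property for the marginal claim. Your handling of the superpopulation case by first conditioning on the potential outcomes is just a reordering of the same conditioning argument the paper carries out directly via ignorability, so there is no substantive difference.
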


Now that we have introduced mini-batch complete randomization, 
we are ready to present confidence intervals that enjoy the $1/\sqrt{n\pi}$ scaling.

\section{Confidence Intervals with the Optimal Effective Sample Size}\label{section:confidence-intervals}

In the first part of this section we present nonasymptotic confidence intervals that enjoy the $n\pi$ effective sample size alluded to in the introduction. We show that such confidence intervals can be constructed under complete, mini-batch complete, and Bernoulli randomization, and discuss the key ideas behind their constructions. In the latter part of this section we present a variance-adaptive confidence interval that is reminiscent of classical Studentized variance-adaptive confidence intervals but is able to handle instances in which the data are not independent and identically distributed.

\subsection{Hoeffding-style and sub-Bernoulli confidence intervals}

The first result we present is a Hoeffding-style confidence interval under (mini-batch) complete randomization. Its construction depends on a bound on the moment generating function of bounded random variables due to \citet{hoeffding_probability_1963}, commonly referred to as Hoeffding's lemma. The use of Hoeffding's lemma to construct confidence intervals for the average treatment effect has appeared numerous times in the literature~\citep{robins2012robins, aronow2021nonparametric, ding2025randomization, tchetgen2012causal} but in all cases en route to a suboptimal effective sample size of $n\pi^2$.
\begin{theorem}
\label{thm:hoeffding-mbcr}
    Let $\hate'$ be the Horvitz-Thompson estimator as defined in \cref{definition:mbcr-ht-estimator} and define the interval
    \begin{equation}
    \label{eq:hoeffding-mbcr-general-form}
        C_n^{\emph{\Hoeff}\mathrm{-}\emph{\MBCR}} \coloneqq \left [ \hate' \pm c_n^{\emph{\MBCR}}(\pi) \sqrt{\frac{2\log(2/\alpha)}{n}} \right ]
    \end{equation}
    where the constant $c_n^{\emph{\MBCR}}(\pi)$ takes the following form:
    \begin{equation}
    \label{eq:hoeffding-mbcr-constant}
        c_n^{\emph{\MBCR}}(\pi) \coloneqq \sqrt{\frac{T G^2 + \widebar G^2}{n}} \mper
    \end{equation}
    Then, under mini-batch complete randomization we have that $C_n^{\emph{\Hoeff}\mathrm{-}\emph{\MBCR}}$ is a nonasymptotic $(1-\alpha)$-confidence interval for the average treatment effect under the design-based and i.i.d. superpopulation settings. More formally, for all $n \in \N$ we have that
    \begin{equation}
        \PP_{\DB}\Paren{\sate \in C_n^{\emph{\Hoeff}\mathrm{-}\emph{\MBCR}}} \geq 1-\alpha
        \quad\mathrm{and}\quad
        \PP_{\iid}\Paren{\ateP \in C_n^{\emph{\Hoeff}\mathrm{-}\emph{\MBCR}}} \geq 1-\alpha .
    \end{equation}
\end{theorem}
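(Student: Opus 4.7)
The plan is to prove the concentration bound \emph{conditionally} on the outer permutation $\eta$, and then integrate out. The key structural observation is that, conditional on $\eta$, the treatment assignments within each of the $T + 1$ mini-batches are determined by \emph{independent} uniform random permutations of the within-group labels (this is precisely what the inner permutation $\beta$ of \cref{algorithm:mbcr} does). Consequently, the contributions of the different mini-batches to $\hate'$ are conditionally independent bounded random variables, to which Hoeffding's inequality applies cleanly.

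More concretely, for each $t \in [T]$, let
\[
X_t \;\coloneqq\; \tfrac{1}{n}\sum_{i \in g_t} Y_{\etainvi}\Paren{\frac{Z_{\betai}}{1/G} - \frac{1 - Z_{\betai}}{1 - 1/G}},
\]
and define $X_{\widebar g}$ analogously for the final group using $\widetilde G = \widebar G / \widebar n_1$, so that $\hate' = \sum_{t=1}^T X_t + X_{\widebar g}$. First I would show that, conditional on $\eta$, the within-group treatment indicators $(Z_{\beta(i)})_{i \in g_t}$ are independent across $t$, hence so are $X_1, \dots, X_T, X_{\widebar g}$. Next I would bound the conditional range of each $X_t$. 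For $t \in [T]$, exactly one unit in $g_t$ is treated, so $n X_t$ is of the form $G\, Y_{\etainvi}(1) - \frac{G}{G-1}\sum_{j \neq i} Y_{\eta^{-1}(j)}(0)$ for some $i \in g_t$; since outcomes lie in $[0,1]$, the maximum is at most $G$ and the minimum is at least $-\frac{G}{G-1}(G-1) = -G$, giving a range of at most $2G/n$. The same argument for the final group---where $\widebar n_1$ out of $\widebar G$ units are treated with scaled weights $\widebar G / \widebar n_1$ and $\widebar G / (\widebar G - \widebar n_1)$---yields a range of at most $2\widebar G / n$.

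Now I would invoke Hoeffding's inequality conditionally on $\eta$:
\[
\PP\Paren{\Abs{\hate' - \E[\hate' \svert \eta]} > t \svert \eta} \;\le\; 2\exp\Paren{-\frac{2t^2}{T \Paren{2G/n}^2 + \Paren{2\widebar G/n}^2}} \;=\; 2\exp\Paren{-\frac{n^2 t^2}{2\,(TG^2 + \widebar G^2)}}.
\]
Setting the right-hand side equal to $\alpha$ and solving for $t$ yields exactly $t = c_n^{\MBCR}(\pi)\sqrt{2\log(2/\alpha)/n}$, matching the claimed half-width. By \cref{proposition:unbiased-ht}, the conditional mean $\E[\hate' \svert \eta]$ equals $\sate$ in the design-based setting and $\ateP$ in the i.i.d.\ superpopulation setting, so $\hate' - \E[\hate' \svert \eta]$ may be replaced by $\hate' - \ate$ on the event of interest. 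Since the resulting conditional bound does not depend on $\eta$, I would take expectation over $\eta$ via the tower property to obtain the marginal statement.

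The main obstacle, and indeed the crux of why the proof achieves the $1/\sqrt{n\pi}$ rather than $1/\sqrt{n\pi^2}$ scaling, is the range calculation for each group's contribution. The naive approach---bounding each unit's individual contribution separately, of size $\asymp G \asymp 1/\pi$, and summing $n$ of them---would yield $\sum R_i^2 \asymp n G^2 \asymp n/\pi^2$ and hence the suboptimal $1/\sqrt{n\pi^2}$ rate. By contrast, grouping the units and exploiting the negative dependence induced by the ``exactly one treated per group'' structure, the group-level contribution has the \emph{same} $O(G)$ range as a single unit: this is the cancellation that kills the extra $1/\pi$ factor, since only $T \approx n\pi$ groups contribute, giving $\sum_t R_t^2 \asymp n\pi \cdot G^2 \asymp n/\pi$. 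Once the range bound $2G/n$ per group is established, the rest of the argument is essentially a bookkeeping exercise in applying Hoeffding's lemma and \cref{proposition:unbiased-ht}.
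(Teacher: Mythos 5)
Your proposal is correct and follows essentially the same route as the paper's proof: condition on the outer permutation $\eta$, use the between-group conditional independence of the within-group permutations, exploit the ``exactly one treated per group'' structure to bound each group's contribution in $[-G,G]$ (and $[-\widebar G,\widebar G]$ for the last group), apply a Hoeffding-type bound, identify the conditional mean with the estimand via \cref{proposition:unbiased-ht}, and integrate out $\eta$. The only cosmetic difference is that you invoke the packaged two-sided Hoeffding inequality, whereas the paper writes out the same Chernoff argument explicitly as an $e$-value bound with Markov's inequality, an optimized $\lambda$, and a union bound.
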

We make two remarks. First, notice that the confidence interval in \eqref{eq:hoeffding-mbcr-general-form} is centered around the slightly modified Horvitz-Thompson estimator from \eqref{eq:mbcr-horvitz-thompson-estimator}. When the propensity score takes the form $\pi = 1/K$ for some integer $K \geq 2$---which corresponds to the instance when $\widebar{n}_1 = 0$---the confidence interval from \eqref{eq:hoeffding-mbcr-general-form} is centered around the unmodified Horvitz-Thompson estimator from \eqref{eq:ht-estimator}. That is, when $\pi = 1/K$ for an integer $K \geq 2$, the confidence interval $C_n^{\Hoeff\mathrm{-}\MBCR}$ can be written without any reference to mini-batch complete randomization (and indeed without running the procedure at all).
Secondly, the constant in \eqref{eq:hoeffding-mbcr-constant} depends on the size and number of batches that result from the mini-batch complete randomization procedure. In the following corollary, we either write or upper bound the constant from \eqref{eq:hoeffding-mbcr-constant} purely in terms of the propensity score $\pi$.
\begin{corollary}
\label{corollary:scaling-hoeffding-mbcr}
The mini-batch complete randomization constant $c_n^{\emph{\MBCR}}(\pi)$ in \eqref{eq:hoeffding-mbcr-constant} can be given or upper bounded by
\begin{alignat}{3}
    &c_n^{\emph{\MBCR}}(\pi)&\  =\ & 1/\sqrt{\pi} \quad &\text{if $\widebar n_1 = 0$}& \label{eq:hoeffding-mbcr-scaling-case-0}\\
    &c_n^{\emph{\MBCR}}(\pi)&\ \leq\ & (1+\pi) / \sqrt{\pi} \quad &\text{if $\widebar n_1 = 1$}&\\
    &c_n^{\emph{\MBCR}}(\pi)&\ \leq\ & \sqrt{\flatfrac{\Paren{1 + \pi}^2}{\pi} + 2\flatfrac{\Paren{1/\pi + 1}^2}{n}} \quad &\text{if $\widebar n_1 = 2$}& \mper
\end{alignat}
\end{corollary}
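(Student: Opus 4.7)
The plan is to handle each of the three cases for $\widebar{n}_1$ separately, working throughout with the squared constant
\[
\bigl(c_n^{\MBCR}(\pi)\bigr)^2 \;=\; \frac{TG^2 + \widebar G^{\,2}}{n},
\]
and using only two elementary facts: first, $G = \lceil 1/\pi \rceil$ satisfies
$1/\pi \le G \le 1/\pi + 1 = (1+\pi)/\pi$; second, $n_1/n = \pi$.

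In the first case ($\widebar n_1 = 0$), by hypothesis $1/\pi$ is an integer, so $G = 1/\pi$, $T = n_1$, and $\widebar G = n - n_1 G = 0$. Direct substitution gives $(c_n^{\MBCR}(\pi))^2 = n_1 G^2/n = \pi G^2 = 1/\pi$, which is the claimed equality.

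In the second case ($\widebar n_1 = 1$), I have $T = n_1 - 1$ and $\widebar G = n - (n_1 - 1)G$. The key observation is that $\widebar G \le G$: since $G \ge 1/\pi = n/n_1$, we have $(n_1 - 1)G \ge n - G$, whence $\widebar G \le G$. Plugging this back in gives $(c_n^{\MBCR}(\pi))^2 \le ((n_1-1)G^2 + G^2)/n = n_1 G^2/n = \pi G^2$, and invoking $G \le (1+\pi)/\pi$ delivers the bound $(1+\pi)^2/\pi$.

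In the third case ($\widebar n_1 = 2$), the ``otherwise'' branch forces $n - (n_1-1)G = 1$ (with $1/\pi$ non-integer), so $T = n_1 - 2$ and $\widebar G = n - (n_1-2)G = G + 1$. Expanding,
\[
\bigl(c_n^{\MBCR}(\pi)\bigr)^2 \;=\; \frac{(n_1 - 2) G^2 + (G+1)^2}{n} \;=\; \frac{(n_1-1)G^2}{n} + \frac{2G + 1}{n} \;\le\; \pi G^2 + \frac{2G + 1}{n}.
\]
The first summand is bounded by $(1+\pi)^2/\pi$ exactly as in Case~2. For the second, since $\pi \le 1/2$ one has $1/\pi \ge 2$, and a short computation gives $2G + 1 \le 2(1/\pi + 1)^2$, completing the bound.

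None of the steps presents a serious obstacle: the work is careful bookkeeping of the mini-batch parameters under the case split. The only mildly non-routine observations are $\widebar G \le G$ in Case~2 and the identity $\widebar G = G + 1$ in Case~3, both of which fall out immediately from the definitions of $T$ and $\widebar G$ in \cref{section:mini-batch-complete-randomization}.
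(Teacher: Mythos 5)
Your Cases 1 and 2 are correct and essentially reproduce the paper's argument (the paper handles Cases 2 and 3 uniformly via the bound $\widebar G \le \widebar n_1 G$ together with $T = n_1 - \widebar n_1$ and $G \le 1/\pi + 1$; your Case 2 observation $\widebar G \le G$ is the same bound specialized to $\widebar n_1 = 1$).

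Case 3, however, rests on a false identity. The ``otherwise'' branch in the definition of $T$ only asserts $n - (n_1-1)G < 2$, i.e.\ $n - (n_1-1)G \le 1$; it does not force $n - (n_1-1)G = 1$, so $\widebar G = n - (n_1-2)G$ need not equal $G+1$. For instance, with $n = 9$ and $n_1 = 4$ (the paper's own illustrative example, $a = (1,0,0,1,0,0,1,1,0)$) one has $G = 3$, $n - (n_1-1)G = 0$, and $\widebar G = 3 = G$; with $n = 14$ and $n_1 = 6$ one has $G = 3$, $n - (n_1-1)G = -1$, and $\widebar G = 2 = G - 1$. Fortunately your argument only uses the consequence $\widebar G^2 \le (G+1)^2$, and the correct inequality $\widebar G = G + \bigl[n - (n_1-1)G\bigr] \le G + 1$ does follow from the definition of the branch, so the chain
\begin{equation}
  \bigl(c_n^{\MBCR}(\pi)\bigr)^2 \;\le\; \frac{(n_1-1)G^2 + 2G + 1}{n} \;\le\; \pi G^2 + \frac{2G+1}{n} \;\le\; \frac{(1+\pi)^2}{\pi} + \frac{2(1/\pi+1)^2}{n}
\end{equation}
survives once you replace the claimed equality by this inequality (your final estimate $2G+1 \le 2(1/\pi+1)^2$ for $\pi \le 1/2$ checks out). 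So the statement is proved after this one-line repair; as written, though, the asserted identity $\widebar G = G+1$ is incorrect, and your bound is in fact slightly tighter at the intermediate step than the paper's, which uses the cruder $\widebar G \le 2G$.
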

\noindent
In all three cases, $\widebar n_1 \in \{0, 1,2\}$, the width of the confidence interval  scales as
\begin{equation}\label{eq:hoeffding-mbcr-scaling}
\Abs{C_n^{\Hoeff\mathrm{-}\MBCR}} \asymp \sqrt{\frac{2 \log (2/\alpha)}{n\pi}},
\end{equation}
in the large-$n$, small-$\pi$ regime, or more formally,
\begin{equation}
 \lim_{\pi \to 0} \lim_{n \to \infty}\sqrt{n \pi} \Abs{C_n^{\Hoeff\mathrm{-}\MBCR}} = \sqrt{2 \log (2/\alpha)}.
\end{equation}

The proofs of \cref{thm:hoeffding-mbcr,corollary:scaling-hoeffding-mbcr} can be found in \cref{section:proof-hoeffding-mbcr}. The derivation of \cref{thm:hoeffding-mbcr} showcases a new proof technique enabled by the construction of mini-batch complete randomization. Given its simplicity as well as its centrality to the results of the present section, we give a brief proof sketch here.

\begin{proof}[Proof sketch for \cref{thm:hoeffding-mbcr} and \cref{corollary:scaling-hoeffding-mbcr}]
  For the sake of simplicity, we consider the case when $\pi = 1/K$ for an integer $K \geq 2$ (equivalently, when $\widebar n_1 = 0$), but the full proof in \cref{section:proof-hoeffding-mbcr} contains details for the more general case. Consider the exponential random variable
  \begin{equation}\label{eq:proof sketch M_n}
    M_n = \exp \left \{ \lambda \sum_{t=1}^T \sum_{i \in g_t} \left [ Y_{\eta^{-1}(i)} \left ( \frac{Z_{\beta(i)}}{1/G} - \frac{1-Z_{\beta(i)}}{1-1/G} \right ) - \ate \right ] - T \frac{\lambda^2 \sigma^2}{2} \right \};\quad \lambda \in \RR,
  \end{equation}
  where $\sigma^2 \coloneqq G^2$ and we recall that $T = n\pi$ in the current setting. Notice that by construction of the sub-group treatment assignment $\beta$, for each $t \in [T]$ it holds that $Z_{\beta(i)} = 1$ for exactly one $i \in g_t$. As such, we have that for each $t \in [T]$,
  \begin{equation}
    -G \leq \sum_{i\in g_t} Y_{\eta^{-1}(i)} \left ( \frac{Z_{\beta(i)}}{1/G} - \frac{1-Z_{\beta(i)}}{1-1/G} \right ) \leq G\quad \text{almost surely.}
  \end{equation}
  By Hoeffding's lemma \citep{hoeffding_probability_1963}, we have that conditionally on the unit-wide permutation $\eta$, $M_n$ is nonnegative with mean at most 1 for any $\lambda \in \RR$ when $\sigma^2$ is taken to be $(2G)^2/4 = G^2$. Applying Markov's inequality to obtain $\PP (M_n \geq 1/\alpha) \leq \alpha$ for any $\alpha \in (0, 1)$ and setting $\lambda= \sqrt{2\log(1/\alpha) / (TG^2)}$, it holds that for any $\alpha \in (0, 1)$,
  \begin{equation}
    \PP \left ( \frac{1}{n}\sum_{t=1}^T \sum_{i \in g_t} Y_{\eta^{-1}(i)} \left ( \frac{Z_{\beta(i)}}{1/G} - \frac{1-Z_{\beta(i)}}{1-1/G} \right ) - \ate \geq G \sqrt{\frac{2 \log(1/\alpha) T}{n^2}} \right )  \leq \alpha.
  \end{equation}
  Identifying $G$ with $1/\pi$ and $T$ with $n\pi$, we see that the right-hand side of the inequality inside the above probability becomes
  \begin{equation}\label{eq:mbcr-magic}
    \sqrt{\frac{2 \log (1/\alpha) n \pi}{n^2 \pi^2}} = \sqrt{\frac{2 \log (1/\alpha)}{n \pi}}.
  \end{equation}
  Repeating the above derivation with negatives of the inverse-probability-weighted summands and $-\psi$ combined with a union bound completes the sketch of the proof.
\end{proof}
Let us now contrast the above proof technique with what has previously appeared in the related literature. A typical Hoeffding-style proof (including those found in \citep{robins2012robins,tchetgen2012causal,aronow2021nonparametric,ding2025randomization}) would have proceeded by analyzing the exponential random variable
\begin{equation}\label{eq:naive exponential concentration e-value}
  M_n' := \exp \left \{ \lambda \sum_{i=1}^n \left [Y_i \left ( \frac{Z_i}{\pi} - \frac{1-Z_i}{1-\pi} \right ) - \psi \right ] - n \frac{\sigma^2 \lambda^2}{2} \right \};\quad \lambda \in \RR
\end{equation}
for $\sigma^2 > 0$ to be chosen shortly,
and exploiting the fact that the summands $Y_i (Z_i / \pi - (1-Z_i) / (1-\pi))$ for all $i\in[n]$ lie in the interval $[-1/(1-\pi), 1/\pi]$ with probability one. By Hoeffding's lemma \citep{hoeffding_probability_1963}, $M_n'$ is also nonnegative with mean at most 1 for any $\lambda \in \RR$ when $\sigma^2$ is taken to be $[1/(1-\pi) + 1/\pi]^2 / 4$. Using similar ideas to those found in the proof sketch above but with $\lambda$ set to $\sqrt{2 \log(1/\alpha) / (n \sigma^2)}$, one would then arrive at the concentration inequality
\begin{equation}
  \PP \left ( \frac{1}{n} \sum_{i=1}^n Y_i \left ( \frac{Z_i}{\pi} - \frac{1-Z_i}{1-\pi} \right ) - \ate \geq \left ( \frac{1}{1-\pi} + \frac{1}{\pi} \right )\sqrt{\frac{\log(1/\alpha)}{2n}} \right ) \leq \alpha.
\end{equation}
A confidence interval derived from the above would clearly have an effective sample size of $n\pi^2$ instead of $n\pi$.
\begin{remark}\label{remark:on complete randomization}
  The works of \citet{robins2012robins,aronow2021nonparametric,ding2025randomization} only consider Bernoulli randomization, but their proofs are immediately amenable to complete randomization without any change to the confidence interval using the ideas found in \citet[Theorem 15]{chi2022multiple} or \citet{joag1983negative}. Such an amendment would not improve the effective sample size.
\end{remark}
It is not immediately obvious that the construction of mini-batch complete randomization and the exponential random variable in \eqref{eq:proof sketch M_n} would lead to sharper concentration than \eqref{eq:naive exponential concentration e-value}, since the sample size (insofar as concentration of measure is concerned) in \eqref{eq:naive exponential concentration e-value} is larger by a factor of $1/\pi$. Nevertheless, the almost sure bounds of $[-1/(1-\pi), 1/\pi]$ of the summands in \eqref{eq:naive exponential concentration e-value} are sufficiently conservative that a smaller concentration sample size of $n\pi$ with the combined almost sure bounds of $G = 1/\pi$ of the random variables in \eqref{eq:proof sketch M_n} leads to inequalities that are sharper by a factor of $\sqrt{2\pi}$.

Under Bernoulli randomization we are no longer able to exploit the dependence between the assignments induced by (mini-batch) complete randomization. Nevertheless, as we show in \cref{thm:sub-bernoulli-cis}, we can still construct confidence intervals under Bernoulli randomization that enjoy the optimal scaling. These confidence intervals are based on a sub-Bernoulli concentration inequality and depend on the following cumulant generating function (see \citet[Lemma 1]{hoeffding_probability_1963} and \citet{howard2021time}):
\begin{equation}
    \gamma_B(\lambda) \coloneqq \log \left ( \frac{b}{b-a} e^{\lambda a} - \frac{a}{b-a} e^{\lambda b} \right ) \mcom
    \label{eq:sub-bernoulli-cgf}
\end{equation}
for scalars $a < b$.
\begin{remark}\label{remark:bernoulli}
  The term ``Bernoulli'' in ``sub-Bernoulli'' has nothing to do with Bernoulli randomization. Instead, sub-Bernoulli random variables are those whose cumulant generating functions are smaller than those of a Bernoulli random variable. To avoid confusion going forward, the term ``Bernoulli'' will always be preceded by the modifier ``sub-'' when referring to properties of a random variable's cumulant generating function.
\end{remark}
The sub-Bernoulli confidence intervals under both Bernoulli and mini-batch complete randomization explicitly depend on \eqref{eq:sub-bernoulli-cgf}. Under Bernoulli randomization we set $a_{\tBern} \coloneqq -1/(1-\pi) - 1$ and $b_{\tBern} \coloneqq 1/\pi + 1$, while under mini-batch complete randomization we set $a_{\MBCR} \coloneqq -2G$ and $b_{\MBCR} \coloneqq 2G$.
Furthermore, we define the parameters
\begin{equation}
    \lambda_{\MBCR} \coloneqq \sqrt{\frac{2\logp{2/\alpha}}{T G^2}} \quad\mathrm{and}\quad \lambda_{\tBern} \coloneqq \sqrt{\frac{2\logp{2/\alpha}}{n \Paren{\frac{1}{1-\pi} + 1} \Paren{\frac{1}{\pi} + 1}}} \mcom
\end{equation}
under mini-batch complete randomization and Bernoulli randomization, respectively. For expositional simplicity, we present our sub-Bernoulli confidence interval under mini-batch complete randomization only for the case in which $\pi = 1/K$ for some integer $K \geq 2$.

\begin{theorem}[Sub-Bernoulli confidence intervals]
\label{thm:sub-bernoulli-cis}
    Under Bernoulli and mini-batch complete randomization
    we have that
    \begin{equation}
        C_n^{\mathrm{SB}-*} \coloneqq \Brac{\estimator \pm \frac{\logp{2/\alpha} + \kappa_n^*(\pi)}{n\lambda_*}}
    \end{equation} 
    forms a $(1-\alpha)$-confidence interval for the average treatment effect, where $* \in \Set{\emph{\texttt{Bern}}, \emph{\MBCR}}$ and
    \begin{equation}
        \kappa_n^{\emph{\tBern}} (\pi) \coloneqq n \gamma_{B}(\lambda_{\emph{\tBern}})
         \mcom \quad\text{and}\quad  \kappa_n^{\emph{\MBCR}}(\pi) \coloneqq T \logs{\frac{1}{2} e^{-2G \lambda_{\emph{\MBCR}}} + \frac{1}{2} e^{2G \lambda_{\emph{\MBCR}}}} \mper
    \end{equation}
    Furthermore, in the large-$n$, small-$\pi$ regime the confidence intervals scale as
    \begin{equation}
        \Abs{C_n^{\mathrm{SB}-\emph{\tBern}}} \asymp \sqrt{\frac{4\logp{2/\alpha}}{n\pi}}
         \quad\mathrm{and}\quad 
         \Abs{C_n^{\mathrm{SB}-\emph{\MBCR}}} \asymp \sqrt{\frac{8\logp{2/\alpha}}{n\pi}}
    \end{equation}
    under Bernoulli randomization and mini-batch complete randomization, respectively.
\end{theorem}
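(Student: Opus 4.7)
The plan is to instantiate the same exponential-concentration template used in the proof of \cref{thm:hoeffding-mbcr}, but replacing the Hoeffding sub-Gaussian MGF bound with the sharper sub-Bernoulli bound from Hoeffding's Lemma 1~\citep{hoeffding_probability_1963}: for any zero-mean random variable $X$ supported on $[a, b]$ almost surely, $\E[e^{\lambda X}] \le \exp\{\gamma_B(\lambda; a, b)\}$, with $\gamma_B$ as in \eqref{eq:sub-bernoulli-cgf}. Expanding at the origin gives $\gamma_B(\lambda; a, b) = (-ab/2)\lambda^2 + O(\lambda^3)$, which is strictly tighter than the Hoeffding sub-Gaussian proxy $(b-a)^2 \lambda^2 / 8$ whenever $[a, b]$ is asymmetric about $0$ (by AM--GM, $-ab \le (b-a)^2/4$, with equality iff $a = -b$). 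Under Bernoulli randomization the centered summands $X_i \coloneqq Y_i(Z_i/\pi - (1-Z_i)/(1-\pi)) - \psi$ (or $-\psi_i$ in the design-based case) live in the very asymmetric interval $[a_{\tBern}, b_{\tBern}] = [-1/(1-\pi) - 1,\, 1/\pi + 1]$, and it is precisely this asymmetry that the sub-Bernoulli bound exploits to close the $\sqrt{\pi}$ gap relative to Hoeffding.

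\paragraph{Bernoulli case.} I would first verify $\E X_i = 0$ (by unbiasedness of the Horvitz--Thompson summand) and $X_i \in [a_{\tBern}, b_{\tBern}]$ almost surely (from $Y_i \in [0,1]$ and $|\psi|, |\psi_i| \le 1$). Bernoulli randomization makes $(Z_i)_{i=1}^n$ independent, hence so are the $X_i$, so multiplying the per-summand sub-Bernoulli bounds gives
\begin{equation*}
\E\Brac{\exp\Set{\lambda \textstyle\sum_{i=1}^n X_i}} \le \exp\Set{n \gamma_B(\lambda; a_{\tBern}, b_{\tBern})}.
\end{equation*}
A two-sided Chernoff--Markov argument with a union bound over the two tails at $\lambda = \lambda_{\tBern}$ then gives $\Pr(|\hate - \psi| > t) \le \alpha$ for $t = (\log(2/\alpha) + n\gamma_B(\lambda_{\tBern}))/(n \lambda_{\tBern})$, which is the claimed interval after identifying $\kappa_n^{\tBern}(\pi) = n\gamma_B(\lambda_{\tBern})$.

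\paragraph{MBCR case.} Mirroring the proof sketch of \cref{thm:hoeffding-mbcr}, I would condition on the unit-wide permutation $\eta$ and group the summands into the $T$ blocks $g_1, \ldots, g_T$ (there is no terminal block since $\pi = 1/K$). Writing $S_t \coloneqq \sum_{i \in g_t} Y_{\eta^{-1}(i)}(Z_{\beta(i)} / (1/G) - (1 - Z_{\beta(i)})/(1 - 1/G))$ and $\mu_t \coloneqq \sum_{i \in g_t} \psi_{\eta^{-1}(i)}$ (respectively $G\psi$ in the i.i.d.\ case), the block residual $S_t - \mu_t$ has conditional mean zero given $\eta$ and lies in $[-2G, 2G] = [a_{\MBCR}, b_{\MBCR}]$ because $S_t \in [-G, G]$ and $|\mu_t| \le G$. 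Crucially, given $\eta$ the within-group permutations $\beta$ restricted to different groups are mutually independent, so $\{S_t - \mu_t\}_{t=1}^T$ are conditionally independent. Applying Hoeffding's Lemma 1 per block and multiplying yields $\E[\exp\{\lambda \sum_{t=1}^T (S_t - \mu_t)\} \mid \eta] \le \exp\{T \log\cosh(2G\lambda)\}$, a bound free of $\eta$; an outer expectation followed by a Chernoff/union-bound argument at $\lambda_{\MBCR}$ completes the derivation with $\kappa_n^{\MBCR}(\pi) = T \log\cosh(2G\lambda_{\MBCR})$.

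\paragraph{Scaling and main obstacle.} For the asymptotic scaling I would Taylor-expand $\gamma_B(\lambda; a, b) = (-ab/2)\lambda^2 + O(\lambda^3)$. Under Bernoulli randomization, $-a_{\tBern} b_{\tBern} = (1/(1-\pi) + 1)(1/\pi + 1) \sim 2/\pi$ as $\pi \to 0$, and $\lambda_{\tBern} \sim \sqrt{\pi \log(2/\alpha)/n} \to 0$ as $n \to \infty$, so $\kappa_n^{\tBern}(\pi) \to \log(2/\alpha)$ and the half-width $(\log(2/\alpha) + \kappa_n^{\tBern})/(n \lambda_{\tBern})$ tends to $2\sqrt{\log(2/\alpha)/(n\pi)} = \sqrt{4\log(2/\alpha)/(n\pi)}$. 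The MBCR computation is analogous, using $TG^2 = n/\pi$ (available because $\pi = 1/K$ forces $T = n\pi$ and $G = 1/\pi$) together with $-a_{\MBCR} b_{\MBCR} = 4G^2$, yielding the stated $1/\sqrt{n\pi}$ scaling. The main technical subtlety is the MBCR step: one must center each block by its conditional-on-$\eta$ mean $\mu_t$, which must depend on $\eta$ alone and not on $\beta$, so that the conditional-independence-given-$\eta$ argument is actually usable; once the centering is correct, the rest reduces to multiplying Hoeffding-Lemma-1 factors and a short Taylor expansion.
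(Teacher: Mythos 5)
Your proposal follows essentially the same route as the paper's proof: apply Hoeffding's Lemma~1 to the centered bounded summands (per unit under Bernoulli randomization, per block conditionally on $\eta$ under mini-batch complete randomization, exploiting independence/conditional independence), take a Chernoff--Markov step with a two-sided union bound at the prescribed $\lambda$, and expand $\gamma_B$ near zero (the paper via L'H\^opital, you via Taylor) to obtain the $1/\sqrt{n\pi}$ scaling, with your $T\log\cosh(2G\lambda_{\MBCR})$ being exactly the paper's $\kappa_n^{\MBCR}$ and your centering-by-conditional-mean remark matching how the paper handles the design-based case. The only caveat is minor: recovering the stated constant $8$ in the MBCR scaling requires tuning $\lambda_{\MBCR}$ to the sub-Bernoulli proxy $4TG^2$ (as the paper's appendix does, differing from the main-text definition of $\lambda_{\MBCR}$), which your ``analogous'' cancellation implicitly assumes.
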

The scaling that we have obtained involves an effective sample size of $n \pi$ and improves upon existing nonasymptotic confidence intervals for the Bernoulli randomization setting~\citep{robins2012robins,aronow2021nonparametric,ding2025randomization}. The improvement in this case comes from using a better-suited concentration inequality for sub-Bernoulli random variables. 

Notice that the scaling of the sub-Bernoulli confidence interval under Bernoulli randomization enjoys a better constant than its counterpart under mini-batch complete randomization. Nonetheless, the scalings of these confidence intervals are larger than that of our Hoeffding-style confidence interval under mini-batch complete randomization \eqref{eq:hoeffding-mbcr-scaling} by multiplicative constants. 
Our recommendation is thus to use the sub-Bernoulli confidence interval from \cref{thm:sub-bernoulli-cis} when one is working under Bernoulli randomization, and to use the Hoeffding-style confidence interval when one is working with mini-batch complete randomization. When one is working under complete randomization, however, our recommendation is more nuanced. In settings in which the propensity score is $\pi = 1/K$ for some integer $K \geq 2$, one should use the Hoeffding-style confidence interval from \cref{thm:hoeffding-mbcr}---as in this case mini-batch complete randomization and complete randomization yield identical Horvitz-Thompson estimators, allowing one to directly use the results under mini-batch complete randomization. Meanwhile, when the propensity score is $\pi \neq 1/K$ for an integer $K \geq 2$, we recommend that one uses the sub-Bernoulli confidence interval---as bounds derived under Bernoulli randomization are valid bounds under complete randomization (e.g., see \citep[Section 6]{hoeffding_probability_1963}).

Together, \cref{thm:hoeffding-mbcr,thm:sub-bernoulli-cis} show that the $\flatfrac{1}{\sqrt{n \pi^2}}$ scaling that has appeared in the literature thus far is an artifact of the analysis used to derive the confidence intervals, and not a limitation of the nonasymptotic regime \emph{per se}. In what follows, we show that we can also obtain variance-adaptive confidence intervals for the average treatment effect under both the design-based finite population and superpopulation settings.

\subsection{Studentized variance-adaptive confidence intervals}\label{section:empirical-bernstein-ci}
In the previous sections we showed that the width of nonasymptotic confidence 
intervals for the average treatment effect can scale at a parametric 
rate with respect to the effective sample size. A natural upgrade to the 
previous confidence intervals is one whose width is \emph{variance-adaptive}, 
in the sense that the intervals may scale not only with the effective sample size 
but also with the empirical variance induced by the potential outcomes. 
Variance adaptivity is a property common in ``Bernstein'' or
``empirical-Bernstein''-type confidence intervals, but such adaptivity has not informed the design of confidence intervals for the design-based setting or under complete randomization.
In this subsection we fill this gap by presenting ``Studentized'' 
variance-adaptive confidence intervals for both  Bernoulli and mini-batch complete randomization. Importantly, the 
Studentized intervals we introduce are applicable even when the potential outcomes are non-$\iid$, and their width scales as a function of the 
empirical variance of the random variables.

Variance-adaptive confidence intervals are common in the statistics and machine learning literatures~\citep{audibert2007tuning,maurer2009empirical,bardenet2015concentration,howard2021time,waudby2024estimating,waudby2024anytime}. 
However, these intervals are only be applicable to superpopulation settings and/or Bernoulli designs (with the exception of \citet{bardenet2015concentration} which considers Bernstein bounds under without-replacement sampling). \citet{howard2021time} and \citet{waudby2024anytime} provide confidence sequences for the 
``running'' average treatment effect which can be instantiated for the sample average treatment effect $\sate$, but those bounds scale with the sample size no faster than $\sqrt{\log \log (n) / n}$. 
\citet{waudby2024estimating} derive another empirical-Bernstein confidence sequence, 
but a naive application of their methods to our setting would require identical 
individual treatment effects across all units, and would thus not yield valid 
confidence intervals for $\sate$.

In order to construct our Studentized confidence interval we adapt the so-called 
``mirroring trick''~\citep{thomas2015high,waudby2024anytime}, to the case
of an estimator containing differences of inverse-probability-weighted observations. The essential idea behind this trick is to exploit the structure of importance weights---in this case, $Z_i / \pi$ and $(1-Z_i) / (1-\pi)$ for $i \in [n]$---to derive bounds that will scale with shrinking propensity scores only if the empirical variances of inverse probability weighted estimators do as well. For example, consider the bound derived in \cref{corollary:scaling-hoeffding-mbcr}, noting that the width of this confidence interval scales inversely with $\sqrt{\pi}$ regardless of what values the potential outcomes take. On the other hand, the Horvitz-Thompson pseudo-outcome for unit $i\in[n]$,
\begin{equation}
    \hite_i = Y_i \left ( \frac{Z_i}{\pi} - \frac{1-Z_i}{1-\pi} \right ),
\end{equation}
could have a small or even near-zero variance if $Y_i(1) \approx Y_i(0) \approx 0$. The mirroring trick (to be described in more detail shortly) is employed so that confidence intervals whose empirical variance is very small need not grow with $1/\sqrt{\pi}$.

In what follows, we will make use of the following cumulant-generating-like function (in the sense 
of~\citep{howard2021time}) for random variables that have one-sided sub-exponential tails:
\begin{equation}
\label{eq:sub-exponential-cgf}
    \gamma_{E,c}(\lambda) \coloneqq c^{-2}\Paren{-\logs{1-c\lambda} - c\lambda}
    \mcom
\end{equation}
for values of $\lambda \in \R_{\geq 0}$ and $c \in \R_{>0}$ which we instantiate shortly.
We will construct lower and upper confidence intervals that are \emph{a priori} anchored at different estimators (though as we discuss in \cref{remark:the form of mirrored estimators}, they coincide under mini-batch complete randomization).
Let us first recall the ``standard'' Horvitz-Thompson estimator: 
\begin{equation}
\label{eq:lower-estimator}
    \estimator_{\etainvi}^{\Mirrl} \equiv \estimator_{\etainvi} = Y_{\etainvi}\Paren{\frac{Z_{\beta(i)}}{1/G} - \frac{1 - Z_{\beta(i)}}{1-1/G}} \mcom
\end{equation}
where both $\eta^{-1}(\cdot)$ and $\beta(\cdot)$ are identity maps under 
Bernoulli randomization, and they are the unit-wide and within-sub-group 
random permutations under mini-batch complete randomization. Under 
Bernoulli randomization we let $1/G = \pi$. We will use the estimator from \eqref{eq:lower-estimator} to construct the lower confidence set. To construct the upper confidence set we consider the following ``mirrored'' estimator \citep{thomas2015high,waudby2024anytime}
\begin{equation}
\label{eq:upper-mirrored-estimator}
    \estimator_{\etainvi}^{\Mirru} \coloneqq \Paren{Y_{\etainvi} - 1}\Paren{\frac{Z_{\beta(i)}}{1/G} - \frac{1 - Z_{\beta(i)}}{1-1/G}} \mcom
\end{equation}
which is an unbiased estimator for $\ite_i = \E_{P_i}\Brac{Y_i(1) - Y_i(0)}$. To see where the adjective ``mirrored'' comes from, notice how the ``standard'' Horvitz-Thompson estimator from \eqref{eq:lower-estimator} takes it values from the interval $[-1/(1-1/G), 1/G]$ while our ``mirrored'' estimator from \eqref{eq:upper-mirrored-estimator} takes it values from the interval $[-1/G, 1/(1-1/G)]$. That is, the interval from which \eqref{eq:upper-mirrored-estimator} takes its values is the reflection from our original estimator---allowing our intervals to have an even better dependence on $\pi$ when making use of the cumulant generating function from \eqref{eq:sub-exponential-cgf}.
We have not explicitly written the case of mini-batch complete randomization 
for which units belong to group $\widebar g$ and whose propensity score is 
$\flatfrac{1}{\widetilde G}$, but analogous ``mirrored'' estimators are used for 
them (we point the reader to \cref{section:proof-empirical-bernstein} for their definition).

We refer to the confidence interval from \cref{theorem:cross-fit-empirical-bernstein}
as a \emph{Studentized} confidence interval because it depends on estimates of
the variance of the random variables. To construct such a confidence interval
we use a cross-fitting method that provides the necessary estimates.
The cross-fitting method we devise is different from that typically used in doubly 
robust causal inference 
\citep{robins2008higher,chernozhukov2018double,kennedy2024semiparametric}, 
and it proceeds as follows.
We split the $n$ data points into two splits of size 
$m_1 \coloneqq \lfloor \flatfrac{\widebar T}{2} \rfloor$ and 
$m_2 \coloneqq \widebar T - m_1$, where $\widebar T \in \N$ represents the total 
number of ``groups'' which we denote as $g_t$ for $t \in [\widebar T] \coloneqq \Set{1, \dots, \widebar T}$. 
Under Bernoulli randomization $g_t = \Set{t}$ and $\widebar T = n$ while under 
mini-batch complete randomization $g_t$ is the set of units who belong to batch 
$t$ and we let $\widebar T = T + \Ind{\widebar G > 0}$. To simplify the 
exposition of our results, we will use the convention that $g_{\widebar T}$ 
will denote the group $\widebar g$ whenever $\Ind{\widebar G > 0} = 1$. 
We let $\ghate_t^* \coloneqq \sum_{i \in g_t} \hite_\etainvi^*$ for $* \in \Set{\Mirrl, \Mirru}$ 
be the sum of the individual treatment effect estimators for those units in $g_t$. 

We use the different splits to construct estimates of the average and variance of
the ``group-wise'' treatment effect estimators $\ghate_t$, and we compute these estimates as
follows:
\begin{align}
    \hmu_{t-1}^{*}\brackone \coloneqq \frac{\sum_{j=m_1+1}^{\widebar T}  \ghate_j^* + \sum_{j=1}^{t-1} \ghate_j^*}{m_2 + t-1}\mcom \quad V_{m_1}^*\brackone \coloneqq \sum_{t=1}^{m_1}\Paren{\ghate_t^* - \hmu_{t-1}^*\brackone}^2 \mcom \quad\text{and}\quad \hsigma_1^{* 2} \coloneqq \frac{V_{m_1}^*\brackone}{m_1} \mcom
\label{eq:main-paper-emp-Bernstein-avg-variance-definition}
\end{align}
for $* \in \Set{\Mirrl, \Mirru}$ and the same quantities are analogously defined 
for the second split. Lastly, we define the parameter $\lambda \in \R_{\geq 0}$ 
under each split $s \in \Set{1,2}$ to be a function that depends on the 
estimated variance of that split,
\begin{equation}
    \lambda^*_s \coloneqq \sqrt{\frac{2\logs{2/\alpha}}{m_s\hsigma_s^{*2}}} 
    \wedge \frac{1}{2c},
\label{eq:main-paper-emp-Berns-gamma-lambda-definition}
\end{equation}
for $* \in \Set{\Mirrl, \Mirru}$, and we set $c \coloneqq 1/(1-G) + 1$ which
allows both $\lambda^*_s$ and $\gamma_{E,c}(\lambda)$ to scale with respect
to the propensity score. Now that we have introduced the necessary notation, 
we proceed to present our cross-fit Studentized confidence interval.

\begin{theorem}[Cross-fit Studentized variance-adaptive confidence interval]\label{theorem:cross-fit-empirical-bernstein}
    Under either complete or Bernoulli randomization we have that
    \begin{equation}\label{eq:emp-berns-ci-l}
        L_n^{\emph{\EB}} \coloneqq \frac{1}{n} \sum_{t=1}^{\widebar T} \sum_{i \in g_t} \hite^\Mirrl_\etainvi - \Paren{\frac{\gamma_{E,c}(\lambda_2^{\Mirrl}) V_{m_1}^{\Mirrl}\brackone + \logs{2/\alpha}}{n \lambda_2^\Mirrl} + \frac{\gamma_{E,c}(\lambda_1^\Mirrl) V_{m_2}^{\Mirrl}\bracktwo + \logs{2/\alpha}}{n \lambda^{\Mirrl}_1}}
    \end{equation}
    forms a lower $(1-\alpha)$-confidence set for the average treatment effect. Under the same conditions, we let
    \begin{equation}\label{eq:emp-berns-ci-u}
        U_n^{\emph{\EB}} \coloneqq  \frac{1}{n}\sum_{t=1}^{\widebar T} \sum_{i \in g_t} \hite_\etainvi^\Mirru + 
        \frac{\gamma_{E,c}(\lambda^\Mirru_2) V_{m_1}^\Mirru\brackone + \logs{2/\alpha}}{n \lambda^\Mirru_2} + \frac{\gamma_{E,c}(\lambda^\Mirru_1) V_{m_2}^\Mirru\bracktwo + \logs{2/\alpha}}{n \lambda^\Mirru_1}
    \end{equation}
    and we have that $U_n^{\emph{\EB}}$ forms an upper $(1-\alpha)$-confidence set for the average treatment effect. Union bounding the upper and lower confidence sets we have that 
    \begin{equation}
        C_n^{\emph{\EB}} \coloneqq \Brac{L_n^{\emph{\EB}}, U_n^{\emph{\EB}}}
    \label{eq:emp-berns-ci-cross-fit}
    \end{equation}
    is a $(1-2\alpha)$-confidence interval for the average treatment effect.
\end{theorem}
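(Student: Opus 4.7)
The plan is to establish the lower and upper confidence-set guarantees separately via cross-fit one-sided e-values, and then union bound them to obtain the two-sided $(1-2\alpha)$-CI. Throughout, I would condition on the unit-wide permutation $\eta$: under mini-batch complete randomization the within-group assignments are conditionally independent across groups given $\eta$ (as exploited in \cref{proposition:mbcr-cr-equivalence} and the proof of \cref{thm:hoeffding-mbcr}), and under Bernoulli randomization $\eta$ is trivial and units are outright independent. This conditional independence is what makes the two splits conditionally independent of one another, which in turn lets me treat each $\lambda_s^*$ as a deterministic constant when working on the opposite split.

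First, I would reduce the event $\{\psi < L_n^{\EB}\}$ to a one-sided concentration statement for the centered sum $\sum_{t=1}^{\widebar T} (\ghate_t^{\Mirrl} - \E[\ghate_t^{\Mirrl} \svert \eta])$, split into its split-1 contribution ($t \leq m_1$) and its split-2 contribution ($t > m_1$). For split 1, I fix $\lambda := \lambda_2^{\Mirrl}$, which is measurable with respect to the split-2 data and hence $\sigma(\eta, \text{split 2})$-measurable. Using the one-sided sub-exponential CGF inequality $\log \E[e^{\lambda X}] \leq \gamma_{E,c}(\lambda)\,\E[X^2]$ valid for $X \leq c$ (the sub-exponential analog of Hoeffding's lemma, as in \citet{howard2021time}), together with a predictable plug-in estimate $\hmu_{t-1}^{\Mirrl}\brackone$ of the mean in the empirical-Bernstein style of \citet{waudby2024estimating}, I would argue that
\begin{equation}
M_k \coloneqq \exp\Set{\lambda_2^{\Mirrl} \sum_{t=1}^{k}\Paren{\ghate_t^{\Mirrl} - \E[\ghate_t^{\Mirrl} \svert \eta]} - \gamma_{E,c}(\lambda_2^{\Mirrl}) \sum_{t=1}^{k} \Paren{\ghate_t^{\Mirrl} - \hmu_{t-1}^{\Mirrl}\brackone}^2},\quad k \leq m_1,
\end{equation}
is a nonnegative supermartingale with $\E[M_{m_1}] \leq 1$ in the filtration generated by the split-1 groups (conditional on $\eta$ and on split 2). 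The same construction with $\lambda_1^{\Mirrl}$ and $V_{m_2}^{\Mirrl}\bracktwo$ handles split 2. The mirroring choice $\hite^{\Mirru}$ for the upper CI symmetrically gives a supermartingale in the opposite tail direction, relying on the fact that while $\hite^{\Mirrl}_i$ and $\hite^{\Mirru}_i$ are both unbiased for the treatment effect, their one-sided bounds are flipped so that the relevant tail for each can be controlled with constant $c = 1/(1-1/G) + 1 = O(1)$ rather than $O(1/\pi)$.

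Applying Markov's inequality with threshold $2/\alpha$ to each of the two per-split supermartingales yields, each with probability at least $1 - \alpha/2$,
\begin{equation}
\sum_{t=1}^{m_1}\Paren{\ghate_t^{\Mirrl} - \E[\ghate_t^{\Mirrl} \svert \eta]} \leq \frac{\gamma_{E,c}(\lambda_2^{\Mirrl})\, V_{m_1}^{\Mirrl}\brackone + \logs{2/\alpha}}{\lambda_2^{\Mirrl}},
\end{equation}
and analogously for the split-2 sum with $\lambda_1^{\Mirrl}$ and $V_{m_2}^{\Mirrl}\bracktwo$. A union bound over the two splits, division by $n$, and rearrangement reproduces exactly the width in \eqref{eq:emp-berns-ci-l}, yielding $\PP(\psi < L_n^{\EB}) \leq \alpha$. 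Repeating the argument with $\hite^{\Mirru}$ and the $\Mirru$-indexed quantities gives $\PP(\psi > U_n^{\EB}) \leq \alpha$, and a final union bound over the two one-sided events delivers the two-sided $(1-2\alpha)$-CI guarantee. Unconditioning on $\eta$ preserves everything since each inequality holds for every realization.

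The main obstacle is the verification that the cross-fit process $M_k$ is actually a supermartingale. Two points need care: (i) $\lambda_2^{\Mirrl}$ depends on split-2 data, so the argument must be carried out in a product filtration where conditioning on $\eta$ and on split 2 makes $\lambda_2^{\Mirrl}$ constant while keeping the split-1 groups conditionally independent; and (ii) the mirroring trick must be invoked to guarantee the one-sided bound required by the sub-exponential CGF lemma has constant $c$ of order one rather than $1/\pi$, which is precisely what lets the empirical-variance-driven width shrink below the $1/\sqrt{n\pi}$ floor when the potential outcomes are concentrated near $0$ (for $L_n^{\EB}$) or near $1$ (for $U_n^{\EB}$).
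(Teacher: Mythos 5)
Your proposal follows essentially the same route as the paper's proof: condition on $\eta$ and on the opposite split so that $\lambda_s^*$ is frozen, build a per-split nonnegative supermartingale/$e$-value with the predictable plug-in mean $\hmu_{t-1}^*$, apply Markov's inequality within each split, union bound over the splits, and treat the upper side with the mirrored estimator before a final union bound, exactly as in the paper. The one caveat is that the inequality you quote, $\log \E[e^{\lambda X}] \le \gamma_{E,c}(\lambda)\,\E[X^2]$ for $X \le c$, by itself only yields an oracle-variance term; to get the empirical-variance exponent $\gamma_{E,c}(\lambda)\sum_t\Paren{\ghate_t^* - \hmu_{t-1}^*}^2$ you need the pointwise Fan inequality $\exps{\lambda \zeta - \gamma_{E,c}(\lambda)\zeta^2} \le 1 + \lambda\zeta$ (valid for $\zeta \ge -1$ after rescaling, which the cap $\lambda_s^* \le 1/(2c)$ ensures), combined with unbiasedness of $\ghate_t^*$ and $1-x \le e^{-x}$ to absorb the gap $\vartheta_t - \hmu_{t-1}^*$---which is precisely how the paper's proof peels each factor.
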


\begin{figure}
    \centering
    \begin{subcaptionblock}{0.32\textwidth}
        \centering 
        \includegraphics[width=\linewidth]{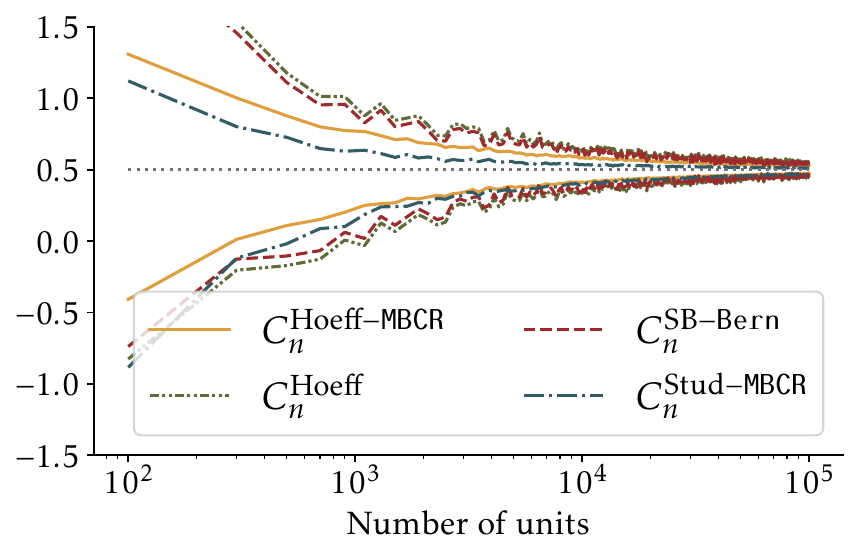}
        \caption{The potential outcomes were generated as $Y_i(0) \sim \mathrm{Unif}(0.1, 0.5)$ and $Y_i(1) \coloneqq Y_i(0) + 0.5$.}
    \end{subcaptionblock}
    \hfill
    \begin{subcaptionblock}{0.32\textwidth}
        \centering 
        \includegraphics[width=\linewidth]{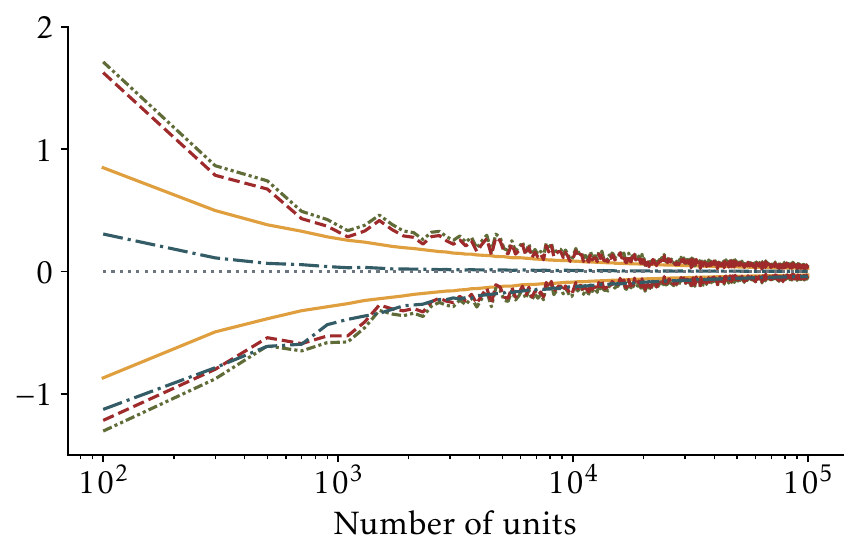}
        \caption{The potential outcomes were generated as $Y_i(0) \sim \mathrm{Unif}(0.9, 1)$ and $Y_i(1) \coloneqq Y_i(0)$.}
        \label{fig:studentized-best-performance-1}
    \end{subcaptionblock}
    \hfill
    \begin{subcaptionblock}{0.32\textwidth}
        \centering 
        \includegraphics[width=\linewidth]{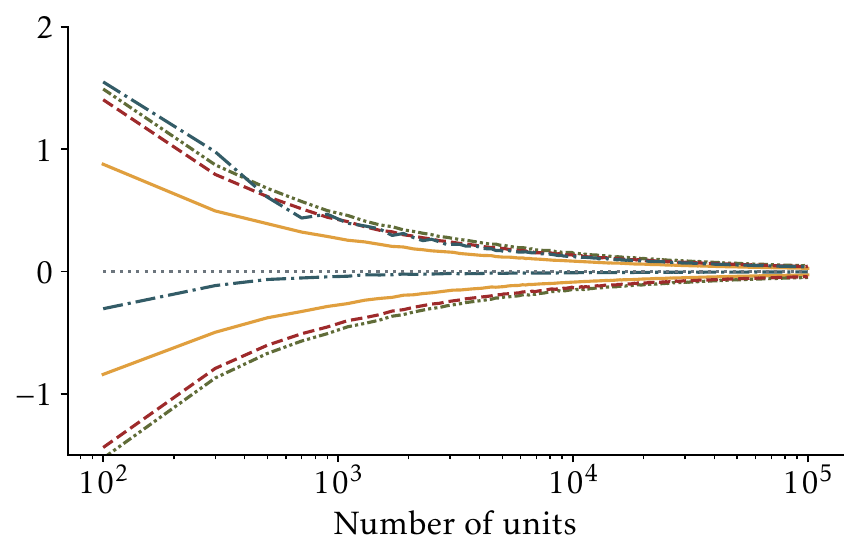}
        \caption{The potential outcomes were generated as $Y_i(0) \sim \mathrm{Unif}(0, 0.1)$ and $Y_i(1) \coloneqq Y_i(0)$.}
        \label{fig:studentized-best-performance-2}
    \end{subcaptionblock}
    \caption{Our three proposed $95\%$ nonasymptotic confidence intervals for the average treatment effect as the sample size, $n$, increases and the propensity score is set to $\pi = 0.1$. Observe that the Hoeffding-style confidence intervals under mini-batch complete randomization (Hoeff-\MBCR{}) perform reasonably well across all three scenarios, with the Studentized bounds becoming sharper in \cref{fig:studentized-best-performance-1} and \cref{fig:studentized-best-performance-2} due to the asymmetry of the potential outcomes, and hence the small empirical variance of the Horvitz-Thompson summands.
    }
    \label{fig:confidence-intervals}
\end{figure}

The proof of \cref{theorem:cross-fit-empirical-bernstein} can be found in \cref{section:proof-empirical-bernstein}. As we alluded to above, the confidence interval from \cref{theorem:cross-fit-empirical-bernstein} scales with the empirical variances, a result that we formally present in the following corollary and whose proof can also be found in \cref{section:proof-empirical-bernstein}.

\begin{corollary}
\label{corollary:eb-scaling}
    Assume $\hsigma_1^{*2} = \omega\Paren{1/n}$ and $\hsigma_2^{*2} = \omega\Paren{1/n}$ for $* \in \Set{\Mirrl, \Mirru}$ and that $m_1 = m_2 = n/2$. Then, the bounds in \eqref{eq:emp-berns-ci-l} and \eqref{eq:emp-berns-ci-u} scale for large $n$ as
    \begin{align}
    \label{eq:eb-scaling}
        \sqrt{\frac{\logs{2/\alpha}}{4 n}}
         \Paren{\frac{\widehat \sigma_1^{*2}}{\widehat \sigma^*_2} + \frac{\widehat \sigma_2^{*2}}{\widehat \sigma_1^{*}} + \widehat \sigma_2^{*} + \widehat \sigma_1^*} \quad \mathrm{for}~ * \in \Set{\Mirrl, \Mirru}\mper
    \end{align}
\end{corollary}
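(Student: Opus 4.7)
My plan is to reduce \cref{corollary:eb-scaling} to a second-order Taylor expansion of $\gamma_{E,c}$ combined with the identity $V_{m_s}^{*\brackone} = m_s\,\hsigma_s^{*2}$ that follows directly from \eqref{eq:main-paper-emp-Bernstein-avg-variance-definition}. The two summands of \eqref{eq:emp-berns-ci-l} (and those of \eqref{eq:emp-berns-ci-u}) are symmetric under the swap $(m_1,\hsigma_1^*)\leftrightarrow(m_2,\hsigma_2^*)$, so I only need to analyze one summand in detail and then add its symmetric counterpart. The calculation goes through identically for $* \in \Set{\Mirrl, \Mirru}$.

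First I would verify that the $1/(2c)$ cap in \eqref{eq:main-paper-emp-Berns-gamma-lambda-definition} is eventually inactive. Under the hypothesis $\hsigma_s^{*2} = \omega(1/n)$ combined with $m_s = n/2$, one has $m_s \hsigma_s^{*2} = \omega(1)$, so $\sqrt{2\log(2/\alpha)/(m_s \hsigma_s^{*2})} = o(1)$ and eventually falls below the fixed constant $1/(2c)$. Consequently, $\lambda_s^* = \sqrt{2\log(2/\alpha)/(m_s \hsigma_s^{*2})}$ with $\lambda_s^* \to 0$. Next, Taylor-expanding $-\log(1-c\lambda) - c\lambda = (c\lambda)^2/2 + O((c\lambda)^3)$ as $c\lambda \to 0$ gives
\begin{equation*}
\gamma_{E,c}(\lambda_s^*) = \tfrac{1}{2}(\lambda_s^*)^2\,(1 + o(1)) = \frac{\log(2/\alpha)}{m_s \hsigma_s^{*2}}\,(1+o(1)).
\end{equation*}

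Then I would substitute $V_{m_1}^{*\brackone} = (n/2)\hsigma_1^{*2}$ and the expressions for $\lambda_2^*$ and $\gamma_{E,c}(\lambda_2^*)$ into the first summand of the half-width in \eqref{eq:emp-berns-ci-l}, which yields, after cancelling a factor of $\log(2/\alpha)$ between numerator and denominator and combining $1/\sqrt{n}$ with the $\hsigma_2^*$ emerging from $1/\lambda_2^*$,
\begin{equation*}
\frac{\gamma_{E,c}(\lambda_2^*)\,V_{m_1}^{*\brackone} + \log(2/\alpha)}{n\lambda_2^*} = \Paren{1+o(1)}\sqrt{\frac{\log(2/\alpha)}{4n}}\Paren{\frac{\hsigma_1^{*2}}{\hsigma_2^*} + \hsigma_2^*}.
\end{equation*}
Adding the symmetric contribution from the second summand (obtained by swapping the roles of the two splits) gives precisely the right-hand side of \eqref{eq:eb-scaling}. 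The analysis of the upper side \eqref{eq:emp-berns-ci-u} is identical with $\Mirrl$ replaced by $\Mirru$ throughout.

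The hard part is really only bookkeeping: one must carry the $(1+o(1))$ Taylor remainder through the final simplification without letting it contaminate the leading coefficient $\sqrt{\log(2/\alpha)/(4n)}$, and one must appeal once to the hypothesis $\hsigma_s^{*2} = \omega(1/n)$ to rule out the $1/(2c)$ cap. Beyond these two points the argument is purely algebraic.
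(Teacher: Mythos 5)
Your proposal is correct and follows essentially the same route as the paper's proof: approximate $\gamma_{E,c}(\lambda_s^*)$ by $(\lambda_s^*)^2/2$ for small $\lambda_s^*$ (you via Taylor expansion, the paper via L'H\^opital), substitute $V_{m_s}^* = m_s\hsigma_s^{*2}$ and the definition of $\lambda_s^*$, simplify one summand, and symmetrize. Your explicit check that the $1/(2c)$ cap is eventually inactive under $\hsigma_s^{*2}=\omega(1/n)$ is a detail the paper leaves implicit, and is a welcome addition rather than a deviation.
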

We are not aware of existing off-the-shelf concentration inequalities that could 
have immediately yielded a result of the kind in \cref{corollary:eb-scaling} 
under the conditions considered here. 
We can see that if the empirical variances are stationary and converging to some common $\sigma^2$, the bound scales as $2 \sigma \sqrt{\log (2/\alpha) / n}$. Therefore, the Studentized confidence interval is tightest under settings where the ``group-wise'' estimators $\ghate_t$ exhibit low variance, but can be less adequate than the Hoeffding-style and sub-Bernoulli confidence intervals in settings where the variance of the random variables is large. These behaviors can be seen in \cref{fig:confidence-intervals}, where we plot our proposed confidence intervals under different data generating processes. Importantly, notice that in settings in which the potential outcomes tend to be concentrated towards either the upper or lower bounds, our Studentized confidence interval tends to exhibit significant (one-sided) improvements over our Hoeffding-style or sub-Bernoulli confidence intervals, but the ``worse performing side'' eventually does as well as the sub-Bernoulli confidence interval.

\begin{remark}[On the form of mirrored estimators under different randomization schemes]\label{remark:the form of mirrored estimators}
The ``upper'' mirrored estimator $\frac{1}{n} \sum_{t=1}^{\widebar T} \sum_{i \in g_t} \estimator_{\etainvi}^\Mirru$ takes different forms under mini-batch complete and Bernoulli randomization. Indeed, $\estimator_\etainvi^\Mirru$ can be equivalently written as
\begin{equation}
    \estimator_{\etainvi}^\Mirru = \estimator_{\etainvi} - \Paren{\frac{Z_{\beta(i)}}{\pi} - \frac{1-Z_{\beta(i)}}{1-\pi}}  \mper
\end{equation}
Under mini-batch complete randomization (and complete randomization), exactly $n_1 = n\pi$ units are assigned to the treatment group and $n_0 = n - n_1$ are assigned to the control group, and hence
\begin{equation}
    \frac{1}{n} \sum_{t=1}^{\widebar T} \sum_{i \in g_t} \estimator_\etainvi^\Mirru = \estimator \mper
\end{equation}
Under Bernoulli randomization, however, the same estimator takes the form
\begin{equation}
    \frac{1}{n} \sum_{i=1}^n \estimator_{\etainvi}^\Mirru = \estimator - \frac{1}{n} \sum_{i=1}^n \Paren{\frac{Z_i}{\pi} - \frac{1-Z_i}{1-\pi}} \mcom
\end{equation}
with the second term being zero \emph{in expectation} but may contribute additional variance to the estimator.
\end{remark}

\section{Information-Theoretic Lower Bounds}
\label{sec:information-theoretic-bounds}
In this section we present an analysis that shows that the scaling exhibited in the confidence intervals in \cref{thm:hoeffding-mbcr,thm:sub-bernoulli-cis} is unimprovable from an information-theoretic point of view. While the appearance of confidence intervals shrinking at a rate of $1/\sqrt{n\pi}$ is not particularly surprising under $\iid$ superpopulation assumptions, but it is not as obvious that it should be optimal under the design-based setting. In this section we derive a lower bound for estimation in $L_2$ under the design-based setting which is strictly smaller than that of the $\iid$ superpopulation setting, but not in a way that changes the asymptotic scaling with respect to the effective sample size. We do so by presenting matching upper and lower bounds (up to constants) on the $L_2$ risk of estimating of the average treatment effect. 

We begin by presenting an upper bound on the estimation of the average treatment effect when using the Horvitz-Thompson estimator under mini-batch complete randomization and  under Bernoulli randomization, respectively,
\begin{equation}
\label{eq:estimation-upper-bound}
    \normtp{\hate' - \ate} \coloneqq \sqrt{\E_{\PP}\Brac{\Paren{\estimator' - \ate}^2}} \leq \frac{2}{\sqrt{n\pi}}
    \quad\mathrm{and}\quad \normtp{\hate - \ate} \leq \sqrt{\frac{2}{n\pi}}\mper
\end{equation}
The proof of the above upper bounds on the root mean square error (RMSE) can be found in \cref{section:proof-l2p-upper-bound}, and they hold under both the design-based and $\iid$ superpopulation settings. From \eqref{eq:estimation-upper-bound} we see that the root mean square error scales as $1/\sqrt{n\pi}$. This matches the scaling of the width of the Hoeffding-style and sub-Bernoulli confidence intervals presented in \cref{section:confidence-intervals}. 

In order to show the tightness of this scaling, we proceed to obtaining a corresponding lower bound on the minimax risk, defined as follows:
\begin{equation}
\label{eq:minimax-risk-def}
    \inf_{\estimator} ~ \sup_{\PP \in \cP} ~ \normtp{\estimator - \ateP}\mper
\end{equation}
The infimum in \eqref{eq:minimax-risk-def} is taken over all possible estimators, and 
we will be considering models $\cP$ for the $\iid$ superpopulation and
design-based settings. We first consider the $\iid$ superpopulation setting.

\begin{definition}[Minimax observation model for the stochastic $\iid$ setting]
\label{def:iid-stochastic-setting}
    We let $\cP_\iid$ be the set of joint distributions $\PP$ over the tuples $\Paren{\bZ, \bY}$ of treatment assignments $\bZ \coloneqq \Paren{Z_i}_{i=1}^n$ and observed outcomes $\bY \coloneqq \Paren{Y_i(Z_i)}_{i=1}^n$. The vector of observed outcomes is composed of potential outcomes tuples $\Paren{Y_i(0), Y_i(1)}_{i=1}^n$ which are $\iid$ draws from an arbitrary joint distribution with support $[0,1] \times [0,1]$. The vector of treatment assignments is assumed to be drawn marginally from some distribution $\bZ \sim \PP_{\bZ}$ induced by a design that satisfies $\Ex{n_1} = n\pi$.
\end{definition}
Note that we are not assuming that $\bZ$ are $\iid$ draws. The restriction that $\Ex{n_1} = n\pi$ clearly encapsulates both Bernoulli randomization and complete randomization (where $n_1 = n\pi$ with probability one in the latter case). Note that by virtue of there being no confounding the potential outcomes tuples 
are independent of $\bZ$ but clearly $\bZ$ and $\bY$ are highly correlated. 
Our goal, now, is to show the existence of a lower bound for 
\eqref{eq:minimax-risk-def} under the stochastic setting that scales as 
$1/\sqrt{n\pi}$. This is achieved in the following theorem.

\begin{theorem}
\label{theorem:minimax-bound-stochastic}
    The minimax risk under the root mean square error for estimating the average treatment effect under the stochastic setting (\cref{def:iid-stochastic-setting}) can be lower bounded as
    \begin{equation}
        \label{eq:minimax-bound-stochastic}
        \inf_{\estimator} \sup_{\PP \in \cP_\iid} \normtp{\estimator - \ateP} \gtrsim \frac{1}{\sqrt{n\pi}} \mper
    \label{eq:minimax-lower-bound-stochastic}
    \end{equation}
\end{theorem}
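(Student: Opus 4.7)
The plan is to apply Le Cam's two-point method. I would construct two hypotheses $P_0, P_1 \in \cP_\iid$ whose average treatment effects differ by $\Delta \asymp 1/\sqrt{n\pi}$ but whose induced joint distributions on $(\bZ, \bY)$ have bounded KL divergence, so that no estimator can separate them with nonnegligible probability. The natural choice is to put all of the information in the treated arm: fix any design $\PP_{\bZ}$ with $\E[n_1] = n\pi$ (Bernoulli or complete randomization both qualify), take $Y_i(0) = 0$ almost surely under both hypotheses, and let $Y_i(1) \sim \Bern(q_j)$ independently across $i$ with $q_0 = 1/2$ and $q_1 = 1/2 + \delta$, where $\delta \coloneqq c/\sqrt{n\pi}$ for a small absolute constant $c > 0$. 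The resulting ATEs are $\psi_j = q_j$, so $|\psi_1 - \psi_0| = \delta$, and the support condition $(Y_i(0), Y_i(1)) \in [0,1]^2$ is met provided $\delta < 1/2$ (the regime $n\pi = O(1)$ is handled trivially since $\psi \in [-1,1]$).

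The central calculation is the KL bound. Because the design $\PP_{\bZ}$ is common to both hypotheses, chain-rule factorization yields
\[
\KL(P_0 \| P_1) \;=\; \E_{\bZ}\!\Brac{\KL\bigl(P_{\bY \mid \bZ}^{(0)} \,\|\, P_{\bY \mid \bZ}^{(1)}\bigr)}.
\]
Given $\bZ$, the observations are independent: control units contribute nothing (they are almost surely zero under either hypothesis), while each treated unit contributes $\KL(\Bern(q_0) \| \Bern(q_1)) = -\tfrac{1}{2}\log(1 - 4\delta^2) \le 4\delta^2$. Summing and taking expectations gives $\KL(P_0 \| P_1) \le 4\,\E[n_1]\,\delta^2 = 4 n\pi \delta^2 = 4 c^2$; choosing $c$ small enough makes $\KL \le 1/2$, and Pinsker's inequality then gives $\tv(P_0, P_1) \le 1/2$.

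With these two hypotheses in hand, I would close with the standard two-point lower bound for squared-error loss (obtained by reducing to binary hypothesis testing via $\hat j \coloneqq \arg\min_j |\hpsi - \psi_j|$ and noting that any error of the induced test forces $|\hpsi - \psi_j| \ge \Delta/2$): for any estimator $\hpsi$,
\[
\max_{j \in \{0,1\}} \E_{P_j}\!\Brac{(\hpsi - \psi_j)^2} \;\geq\; \frac{\Delta^2}{8}\bigl(1 - \tv(P_0, P_1)\bigr) \;\geq\; \frac{\delta^2}{16},
\]
and taking square roots yields $\inf_{\hpsi} \sup_{\PP \in \cP_\iid} \normtp{\hpsi - \ateP} \gtrsim 1/\sqrt{n\pi}$.

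The main conceptual obstacle---and the step that actually produces the $n\pi$ scaling---is the KL bound: fixing the design across both hypotheses collapses the joint-distribution KL to a single Bernoulli KL weighted by $\E[n_1] = n\pi$. This is the information-theoretic counterpart of the intuition that only treated units carry information about $\psi$ when the control outcomes are degenerate, and it makes the ``effective sample size'' of $n\pi$ appear automatically via Pinsker. Everything else is a bookkeeping exercise.
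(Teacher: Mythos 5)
Your proposal is correct and follows essentially the same route as the paper's proof: a Le Cam two-point argument with both hypotheses sharing the design $\PP_{\bZ}$ and the control-arm outcome law, differing only in the treated-arm Bernoulli parameter, so that conditioning on $\bZ$ collapses the KL to $\E[n_1]\,\KL(\Bern(q_0)\,\|\,\Bern(q_1)) \lesssim n\pi\,\delta^2$, followed by Pinsker and the choice $\delta \asymp 1/\sqrt{n\pi}$. The only differences are cosmetic (degenerate versus $\Bern(1/2)$ control outcomes, and constant bookkeeping), and both arguments implicitly operate in the $n\pi \gtrsim 1$ regime where $\delta$ can be taken small.
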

The proof---which applies to a more general setting than that considered in this section--can be found in \cref{section:proof-minimax-bound-stochastic}.  The proof is an application of Le Cam's two-point method, with some modifications to handle the dependency between the observed outcomes and the treatment assignments.

Having obtained a lower bound for \eqref{eq:minimax-risk-def} under the stochastic setting, we turn our attention to deriving a lower bound for the design-based setting, for which we consider the following observational model.

\begin{definition}[Minimax observation model for the design-based setting]
\label{def:design-based-setting}
    We let $\cP_{\DB}$ be the set of joint distributions $\PP$ over the tuples $\Paren{\bZ, \bY}$ of treatment assignments $\bZ \coloneqq \Paren{Z_i}_{i=1}^n$ and observed outcomes $\bY \coloneqq \Paren{y_i(Z_i)}_{i=1}^n$. The vector of observed outcomes is composed of potential outcome tuples $\Paren{y_i(0), y_i(1)}_{i=1}^n$ which are fixed and take values from $[0,1] \times [0,1]$. The vector of treatment assignments is assumed to be drawn marginally from some distribution $\bZ \sim \PP_{\bZ}$ induced by a design that satisfies $\Ex{n_1} = n\pi$.
\end{definition}

This model provides a design-based analog to the model from \cref{def:iid-stochastic-setting}, as it assumes the potential outcomes are fixed, deterministic values rather than being $\iid$ draws from some distribution. One would thus expect the minimax estimation problem \eqref{eq:minimax-risk-def} under the stochastic $\iid$ and design-based settings to be intrinsically related. In the following lemma we show that this is in fact the case.
\begin{lemma}
\label{lemma:minimax-relation-stochastic-design-based}
    The minimax estimation risk \eqref{eq:minimax-risk-def} under the design-based setting (\cref{def:design-based-setting}) is lower bounded by that under the stochastic i.i.d. setting (\cref{def:iid-stochastic-setting}) up to an oracle estimation term as follows
    \begin{equation}
        \inf_{\estimator}~\sup_{\PP \in \cP_\DB} \normtp{\estimator - \sate} \geq \inf_{\estimator}~\sup_{\PP \in \cP_\iid} \normtp{\estimator - \ateP} - \sup_{\PP \in \cP_\iid} \normtp{\sate - \ateP} \mcom
    \label{eq:minimax-relation-stochastic-design-based}
    \end{equation}
    where we recall that $\sate \coloneqq (1/n)\sum_{i=1}^n [y_i(1) - y_i(0)]$ and $\ateP \coloneqq \E_P\Brac{Y(1) - Y(0)}$.
\end{lemma}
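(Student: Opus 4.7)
The plan is to combine a pointwise triangle inequality in $L_2(\PP)$ with a conditioning argument that embeds every iid-superpopulation distribution into a mixture of design-based distributions, allowing us to dominate the $\cP_\iid$-supremum by the $\cP_\DB$-supremum for the $\normtp{\hpsi - \sate}$ term. No deep machinery is needed beyond the tower property and a careful bookkeeping of the two models.

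First, for any estimator $\hpsi$ and any $\PP \in \cP_\iid$, the triangle inequality in $L_2(\PP)$ applied to the decomposition $\hpsi - \ateP = (\hpsi - \sate) + (\sate - \ateP)$ gives
\begin{equation*}
    \normtp{\hpsi - \sate}_{L_2(\PP)} \geq \normtp{\hpsi - \ateP}_{L_2(\PP)} - \normtp{\sate - \ateP}_{L_2(\PP)} \mper
\end{equation*}
Since $\sup_\PP [A(\PP) - B(\PP)] \geq \sup_\PP A(\PP) - \sup_\PP B(\PP)$, taking $\sup_{\PP \in \cP_\iid}$ yields
\begin{equation*}
    \sup_{\PP \in \cP_\iid} \normtp{\hpsi - \sate} \geq \sup_{\PP \in \cP_\iid} \normtp{\hpsi - \ateP} - \sup_{\PP \in \cP_\iid} \normtp{\sate - \ateP} \mper
\end{equation*}

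The main step is then to show that for any estimator $\hpsi$,
\begin{equation*}
    \sup_{\PP \in \cP_\DB} \normtp{\hpsi - \sate} \geq \sup_{\PP \in \cP_\iid} \normtp{\hpsi - \sate} \mper
\end{equation*}
Fix any $\PP \in \cP_\iid$, let $Q$ denote the law of the potential outcomes $\omega \coloneqq ((Y_i(0), Y_i(1)))_{i=1}^n$, and let $\PP_\bZ$ denote the marginal law of $\bZ$ under $\PP$. For each realization $\omega \in [0,1]^{2n}$, let $\PP_\omega$ be the joint law on $(\bZ, \bY)$ in which the potential outcomes are fixed at $\omega$ and $\bZ \sim \PP_\bZ$; since $\PP_\bZ$ is a randomization design satisfying $\E[n_1] = n\pi$, we have $\PP_\omega \in \cP_\DB$. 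By no-confounding (i.e., $\bZ$ is independent of the potential outcomes under $\PP$), the tower property gives
\begin{equation*}
    \snormtp{\hpsi - \sate}_{L_2(\PP)}^2 = \E_{\omega \sim Q}\Brac{\E_{\PP_\omega}\Brac{(\hpsi - \sate)^2}} \leq \sup_{\PP' \in \cP_\DB} \E_{\PP'}\Brac{(\hpsi - \sate)^2} \mper
\end{equation*}
Taking square roots and then $\sup_{\PP \in \cP_\iid}$ establishes the claim.

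Chaining the two displays gives, for every $\hpsi$,
\begin{equation*}
    \sup_{\PP \in \cP_\DB} \normtp{\hpsi - \sate} \geq \sup_{\PP \in \cP_\iid} \normtp{\hpsi - \ateP} - \sup_{\PP \in \cP_\iid} \normtp{\sate - \ateP} \mper
\end{equation*}
Since the second term on the right-hand side does not depend on $\hpsi$, taking $\inf_\hpsi$ on both sides delivers \eqref{eq:minimax-relation-stochastic-design-based}. The only delicate point is the identification of $\PP_\omega$ as an element of $\cP_\DB$; this relies precisely on the fact that both $\cP_\iid$ and $\cP_\DB$ are defined with the same class of designs (those with $\E[n_1] = n\pi$), so the randomization of $\bZ$ carries over verbatim when one conditions on the potential outcomes.
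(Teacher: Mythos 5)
Your proposal is correct and follows essentially the same route as the paper: a triangle-inequality decomposition of $\normtp{\hpsi-\ateP}$, domination of the $\cP_\iid$ supremum of $\normtp{\hpsi-\sate}$ by the $\cP_\DB$ supremum via conditioning on the potential outcomes (the paper packages this as an auxiliary lemma that the expectation is bounded by the supremum of conditional expectations, while you phrase it as a tower-property mixture over $\PP_\omega \in \cP_\DB$), and finally taking the infimum after noting the oracle term is estimator-free. The mixture phrasing is, if anything, a slightly cleaner rendering of the same argument, handling general laws on $[0,1]^{2n}$ directly.
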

\cref{eq:minimax-relation-stochastic-design-based} tells us that the minimax risk under the design-based setting is as large as the minimax risk under the stochastic $\iid$ setting plus an additional oracle estimation factor. This oracle estimation factor captures the hardness of estimating the population average treatment effect when we have oracle access to the potential outcomes.
We can combine the result from \cref{lemma:minimax-relation-stochastic-design-based} with that of \cref{theorem:minimax-bound-stochastic} and the bound from equation \eqref{eq:estimation-upper-bound} to obtain the following lower bound on the minimax risk under the design-based setting.
\begin{theorem}
\label{theorem:minimax-lower-bound-design-based}
    The minimax risk under the root mean square error for estimating the average treatment effect under the design-based setting (\cref{def:design-based-setting}) can be lower bounded as
    \begin{equation}
        \inf_{\estimator}~\sup_{\PP \in \cP_\DB} \normtp{\estimator - \sate} \gtrsim \frac{1}{\sqrt{n\pi}} - \frac{1}{\sqrt{n}} \mper
        \label{eq:minimax-lower-bound-design-based}
    \end{equation}
\end{theorem}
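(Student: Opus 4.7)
The plan is to combine \cref{lemma:minimax-relation-stochastic-design-based} with \cref{theorem:minimax-bound-stochastic} and an elementary variance bound for i.i.d.\ sample means, exactly along the lines hinted at in the paragraph immediately preceding the statement. By \cref{lemma:minimax-relation-stochastic-design-based} we already have
\begin{equation*}
\inf_{\estimator}~\sup_{\PP \in \cP_\DB} \normtp{\estimator - \sate} \;\geq\; \inf_{\estimator}~\sup_{\PP \in \cP_\iid} \normtp{\estimator - \ateP} \;-\; \sup_{\PP \in \cP_\iid} \normtp{\sate - \ateP},
\end{equation*}
so it suffices to lower-bound the first term on the right by a constant multiple of $1/\sqrt{n\pi}$ and upper-bound the ``oracle'' term by $O(1/\sqrt{n})$.

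The first bound is immediate from \cref{theorem:minimax-bound-stochastic}. For the oracle term I would observe that under any $P \in \cP_\iid$ the individual treatment effects $D_i \coloneqq Y_i(1) - Y_i(0)$ are i.i.d.\ and supported on $[-1,1]$, so Popoviciu's inequality gives $\Var(D_i) \le 1/4$. Since $\sate = \tfrac{1}{n}\sum_{i=1}^n D_i$ is an unbiased estimator of $\ateP = \E[D_1]$ under $\iid$ sampling,
\begin{equation*}
\normtp{\sate - \ateP}^2 \;=\; \Var\Paren{\tfrac{1}{n}\sum_{i=1}^n D_i} \;=\; \frac{\Var(D_1)}{n} \;\le\; \frac{1}{4n},
\end{equation*}
uniformly over $\cP_\iid$, and hence $\sup_{\PP \in \cP_\iid} \normtp{\sate - \ateP} \lesssim 1/\sqrt{n}$. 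Substituting the two bounds into \cref{lemma:minimax-relation-stochastic-design-based} yields the claim.

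The resulting inequality is informative precisely in the small-$\pi$ regime the paper cares about: under the standing assumption $\pi \le 1/2$ one has $1/\sqrt{n\pi} \ge \sqrt{2}/\sqrt{n}$, so the gap $\gtrsim 1/\sqrt{n\pi} - 1/\sqrt{n}$ is a nontrivial multiple of $1/\sqrt{n\pi}$ as $\pi \to 0$. All the substantive difficulty has been absorbed into \cref{lemma:minimax-relation-stochastic-design-based} (the design-based-to-stochastic reduction) and \cref{theorem:minimax-bound-stochastic} (whose Le Cam two-point argument must accommodate the coupling between treatment assignments and observed outcomes); the present step is essentially arithmetic and I do not foresee any additional obstacle beyond carefully tracking the absolute constant in the $\gtrsim$ hidden in \cref{theorem:minimax-bound-stochastic}, which must exceed the $1/2$ appearing in the oracle bound for the advertised lower bound to remain positive.
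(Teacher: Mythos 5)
Your proposal is correct and follows the same skeleton as the paper's proof: both start from the decomposition in \cref{lemma:minimax-relation-stochastic-design-based}, plug in the $1/\sqrt{n\pi}$ lower bound from \cref{theorem:minimax-bound-stochastic} for the stochastic term, and then control the oracle term $\sup_{\PP \in \cP_\iid} \normtp{\sate - \ateP}$ by a quantity of order $1/\sqrt{n}$. The one place you genuinely diverge is in how that oracle term is bounded. The paper proves a separate lemma (\cref{lemma:hoeffding-estimation-bound}) that applies Hoeffding's inequality to the centered differences, integrates the resulting tail bound to control the second moment, and arrives at $\normtp{\sate - \ateP} \leq 2/\sqrt{n}$. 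You instead note that the $D_i \coloneqq Y_i(1) - Y_i(0)$ are i.i.d.\ and bounded, so $\normtp{\sate - \ateP}^2 = \Var(D_1)/n$ directly; this is more elementary than the paper's tail-integration route and in fact gives a sharper constant ($1/\sqrt{n}$ rather than $2/\sqrt{n}$). One small slip: since $D_i \in [-1,1]$, Popoviciu's inequality gives $\Var(D_i) \leq (1-(-1))^2/4 = 1$, not $1/4$; the $1/4$ bound would require $D_i \in [0,1]$. This does not affect your conclusion, since $\Var(D_1) \leq 1$ still yields $\sup_{\PP \in \cP_\iid}\normtp{\sate - \ateP} \leq 1/\sqrt{n} \lesssim 1/\sqrt{n}$. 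Your closing remark about tracking the absolute constants is apt: the paper's explicit bound is $1/\sqrt{72\,n\pi} - 2/\sqrt{n}$, which is only informative in the small-$\pi$ regime, exactly as you observe.
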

The proofs of \cref{lemma:minimax-relation-stochastic-design-based,theorem:minimax-lower-bound-design-based} can be found in \cref{section:proof-minimax-lower-bound-design-based}. Notice how the lower bounds from \cref{theorem:minimax-bound-stochastic,theorem:minimax-lower-bound-design-based} match the scaling of the upper bound in \eqref{eq:estimation-upper-bound}, and differ by an additive factor of $\flatfrac{1}{\sqrt{n}}$ from each other. The $1/\sqrt{n}$ corresponds to the oracle estimation factor, and it is ``negligible'' compared to the $\flatfrac{1}{\sqrt{n\pi}}$ lower bound on the stochastic minimax risk as it vanishes at a faster rate when $(n, \pi) \to (\infty, 0)$. The lower bounds given in \cref{theorem:minimax-bound-stochastic} and \cref{theorem:minimax-lower-bound-design-based} imply that estimation of the average treatment effect under the design-based setting is ``easier'' than estimation under the stochastic setting. However, the design-based setting is not \emph{fundamentally} easier than the stochastic setting because as $\pi \to 0$ the rates of \eqref{eq:minimax-lower-bound-stochastic} and \eqref{eq:minimax-lower-bound-design-based} are the same.

Taken together, the results in this section point to the optimality of the $\flatfrac{1}{\sqrt{n\pi}}$ rate observed in the Hoeffding-style and sub-Bernoulli confidence intervals from \cref{section:confidence-intervals}.

\section{Discussion}
Starting from the observation that existing nonasymptotic confidence intervals for the average treatment effect in 
randomized experiments exhibit a poorer scaling with respect to the effective sample size than asymptotic confidence intervals, we have derived nonasymptotic confidence intervals under Bernoulli randomization
and complete randomization that enjoy the same $1/\sqrt{n\pi}$ scaling as the asymptotic intervals.  We additionally  proved that this scaling is optimal by providing a matching lower bound on the statistical risk. Finally, we
also introduced a randomization procedure that is distributionally equivalent
to complete randomization, but whose properties lead to a new analysis technique that yields
Hoeffding-style and sub-Bernoulli confidence intervals that scale asymptotically
as $1/\sqrt{n\pi}$.

We note that there is related work on nonasymptotic confidence intervals for estimates of other causal effects. For instance, \citet*[Section 4.3]{tchetgen2012causal} study a setting where interference is present and they provide nonasymptotic confidence intervals for various causal estimates under a group-randomized design while assuming binary potential outcomes. Their derivation of the confidence intervals exploits the negative association of the treatment assignments in a manner similar to ours. 
An interesting future direction is, thus, to explore whether a mini-batch complete randomization-like analysis can result in tighter nonasymptotic confidence intervals under clustered randomized designs when interference is present. 

Another setting in which analogues of nonasymptotic confidence intervals have been of interest is in online, adaptive, experimentation. \citet{howard2021time} derived a confidence sequence, a sequence of confidence intervals that is uniformly valid for any sample size, for the average treatment effect. They focus on an adaptive, Bernoulli, design and provided an empirical-Bernstein confidence sequence for the average treatment effect while using the augmented inverse propensity weighting estimator. \citet{waudby2024anytime} focused on anytime-valid contextual bandit inference. As they point out, contextual bandits can be seen as a generalization of treatment effect estimation under adaptive experiments, meaning that the confidence sequences they provide in their paper also apply under the adaptive experimentation setting. As such, an additional future direction is characterizing the ``optimal'' scaling of confidence sequences for the average treatment effect, as well as extending mini-batch complete randomization to the online setting. We leave this direction for future work.

\subsection*{Acknowledgments}
The authors would like to thank David Arbour, Peng Ding, and Tobias Freidling for insightful conversations. This project was funded in part by the European Union (ERC-2022-SYG-OCEAN-101071601).
Views and opinions expressed are however those of the author(s) only and do not
necessarily reflect those of the European Union or the European Research Council
Executive Agency. Neither the European Union nor the granting authority can be
held responsible for them.  IW-S acknowledges support from the Miller Institute for Basic Research in Science.

\bibliographystyle{plainnat}
\bibliography{refs}

\appendix
\newpage
\section{Mini-batch complete randomization algorithm}
\label{sec:mbcr-algorithm}
In this section we present the formal algorithm for the mini-batch complete randomization procedure that was introduced in \cref{section:mini-batch-complete-randomization}. We start by recalling that $G \coloneqq \lceil 1/\pi \rceil$ represents the size of all but potentially the last the batch and $T$ represents the total number of groups of size $G$. Additionally, recall that $\barG$ is the size of the final group and $\barn_1$ represents the number of units assigned to the treatment arm in this final group, and that $S(N)$ is the group of permutations on $\Set{1, \dots, N}$. Lastly, recall that $a$ is the vector containing $n_0$ zeros and $n_1$ arranged in the following order: a single $1$ followed by $G-1$ zeros for $T$ times and $\barn_1$ ones followed by $\barG - \barn_1$ zeros. Having recalled the necessary notation, we now proceed to formally state the mini-batch complete randomization procedure.
\begin{algorithm}
  \caption{\textsc{Mini-batch complete randomization}~(\texttt{MBCR})}
    \label{algorithm:mbcr}
    \KwData{$n_1$ and $n_0$.}

    Compute $G, T, \barn_1$, $\barG$, and $a$.
    
    \For {$t \in \{ 1, \dots, T\}$} {
    $\beta_t \sim \mathrm{Unif}(S(G))$.
    }
    $\beta \gets \bigoplus_{t=1}^T \beta_t$
    
    \If{$\widebar G \geq 2$} {
    $\widebar \beta \sim \mathrm{Unif}(S(\widebar G))$.
    
    $\beta \gets \beta \oplus \widebar \beta$
    
    }
    
    $\eta \sim \mathrm{Unif}(S(n))$.
    
    $\rho \gets \eta \circ \beta$
    
    $(Z_1, \dots, Z_n) \gets \left ( a_{\rho(1)}, \dots, a_{\rho(n)} \right )$

    \Return{$(Z_1, \dots, Z_n)$}
\end{algorithm}

\section{Proofs of confidence intervals}
\subsection{Hoeffding-style confidence intervals}\label{section:proof-hoeffding-mbcr}

\begin{proof}[\proofref{thm:hoeffding-mbcr}]
    The proof proceeds in three steps. The first two steps are concerned with showing that a particular exponential random variable is an $e$-value---meaning that it is nonnegative and has expectation at most one. The first of the two steps exploit the between-batch conditional independence of treatment assignments as well as the marginal independence between the within-group permutations, $\beta$, and the full data $n$-permutation, $\eta$, to reduce the problem to showing that group-wise random variables are themselves $e$-values. Indeed, step 2 shows that these group-wise variables are sub-Gaussian $e$-values with sharp sub-Gaussian parameters by exploiting the \emph{dependence} between summands whose indices are permuted by $\beta$. Step 3 takes the $e$-value from the first step, applies Markov's inequality to it, optimizes the resulting confidence set using a free tuning parameter $\lambda$, and union bounds over both upper and lower bounds on the centered estimator. 

    \paragraphref{Step 1: Exploiting between-group independence to write an expectation as a product.}
    For an arbitrary $n$-permutation $\eta$, any group-wise permutation $\beta$, and any $i \in g_t$ such that $t \in [T]$, define
    \begin{equation}
      \widehat \psi_{\eta^{-1}(i)} \coloneqq \frac{Z_{\beta(i)} Y_{\eta^{-1}(i)}}{1/G} - \frac{(1-Z_{\beta(i)}) Y_{\eta^{-1}(i)}}{1-1/G},
    \end{equation}
    as well as for any $ i \in \widebar g$ as
    \begin{equation}
      \widehat \psi_{\eta^{-1}(i)} \coloneqq \frac{Z_{\widebar \beta(i)} Y_{\eta^{-1}(i)}}{1/\widetilde G} - \frac{(1-Z_{\widebar \beta(i)}) Y_{\eta^{-1}(i)}}{1-1/\widetilde G},
    \end{equation}
    where $\widetilde G = \widebar G / \widebar n_1$. We further define their centered versions as
    \begin{equation}
      \widehat \phi_{\eta^{-1}(i)} \coloneqq \widehat \psi_{\eta^{-1}(i)} - \psi_{\eta^{-1}(i)},
    \end{equation}
    where $\psi_{j} \coloneqq \E_{P}\Brac{Y_j(1) - Y_j(0)}$ is the individual treatment effect for each $ j \in [n]$. Above, we have left the dependence on $\beta$ implicit for now to reduce notational clutter, but these index permutations will appear shortly. Note that $\widehat \phi_{\eta^{-1}(i)}$ are mean-zero conditional on $\eta$.
    Consider the Hoeffding-style exponential random variable $M_n$ for any $\lambda \in \RR$ \citep{hoeffding_probability_1963},
    \begin{align}
    \label{eq:hoeffding-rv}
      M_n &\coloneqq \exp \left \{ \lambda \left ( \sum_{t=1}^{T}  \sum_{i \in g_t} \left [ \widehat \phi_{\eta^{-1}(i)} \right ]  \right ) - T \lambda^2 G^2/2 \right \}  \exp \left \{ \lambda \left (  \sum_{i \in \widebar g} \left [ \widehat \phi_{\eta^{-1}(i)} \right ]  \right ) - \lambda^2 \widebar G^2 / 2\right \} \\
          &= \exp \left \{ \lambda  \left ( \sum_{t=1}^{T} \sum_{i\in g_t} \widehat \phi_{\eta^{-1}(i)} + \sum_{i \in \widebar g} \widehat \phi_{\eta^{-1}(i)}  \right ) - \lambda^2 
            \frac{\left ( T G^2 + \widebar G^2 \right )}{2} \right \}
   \end{align}
   We will now show that $M_n$ is an $e$-value, meaning that $M_n \geq 0$ almost surely and $\Ex{M_n} \leq 1$. Indeed, nonnegativity follows by construction of taking an exponent. To show that $\EE[M_n] \leq 1$, we first condition on the random permutation $\eta$, exploit independence between $\eta$ and $\beta$, and then marginalize over the distribution of $\eta \sim \Unif(S(n))$. To that end, consider the conditional expectation of $M_n$:
   \begin{align}
     \EE \left [ M_n \mid \eta \right ] &= \EE \left [ \exp \left \{ \lambda  \left ( \sum_{t=1}^{T} \sum_{i\in g_t} \widehat \phi_{\eta^{-1}(i)} + \sum_{i \in \widebar g} \widehat \phi_{\eta^{-1}(i)}  \right ) - \lambda^2 
            \frac{\left ( T G^2 + \widebar G^2 \right )}{2} \right \} \Bigm | \eta \right ] \\
     &= \left (\prod_{t=1}^T \EE \left [ \exp \left \{ \lambda  \sum_{i\in g_t} \widehat \phi_{\eta^{-1}(i)}  - \lambda^2 
            \frac{ G^2}{2} \right \} \Bigm | \eta \right ] \right )  \EE \left [ \exp \left \{ \sum_{i \in \widebar g} \widehat \phi_{\eta^{-1}(i)} - \lambda^2  \frac{\widebar G^2}{2} \right \} \Bigm | \eta \right ],
   \end{align}
   where the second equality follows from independence of treatment permutations $\beta_1, \dots, \beta_T, \widebar \beta$ as well as independence between $\beta$ and $\eta$.

  \paragraphref{Step 2: Exploiting within-group dependence to apply Hoeffding's inequality.}
   Notice that by definition of $\beta$, we have that $\beta(i) = 1$ for one and exactly one $i \in g_t$ for each $t \in [T]$. As stated in \cref{lemma:group-wise-estimator-bound}, $\sum_{i \in g_t} \hite_{\etainvi} \in [-G, G]$; thus, implying
   \begin{equation}
     \sum_{i \in g_t} \widehat \phi_{\eta^{-1}(i)} \in \left [ - G - \sum_{i\in g_t}\psi_{\eta^{-1}(i)}, G - \sum_{i\in g_t}\psi_{\eta^{-1}(i)} \right ]\mper
   \end{equation}
   For $\widebar g$ we have that $Z_{\widebar \beta(i)} = 1$ for $\widebar n_1$ units, implying
   \begin{equation}
     \sum_{i \in \widebar g} \widehat \phi_{\eta^{-1}(i)} \in \left [ - \widebar G - \sum_{i\in g_t}\psi_{\eta^{-1}(i)}, \widebar G - \sum_{i\in g_t}\psi_{\eta^{-1}(i)} \right ].
   \end{equation}
   Consequently, $\sum_{i \in g_t}\widehat \phi_{\eta^{-1}(i)}$ is conditionally sub-Gaussian with variance proxy $(2G)^2 / 4 = G^2$ for all $t \in [T]$ and the same sum but over $ i \in \widebar g$ is conditionally sub-Gaussian but with variance proxy $\widebar G^2$, and thus
   \begin{equation}
     \EE \left [ \exp \left \{ \lambda  \sum_{i\in g_t} \widehat \phi_{\eta^{-1}(i)}  - \lambda^2 
            \frac{ G^2}{2} \right \} \Bigm | \eta \right ] \leq 1 ~~\text{and}~~\EE \left [ \exp \left \{ \lambda  \sum_{i\in \widebar g} \widehat \phi_{\eta^{-1}(i)}  - \lambda^2 
            \frac{ \widebar G^2}{2} \right \} \Bigm | \eta \right ] \leq 1,
   \end{equation}
   so that when paired with the results of Step 1,
   \begin{equation}
     \EE \left [ M_n \mid \eta \right ] \leq 1.
   \end{equation}
   Marginalizing over the distribution of $\eta$, we have that $\EE[M_n] = \EE[\EE[M_n \mid \eta]] \leq 1$.

   \paragraphref{Step 3: Inverting Markov's inequality, optimizing the width, and union bounding.}
   Applying Markov's inequality to $M_n$, we have that with probability at least $(1-\alpha)$,
    \begin{align}
      \lambda  \left ( \sum_{t=1}^{T} \sum_{i\in g_t} \left [ \widehat \psi_{\eta^{-1}(i)} - \psi_{\eta^{-1}(i)} \right ] + \sum_{i \in \widebar g} \left [ \widehat \psi_{\eta^{-1}(i)} - \psi_{\eta^{-1}(i)} \right ]  \right ) - \lambda^2 
      \frac{\left ( T G^2 + \widebar G^2 \right )}{2} < \log (1/\alpha).
    \end{align}
    Rearranging the above, we find that with probability at least $(1-\alpha)$,
    \begin{align}
      \frac{1}{n} \sum_{i=1}^{n} \left [ \widehat \psi_{\eta^{-1}(i)} - \psi_{\eta^{-1}(i)} \right ]  < \lambda 
      \frac{\left ( T G^2 + \widebar G^2 \right )}{2n} +  \frac{\log (1/\alpha)}{\lambda n}.
    \end{align}
    Finding the value of $\lambda^\star$ that minimizes the right-hand side of the above inequality, we obtain
    \begin{equation}\label{eq:bound-before-optimizing-lambda}
      \lambda^\star \coloneqq \sqrt{\frac{2\log(1/\alpha)}{T G^2 + \widebar G^2}},
    \end{equation}
    which, when plugged into \eqref{eq:bound-before-optimizing-lambda}, yields that with probability at least $(1-\alpha)$,
    \begin{align}
      \frac{1}{n} \sum_{i=1}^{n} \left [ \widehat \psi_{\eta^{-1}(i)} - \psi_{\eta^{-1}(i)} \right ]  < \sqrt{\frac{2 \log(1/\alpha)  (T G^2 + \widebar G^2)}{n^2}}.
    \end{align}
    Repeating the aforementioned one-sided bound using $-\widehat \phi_{\etainvi}$, we have that with probability at least $(1-\alpha)$,
    \begin{equation}
      \left \lvert \widehat \psi - \psi \right \rvert < \underbrace{\sqrt{\frac{2 \log(2/\alpha)  (T G^2 + \widebar G^2)}{n^2}}}_{\eqqcolon b_n(\pi)},
    \end{equation}
    which completes the proof of \cref{thm:hoeffding-mbcr}.
\end{proof}

\begin{proof}[\proofref{corollary:scaling-hoeffding-mbcr}]
    We will now analyze the width of the Hoeffding-style confidence interval given in \cref{thm:hoeffding-mbcr}, and illustrate its $\flatfrac{1}{\sqrt{n \pi}}$ scaling.
    
    Notice that $b_n(\pi)$ can be equivalently written as
    \begin{equation}\label{eq:proof-bnpi}
      b_n(\pi) \equiv \sqrt{\frac{T G^2 + \widebar G^2}{n}}  \sqrt{\frac{2\log(2/\alpha)}{n}}.
    \end{equation}
    In this step, we will show that the first factor is in some cases identically equal to $1/\sqrt{\pi}$, and in others can be upper-bounded by a quantity with the same scaling in the large-$n$, small-$\pi$ regime. We consider three exhaustive cases: $\widebar n_1 \in \{0, 1, 2\}$.

    \paragraphref{Case I: $\widebar n_1 = 0$.}
    The case where $\widebar n_1 = 0$---equivalently, when $\left \lceil 1/\pi \right \rceil = 1/\pi$---is the simplest since we have by construction that $\widebar G = n - TG = n - n_1/\pi = 0$, and hence
    \begin{align}
      b_n(\pi) &= \sqrt{\frac{TG^2}{n}}  \sqrt{\frac{2\log(2/\alpha)}{n}} \\
      \ifverbose %
      &= \sqrt{\frac{n_1 / \pi^2}{n}}  \sqrt{\frac{2\log(2/\alpha)}{n}} \\
      \fi %
      &= \sqrt{\frac{\cancel{n} \cancel{\pi} / \pi^{\cancel{2}}}{\cancel{n}}}  \sqrt{\frac{2\log(2/\alpha)}{n}} \\
      &= \frac{1}{\sqrt{\pi}}  \sqrt{\frac{2\log(2/\alpha)}{n}},
    \end{align}
    and hence we observe scaling of $1/\sqrt{n \pi}$ as alluded to before.

    \paragraphref{Case II: $\widebar n_1 = 1$.}
    In the case where $\widebar n_1 = 1$, we no longer have that $\left \lceil 1/\pi \right \rceil = 1/\pi$, and hence we will use the fact that $\left \lceil 1/\pi \right \rceil \leq 1/\pi + 1$ and justify why this conservative upper bound preserves the $1/\sqrt{\pi}$ scaling in the finite-sample, small-$\pi$ regime. 
    
    Indeed, notice that by definition of $\widebar G$ and the fact that $G = \lceil 1/\pi \rceil \geq 1/\pi$
    \begin{align}
        \widebar G &= n - TG \\
        &= n - (n_1 - \widebar n_1) G\\
        &= n - n_1G + \widebar n_1 G \\
        &\leq n - n_1 (1/\pi) + \widebar n_1 G \\
        &= \widebar n_1 G \mper
    \end{align}
    Plugging this upper bound into the first factor of \eqref{eq:proof-bnpi}, we observe that   \begin{align}
        \sqrt{\frac{TG^2 + \widebar G^2}{n}} &\leq \sqrt{\frac{TG^2 + (\widebar n_1 G)^2}{n}} \\
        &= \sqrt{\frac{\Paren{T + \widebar n_1^2} G^2}{n}}\\
        &\leq \sqrt{\frac{\Paren{T+ \widebar n_1^2}\Paren{1/\pi + 1}^2}{n}} \mcom
    \end{align}
    where in the last inequality we have used the fact that $G = \lceil 1/\pi \rceil \leq 1/\pi + 1$. Recall that $n_1 = n\pi$ and observe that by construction $T = n_1 - \barn_1$, allowing us to proceed as follows
    \begin{align}
        \sqrt{\frac{\Paren{T+ \widebar n_1^2}\Paren{1/\pi + 1}^2}{n}}
        &= \sqrt{\frac{\Paren{n_1 - \widebar n_1 + \widebar n_1^2}\Paren{1/\pi + 1}^2}{n}} \\
        &= \sqrt{\frac{n_1 \Paren{1/\pi + 1}^2 + \Paren{\widebar n_1^2 - \widebar n_1}\Paren{1/\pi + 1}^2}{n}} \\
        &= \sqrt{\frac{n\pi \Paren{\flatfrac{(1 + \pi)^2}{\pi^2}} + \Paren{\widebar n_1^2 - \widebar n_1}\Paren{1/\pi + 1}^2}{n}} \\
        &= \sqrt{\frac{\Paren{1 + \pi}^2}{\pi} + \frac{\Paren{\widebar n_1^2 - \widebar n_1}\Paren{1/\pi + 1}^2}{n}}\mper
        \label{eq:proof-bnpi-with-nbar}
    \end{align}
    Observe that when $\widebar n_1 = 1$, the second term inside the above square root is zero, allowing us to upper bound $b_n(\pi)$ as
    \begin{equation}
        b_n(\pi) \leq \sqrt{\frac{(1 + \pi)^2}{\pi}}\sqrt{\frac{2 \logs{2/\alpha}}{n}} \mcom
    \end{equation}
    paying a $(1+\pi)$ price in width over the case where $\lceil 1/\pi \rceil  = 1/\pi$, which is negligible in the small-$\pi$ regime and only $9/4$ in the worst case, when $\pi = 1/2$.

    \paragraphref{Case III: $\widebar n_1 = 2$.} In the final case where $\widebar n_1 = 2$, we can see that \eqref{eq:proof-bnpi-with-nbar} is equal to
    \begin{equation}
        \sqrt{\frac{\Paren{1 + \pi}^2}{\pi} + \frac{2\Paren{1/\pi + 1}^2}{n}}\mcom
    \end{equation}
    which has the same scale as before in the large-$n$ regime. This completes the proof.
\end{proof}

\subsection{Sub-Bernoulli confidence sets for the Horvitz-Thompson estimator}\label{section:proof-sub-bernoulli-ci}
We will handle the proofs of \cref{thm:sub-bernoulli-cis} under Bernoulli and mini-batch complete randomization simultaneously. Recall that under mini-batch complete randomization we work with the following slightly modified estimator for the individual treatment effects
\begin{equation}
    \estimator_{\etainvi} \coloneqq Y_{\etainvi}\Paren{\frac{Z_{\beta(i)}}{1/G} - \frac{1-Z_{\beta(i)}}{1 - 1/G}} \mcom 
\end{equation}
where both $\eta^{-1}(\cdot)$ and $\beta(\cdot)$ are identity maps under Bernoulli randomization, and they are the unit-wide and withing sub-group random permutations under mini-batch complete randomization. We use the above random variables for units belonging to the groups $g_t$ for $t \in [T]$, where under Bernoulli randomization we let $g_t = \Set{t}$ for $t \in [n]$ and under mini-batch complete randomization $g_t$ is the set of units belonging to group $t \in [T]$. Moreover, recall that
under mini-batch complete randomization we might have a final group $\widebar{g}$. To simplify the notation in this group we define $\widebar{T} \coloneqq T + \Ind{\barG > 0}$ and as a convention throughout this proof we use $g_{\widebar{T}}$ to denote group $\widebar{g}$ whenever $\Ind{\barG > 0}$. Under Bernoulli randomization we let $\widebar{T} = n$. Moreover, the estimators for the individual treatment effects for units in group $g_{\widebar{T}}$ is given by 
\begin{equation}
  \widehat \psi_{\eta^{-1}(i)} \coloneqq \frac{Z_{\widebar \beta(i)} Y_{\eta^{-1}(i)}}{1/\widetilde G} - \frac{(1-Z_{\widebar \beta(i)}) Y_{\eta^{-1}(i)}}{1-1/\widetilde G},
\end{equation}
where we recall that $\widetilde G = \widebar G / \widebar n_1$. 

Throughout the proof we will be working with the following centered random variable
\begin{equation}
    \phi_t \coloneqq \ghate_t - \vartheta_t \mcom \quad\mathrm{where}\quad \ghate_t \coloneqq \sum_{i \in g_t} \estimator_{\etainvi} \quad\mathrm{and}\quad \vartheta_t \coloneqq \Ex{\ghate_t}
\end{equation}
for each $t \in [\widebar{T}]$.
For most of the following proof, we will be working with the centered random variable $\phi_t$. Because in this work we are under a setting in which the potential outcomes take values within the unit interval (Assumption~\ref{assumption:generalized-potential-outcomes}), the centered random variable takes values $\phi_t \in [-2G, 2G]$ for $t \in [T]$ and takes values $\phi_{\widebar{T}} \in [-2 \widebar{G}, 2\widebar{G}]$ for the last group under mini-batch complete randomization. Under Bernoulli randomization the centered random variable takes values $\phi_t \in [-1/(1-\pi) - 1, 1/\pi + 1 ]$ for each $t \in [\widebar{T}]$. To unify the notation, in the following proof we will be denoting the lower bound of $\phi_t$ as $a_t$ and its upper bound as $b_t$ so that $\phi_t \in [a_t, b_t]$ for each $t \in [T]$.

\begin{proof}[\proofref{thm:sub-bernoulli-cis}]
    The proof proceeds in three steps. In the first step we show that a particular exponential random variable is an $e$-value. In the second step we apply Markov's inequality and union bound over both the upper and lower bounds of the centered estimator. In the third step we write the resulting bound from step two in terms of the specific quantities under Bernoulli and mini-batch complete randomization.

    \paragraphref{Step 1: Showing that an exponential random variable is an $e$-value.} We start by considering the following exponential random variable $M_T$ for any $\lambda \in \R$:
    \begin{equation}
        M_T \coloneqq \exps{\sum_{t=1}^{\widebar{T}} \Paren{\lambda \phi_t - \gamma_{B,t}(\lambda)}} \mcom
    \label{eq:sub-bernoulli-rv}
    \end{equation} 
    where we have defined the function
    \begin{equation}
        \gamma_{B, t}(\lambda) \coloneqq \logs{\frac{b_t}{b_t - a_t}e^{\lambda a_t} - \frac{a_t}{b_t - a_t}e^{\lambda b_t}} \mper
    \end{equation}

    Recall that our current goal is to show that the \eqref{eq:sub-bernoulli-rv} is an $e$-value. By construction we know that $M_{\widebar{T}} \geq 0$ almost surely; so, all it remains to do is to show that $\Ex{M_{\widebar{T}}} \leq 1$. By the law of total expectation we have that $\Ex{M_{\widebar{T}}} = \Ex{\Ex{M_{\widebar{T}} ~\vert~ \eta}}$. So for now we will focus on upper bounding the conditional expectation. Indeed, due to the independence between the groups when we condition on $\eta$ we can rewrite the conditional expectation as
    \begin{equation}
        \Ex{M_{\widebar{T}} ~\vert~ \eta} = \Ex{\prod_{t=1}^{\widebar{T}} \exps{\lambda \phi_t - \gamma_{B,t}(\lambda)} \svert \eta} = \prod_{t=1}^{\widebar{T}} \Ex{\exps{\lambda \phi_t - \gamma_{B,t}(\lambda)} ~\vert~ \eta} \mper
        \label{eq:conditional-expectation-sub-bernoulli-rv}
    \end{equation}
    We appeal to \citet[Lemma 1]{hoeffding_probability_1963} which states that
    \begin{equation}
        \Ex{\exps{\lambda \phi_t}} \leq \frac{b_t}{b_t - a_t} e^{\lambda a_t} - \frac{a_t}{b_t - a_t} e^{\lambda b_t}
        \label{eq:sub-bernoulli-hoeffding-bound}
    \end{equation}
    for any $\lambda \in \R$ and $t = 1, \dots, \widebar{T}$, and we can see how 
    \eqref{eq:sub-bernoulli-hoeffding-bound} implies
    \begin{equation}
        \Ex{\exps{\lambda \phi_t - \gamma_{B,t}(\lambda)} ~\vert~ \eta} \leq 1 \mper
        \label{eq:sub-bernoulli-boundary}
    \end{equation}
    Hence, putting \eqref{eq:conditional-expectation-sub-bernoulli-rv} and \eqref{eq:sub-bernoulli-boundary} together we can conclude that
    \begin{equation}
        \Ex{M_{\widebar{T}} ~\vert~ \eta} \leq 1 \mper
    \end{equation}
    Moreover, due to the law of total expectation we can, further, conclude that $\Ex{M_{\widebar{T}}} \leq 1$, showing that $M_{\widebar{T}}$ is an $e$-value.

    \paragraphref{Step 2: Inverting Markov's inequality and union bounding.}
    We now proceed to apply Markov's inequality to $M_{\widebar{T}}$, from which we can see that
    \begin{equation}
        \PP\Paren{M_{\widebar{T}} \geq 1/\alpha} \leq \frac{\Ex{M_{\widebar{T}}}}{1/\alpha} \leq \alpha \mper
    \end{equation}
    Thus, we have that with probability at least $(1-\alpha)$,
    \begin{equation}
        \exps{\sum_{t=1}^{\widebar{T}} \Paren{\lambda \phi_t - \gamma_{B,t}(\lambda)}} \leq 1/\alpha \mper
    \end{equation}
    Rearranging the above terms, we find that with probability at least $(1-\alpha)$,
    \begin{equation}
        \frac{1}{n} \sum_{t=1}^{\widebar{T}} \ghate_t - \vartheta_t \leq \frac{\logs{1/\alpha} + \sum_{t=1}^{\widebar{T}} \gamma_{B, t}(\lambda)}{\lambda n} \mper
    \end{equation}
    Applying the same bound to $-\phi_1, \dots, -\phi_{\widebar{T}}$ and union bounding we have that with probability at least $(1-\alpha)$,
    \begin{equation}
        \Abs{\frac{1}{n} \sum_{t=1}^{\widebar{T}} \ghate_t - \vartheta_t} \leq \frac{\logs{2/\alpha} + \sum_{t=1}^{\widebar{T}} \gamma_{B, t}(\lambda)}{\lambda n} \mper
        \label{eq:proof-sub-bernoulli-ci-bound}
    \end{equation}

    \paragraphref{Step 3: Specializing the confidence interval to the respective randomization procedure.}
    Notice how the bound from \eqref{eq:proof-sub-bernoulli-ci-bound} is written in terms of $a_t$ and $b_t$ for $t \in [\widebar{T}]$, and that it is written in terms of a unified notation that allowed us to prove the result under mini-batch and Bernoulli randomization concurrently. We now proceed to specialize the confidence interval from \eqref{eq:proof-sub-bernoulli-ci-bound} for mini-batch complete randomization and Bernoulli randomization.
    
    \emph{Confidence interval under mini-batch complete randomization:} We first focus on the mini-batch complete randomization setting, in which the bounds of the centered random variables $\phi_t$ take the values $a_t \coloneqq -2G$ and $b_t \coloneqq 2G$ for each $t \in [T]$ and $a_{\widebar{T}} \coloneqq -2\widebar{G}$ and $b_{\widebar{T}} \coloneqq 2 \widebar{G}$ for the last group $g_{\widebar{T}}$. We now proceed to write $\gamma_{B,t}(\lambda)$ in terms of the bounds under mini-batch complete randomization. As such, the cumulant generating function for $t \in [T]$ becomes
    \begin{align}
        \gamma_{B, t}(\lambda) &= \logs{\frac{b}{b-a} e^{\lambda a} - \frac{a}{b-a} e^{\lambda b}} \\
        &= \logs{\frac{2G}{4G} e^{-2G \lambda} + \frac{2G}{4G} e^{2G \lambda}} \\
        &= \logs{\frac{1}{2} e^{-2G \lambda} + \frac{1}{2} e^{2G \lambda}} \mper
    \end{align}
    Following the same steps we can see how the cumulant generating for group $g_{\widebar{T}}$ is
    \begin{equation}
        \gamma_{B, \widebar{T}} = \logs{\frac{1}{2}e^{-2\widebar{G}\lambda} + \frac{1}{2}e^{2\widebar{G}\lambda}} \mper
    \end{equation}
    Substituting the above cumulant generating functions to the boundary of \eqref{eq:proof-sub-bernoulli-ci-bound} gives us that with probability at least $(1-\alpha)$,
    \begin{equation}
        \Abs{\frac{1}{n} \sum_{t=1}^{\widebar{T}} \ghate_t - \vartheta_t} \leq 
        \frac{\logs{2/\alpha} + T \logs{\flatfrac{e^{-2G \lambda}}{2}  + \flatfrac{e^{2G \lambda}}{2}} + \logs{e^{-2\widebar{G}\lambda}/2 + e^{2\widebar{G}\lambda}/2}}{\lambda n} \mcom
    \label{eq:proof-sub-bernoulli-ci-mbcr}
    \end{equation}
    for any $\lambda \in \R$.

    \emph{Boundary under Bernoulli randomization:}
    Under Bernoulli randomization the bounds of the centered random variables take the values $a_t \coloneqq -1/(1-\pi) - 1$ and $b_t \coloneqq 1/\pi + 1$ for each $t \in [n]$. Hence, we will drop the dependency on $t$ in $\gamma_{B,t}(\lambda)$ and we will write the cumulant generating function as
    \begin{equation}
        \gamma_{B}(\lambda) \coloneqq \logs{\frac{b}{b-a}e^{\lambda a} - \frac{a}{b-a}e^{\lambda b}}
    \end{equation}
    for $a \coloneqq -1/(1-\pi) - 1$ and $b \coloneqq 1/\pi + 1$.
    So, we can immediately see how \eqref{eq:proof-sub-bernoulli-ci-bound} implies that with probability $(1-\alpha)$,
    \begin{equation}
        \Abs{\frac{1}{n} \sum_{i=1}^n \hite_i - \ite_i} \leq \frac{\logs{2/\alpha} + n \gamma_{B}(\lambda)}{\lambda n} \mper
    \label{eq:proof-sub-bernoulli-ci-bernoulli}
    \end{equation}
\end{proof}

\begin{proof}[Scaling of the sub-Bernoulli confidence intervals]
    We will now show that the sub-Bernoulli confidence intervals from \eqref{eq:proof-sub-bernoulli-ci-mbcr} and \eqref{eq:proof-sub-bernoulli-ci-bernoulli} scale as $\Bigoh(1/\sqrt{n\pi})$.
    
    \paragraphref{Scaling under mini-batch complete randomization.}
    
    We will start by showing that for any $\lambda \in \R$,
    \begin{equation}
        \frac{\logs{\frac{1}{2} e^{-2G \lambda} + \frac{1}{2} e^{2G \lambda}} + \logs{\frac{1}{2} e^{-2\widebar{G} \lambda} + \frac{1}{2} e^{2\widebar{G} \lambda}}}{\lambda^2 / 2} \to 4G^2 + 4\widebar{G}^2 \quad\mathrm{as}\quad \lambda \to 0 \mper
        \label{eq:proof-scaling-of-sub-bernoulli-mbcr-initial-scaling}
    \end{equation}
    By one application of L'H\^opital's rule we can reduce the limit from \eqref{eq:proof-scaling-of-sub-bernoulli-mbcr-initial-scaling} to one which is given by
    \begin{equation}
        \frac{\Paren{-Ge^{-2G \lambda} + G e^{2G\lambda}} \big/ \Paren{e^{-2G\lambda} / 2 + e^{2G\lambda}/2} + \Paren{-\widebar{G}e^{-2\widebar{G} \lambda} + \widebar{G} e^{2\widebar{G} \lambda}} \big/ \Paren{e^{-2\widebar{G} \lambda} / 2 + e^{2\widebar{G} \lambda}/2}}{\lambda} \mper
    \end{equation}
    Through another application of L'H\^opital's rule, taking $\lambda \to 0$, and simplifying the resulting expression we can see how
    \begin{equation}
        \frac{\logs{\frac{1}{2} e^{-2G \lambda} + \frac{1}{2} e^{2G \lambda}} + \logs{\frac{1}{2} e^{-2\widebar{G} \lambda} + \frac{1}{2} e^{2\widebar{G} \lambda}}}{\lambda^2 / 2} \to 4G^2 + 4\widebar{G}^2 \mper
    \end{equation}

    Throughout the rest of the analysis of the scaling of the sub-Bernoulli confidence interval under mini-batch complete randomization we will let
    \begin{equation}
        \lambda_{\MBCR}  \coloneqq \sqrt{\frac{2\logs{2/\alpha}}{4TG^2 + 4\widebar{G}^2}} \mcom
    \end{equation}
    but for notational simplicity we will work with $\lambda \equiv \lambda_{\MBCR}$ for the remainder of the scaling analysis.
    Using the above limit and the definition of $\lambda_{\MBCR}$ we proceed to analyze the scaling of \eqref{eq:proof-sub-bernoulli-ci-mbcr} as follows
    \begin{align}
       &\frac{\logs{2/\alpha} + T \logs{\flatfrac{e^{-2G \lambda}}{2}  + \flatfrac{e^{2G \lambda}}{2}} + \logs{\flatfrac{e^{-2\widebar{G} \lambda}}{2}  + \flatfrac{e^{2\widebar{G} \lambda}}{2}}}{\lambda n}\\ 
       &\quad= \frac{\logs{2/\alpha} + (\lambda^2/2)\Brac{T\logs{\flatfrac{e^{-2G \lambda}}{2}  + \flatfrac{e^{2G \lambda}}{2}} + \logs{\flatfrac{e^{-2\widebar{G} \lambda}}{2}  + \flatfrac{e^{2\widebar{G} \lambda}}{2}}} \big/ (\lambda^2/2)}{\lambda n} \\
       &\quad\asymp \frac{\logs{2/\alpha} + (\lambda^2 / 2) \Brac{T 4G^2 + 4\widebar{G}^2}}{\lambda_ n} \\
       &\quad= \frac{\logs{2/\alpha} + \logs{2/\alpha}}{\lambda n}
       \label{eq:proof-scaling-of-sub-bernoulli-mbcr-first-substitution-of-lambda} \\
       &\quad= \frac{2 \logs{2/\alpha}}{n \sqrt{\Paren{2\logs{2/\alpha}} \big/ \Paren{4TG^2 + 4\widebar{G}^2}}}
       \label{eq:proof-scaling-of-sub-bernoulli-mbcr-second-substitution-of-lambda} \\
       &\quad= \frac{\sqrt{8 \logs{2/\alpha} \Paren{TG^2 + \widebar{G}^2}}}{n}\mcom
       \label{eq:proof-scaling-of-sub-bernoulli-mbcr-intermediate-scaling}
    \end{align}
    where \eqref{eq:proof-scaling-of-sub-bernoulli-mbcr-first-substitution-of-lambda} and \eqref{eq:proof-scaling-of-sub-bernoulli-mbcr-second-substitution-of-lambda} follow from substituting in the value of $\lambda_{\MBCR}$ and simplifying the expression.

    \emph{Case I: $\barn_1 = 0$.}\quad
    For the scenario in which $\barn_1 = 0$---meaning that $\pi = 1/K$ for an integer $K \geq 2$---we have that $TG = n$ and that $\widebar{G} = 0$. In this scenario \eqref{eq:proof-scaling-of-sub-bernoulli-mbcr-intermediate-scaling} can be further simplified to
    \begin{equation}
       \eqref{eq:proof-scaling-of-sub-bernoulli-mbcr-intermediate-scaling}
       = \sqrt{\frac{8\logs{2/\alpha}}{n\pi}} \mcom
       \label{eq:proof-scaling-of-sub-bernoulli-mbcr-value-of-G}
    \end{equation}

    \emph{Case II: $\barn_1 = 1$.}\quad
    For this scenario we follow the same steps as in the proof of \cref{corollary:scaling-hoeffding-mbcr} found in \cref{section:proof-hoeffding-mbcr} to upper bound $\sqrt{TG^2 + \widebar{G}^2}$ as
    \begin{equation}
        \sqrt{TG^2 + \widebar{G}^2} \leq \sqrt{n ((1 + \pi)^2 / \pi) + (\barn_1^2 - \barn_1)(1/\pi + 1)^2} \mper
    \label{eq:proof-scaling-of-sub-bernoulli-mbcr-bound-on-scaling}
    \end{equation}
    Hence, for the case in which $\barn_1 = 1$ we have that
    \begin{equation}
        \frac{\sqrt{8 \logs{2/\alpha} \Paren{TG^2 + \widebar{G}^2}}}{n} \leq \sqrt{\frac{8\logs{2/\alpha} (1 + \pi)^2}{n\pi}} \mper
    \end{equation}

    \emph{Case III: $\barn_1 = 2$.}\quad
    We use similar arguments as in Case II and leverage the upper bound from \eqref{eq:proof-scaling-of-sub-bernoulli-mbcr-bound-on-scaling}. Nonetheless, for the scenario in which $\barn_1 = 2$ we have that
    \begin{align}
        \frac{\sqrt{8 \logs{2/\alpha} \Paren{TG^2 + \widebar{G}^2}}}{n} &\leq
        \frac{\sqrt{8\logs{2/\alpha} \Brac{n((1 + \pi)^2 / \pi) + 2(1/\pi)^2}}}{n} \\
        &\leq \sqrt{\frac{8 \logs{2/\alpha}(1+\pi)^2}{n\pi}} + \frac{\sqrt{16\logs{2/\alpha}(1/\pi)^2}}{n}
        \mcom
    \end{align}
    where the last inequality follows from the subadditivity of the square root function.

    We can, thus, see that in all three cases \eqref{eq:proof-sub-bernoulli-ci-mbcr}
    scales as $\Bigoh\Paren{1/\sqrt{n\pi}}$.

    \paragraphref{Scaling under Bernoulli randomization.}
    We will start by showing that for any $\lambda  \in \R$,
    \begin{equation}
        \frac{\gamma_{B}(\lambda)}{\lambda^2 / 2} \to \Paren{\frac{1}{1-\pi} + 1} \Paren{\frac{1}{\pi} + 1} \quad\mathrm{as}\quad \lambda \to 0 \mper
        \label{eq:proof-scaling-sub-bernoulli-bernoulli-initial-scaling}
    \end{equation}
    Indeed, by one application of L'H\^opital's rule we can reduce the limit from \eqref{eq:proof-scaling-sub-bernoulli-bernoulli-initial-scaling} to one which is given by
    \begin{equation}
        \lim_{\lambda \to 0}\frac{\frac{ab}{b-a}\Paren{e^{\lambda a} - e^{\lambda b}} \big/\Paren{\frac{b}{b-a}e^{\lambda a} - \frac{a}{b-a}e^{\lambda b}}}{\lambda} \mper
    \label{eq:proof-scaling-sub-bernoulli-bernoulli-first-lhopitals}
    \end{equation}
    Through another application of L'H\^opital's rule and simplifying the resulting expression we can see from
    \eqref{eq:proof-scaling-sub-bernoulli-bernoulli-first-lhopitals} that
    \begin{equation}
        \lim_{\lambda \to 0}\frac{\gamma_{B}(\lambda)}{\lambda^2 / 2} = \Paren{\frac{1}{1-\pi} + 1} \Paren{\frac{1}{\pi} + 1} \mper
    \end{equation}

    Throughout the rest of the analysis of the scaling of the sub-Bernoulli confidence interval under Bernoulli randomization we will let
    \begin{equation}
        \lambda_{\tBern} \coloneqq \sqrt{\frac{2 \logs{2/\alpha}}{n \Paren{1/[1-\pi] + 1}\Paren{1/\pi + 1}}} \mcom
    \end{equation}
    but to simplify the notation in the remainder of the analysis we will let $\lambda \equiv \lambda_{\tBern}$
    Using the above limit and the definition of $\lambda_{\Bern}$ we proceed to analyze the scaling of \eqref{eq:proof-sub-bernoulli-ci-bernoulli} as follows
    \begin{align}
        \frac{\logs{2/\alpha} + n \gamma_{B}(\lambda)}{\lambda_ n} &= \frac{\logs{2/\alpha} + (\lambda^2 / 2)n\gamma_{B}(\lambda) \big/ (\lambda^2 / 2)}{\lambda n} \\
        &\asymp \frac{\logs{2/\alpha} + \Paren{\lambda^2 / 2} n \Paren{1/[1-\pi] + 1}\Paren{1/\pi + 1}}{\lambda n} \\
        &= \frac{\logs{2/\alpha} + \logs{2/\alpha}}{\lambda n} 
        \label{eq:proof-scaling-sub-bernoulli-scaling-lambda_substitution-one}\\
        &= \sqrt{\frac{2\logs{2/\alpha} \Paren{\frac{1}{1-\pi} + 1}\Paren{\frac{1}{\pi} + 1}}{n}}
        \label{eq:proof-scaling-sub-bernoulli-scaling-lambda_substitution-two}\\
    \end{align}
    where \eqref{eq:proof-scaling-sub-bernoulli-scaling-lambda_substitution-one} and \eqref{eq:proof-scaling-sub-bernoulli-scaling-lambda_substitution-two} follow from substituting in the value of $\lambda_{\Bern}$ and simplifying the expression. We  
    can in turn see how in the large-$n$, small-$\pi$ regime \eqref{eq:proof-sub-bernoulli-ci-bernoulli} scales as
    \begin{equation}
        \sqrt{\frac{4 \logs{2/\alpha}}{n\pi}} \mper
    \end{equation}
    Nonetheless, we can further upper bound \eqref{eq:proof-scaling-sub-bernoulli-scaling-lambda_substitution-two} as follows
    \begin{align}
    \eqref{eq:proof-scaling-sub-bernoulli-scaling-lambda_substitution-two}
        &\leq \sqrt{\frac{6 \logs{2/\alpha} \Paren{\frac{1}{\pi} + 1}}{n}} \label{eq:proof-scaling-sub-bernoulli-scaling-one-minus-pi-upper-bound}\\
        &= \sqrt{\frac{6\logs{2/\alpha}}{n\pi} + \frac{6 \logs{2/\alpha}}{n}} \mcom
    \end{align}
    where \eqref{eq:proof-scaling-sub-bernoulli-scaling-one-minus-pi-upper-bound} follows from the upper bound $1/(1-\pi) \leq 2$. We, thus, have that \eqref{eq:proof-sub-bernoulli-ci-bernoulli} scales as $\Bigoh \Paren{1/\sqrt{n\pi}}$, completing our proof.
\end{proof}

\subsection{Studentized variance-adaptive confidence interval}\label{section:proof-empirical-bernstein}
In this subsection we provide the proof of the empirical-Bernstein confidence interval from \cref{theorem:cross-fit-empirical-bernstein}. Hence, throughout we will be working with the following ``mirrored'' estimators
\begin{equation}
\label{eq:proof-mirrored-estimators}
    \hite_\etainvi^\Mirrl \coloneqq Y_{\etainvi}  \Paren{\frac{Z_\betai}{\flatfrac{1}{G}} - \frac{1 - Z_\betai}{1 - \flatfrac{1}{G}}} \quad\text{and}\quad
    \hite_\etainvi^\Mirru \coloneqq (Y_\etainvi - 1)  \Paren{\frac{Z_\betai}{\flatfrac{1}{G}} - \frac{1 - Z_\betai}{1 - \flatfrac{1}{G}}} \mcom
\end{equation}
used in the lower and upper confidence sets, respectively, for units who belong to groups $g_t$ for $t = 1, \dots, T$. We note that while the two estimators have different forms, their means are identical since $Z_{\beta(i)} / (1/G) - (1-Z_{\beta(i)}) / (1-1/G)$ has mean zero conditional on $\eta$. Meanwhile, the ``mirrored'' estimators we use for units $i$ who belong to group $\widebar g$ are
\begin{equation}
\label{eq:proof-mirrored-estimators-extra-group}
    \hite_\etainvi^\Mirrl \coloneqq Y_{\etainvi}  \Paren{\frac{Z_\betai}{\flatfrac{1}{\widetilde G}} - \frac{1 - Z_\betai}{1 - \flatfrac{1}{\widetilde G}}} \quad\text{and}\quad
    \hite_\etainvi^\Mirru \coloneqq (Y_\etainvi - 1)  \Paren{\frac{Z_\betai}{\flatfrac{1}{\widetilde G}} - \frac{1 - Z_\betai}{1 - \flatfrac{1}{\widetilde G}}} \mper
\end{equation}
We will be simultaneously handling the cases in which the treatment assignments are determined through Bernoulli or mini-batch complete randomization. Both of the maps $\eta^{-1}(\cdot)$ and $\beta(\cdot)$ are identity maps under Bernoulli randomization, and they are the unit-wide and withing sub-group random permutations under mini-batch complete randomization.

Throughout the proof, we will be working with the following random variables
\begin{equation}
    \ghate^*_t \coloneqq \sum_{i \in g_t} \hite_\etainvi^* \quad\text{for}~* \in \Set{\Mirrl, \Mirru} \mcom
\end{equation}
where under Bernoulli randomization $g_t = \Set{t}$ for $t \in [n]$, and under mini-batch complete randomization $g_t$ it is the set of units belonging to group $t$ for $t = 1, \dots, \widebar T$, where we define $\widebar T = T + \Ind{\widebar G > 0}$ and as a convention throughout this proof we will use $g_{\widebar T}$ to denote the group $\widebar g$ whenever $\Ind{\widebar G > 0} = 1$.

\begin{proof}[\proofref{theorem:cross-fit-empirical-bernstein}]
    This proof will proceed in three steps. First, we will show that the random variable~\eqref{eq:proof:emp-berns-cross-fit-rv} forms an $e$-value. We will then apply Markov's inequality to obtain a high-probability upper-bound on the the difference between the Horvitz-Thompson estimator and the average treatment effect, giving us a lower confidence set for the average treatment effect. We will then perform the previous two steps on a ``mirrored'' estimator in order to derive a high-probability upper-bound, which will then yield an upper confidence set for the same effect. Finally, we will union bound over these two confidence sets to arrive at the desired confidence interval.
    
    Throughout this proof we will be working with two quantities, $\cD_1 \coloneqq \Set{\ghate^{\Mirrl}_1, \dots, \ghate^{\Mirrl}_{m_1}}$ and $\cD_2 \coloneqq \Set{\ghate^{\Mirrl}_{m_1 + 1}, \dots, \ghate^{\Mirrl}_{\widebar T}}$ which denote the sample splits, and where we define $m_1 \coloneqq \lfloor \flatfrac{\widebar T}{2} \rfloor$ and $m_2 \coloneqq \widebar T - m_1$. For mini-batch complete randomization we will view $\eta$ as the permutation that induces the ordering of the groups $g_t$, and in the case of Bernoulli randomization we will think of $\eta$ as inducing the ordering of the units, and, thus, the sample splits. 
    Moreover, throughout this proof we will be working with the following exponential random variable
    \begin{equation}
    \label{eq:proof:emp-berns-cross-fit-rv}
        E_{m_1}^{\Mirrl}\brackone \coloneqq \exps{\lambda^{\Mirrl}_2 \sum_{t=1}^{m_1}\Paren{\ghate^{\Mirrl}_t - \vartheta_t} - \gamma_{E,c}(\lambda^{\Mirrl}_2)  V_{m_1}^{\Mirrl}\brackone} \mcom
    \end{equation}
    where $V_{m_1}^{\Mirrl}\brackone$, $\lambda^{\Mirrl}_2$, and $\gamma_{E,c}(\lambda) : (0, 1/c) \to [0, \infty)$ are defined as in \eqref{eq:main-paper-emp-Bernstein-avg-variance-definition} and \eqref{eq:main-paper-emp-Berns-gamma-lambda-definition}. The function $\gamma_{E,c}(\lambda)$ depends on the scale parameters $\lambda \in \R_{\geq 0}$ and $c \in \R_{> 0}$. We let $c = [1/(1-G)] \vee [1/(1-\widetilde{G})] + 1$ as these are the (lower and upper) bounds of the centered random variables we work with in the construction of the Studentized confidence interval.
    
    \paragraphref{Step 1: Showing that $E_{m_1}^{\Mirrl}\brackone$ is a conditional $e$-value.}
    We will first show that $E_{m_1}^{\Mirrl}\brackone$ is a conditional $e$-value given the permutation $\eta$ and the second split of the data $\cD_2$. To do so, we will apply Fan's inequality to~\eqref{eq:proof:emp-berns-cross-fit-rv}. Before proceeding, however, we note that we can equivalently write~\eqref{eq:proof:emp-berns-cross-fit-rv} as
    \begin{equation}
        E_{m_1}^{\Mirrl}\brackone = \prod_{t=1}^{m_1} \exps{\lambda^{\Mirrl}_2 \Paren{\ghate_t^{\Mirrl} - \vartheta_t} - \gamma_{E,c}(\lambda^{\Mirrl}_2)  \Paren{\ghate^{\Mirrl}_t - \hmu_{t-1}^{\Mirrl}\brackone}^2} \mper
    \end{equation}
    Furthermore, we will denote as $\bghate^{\Mirrl}_{s:t} \coloneqq \Paren{\ghate^{\Mirrl}_s, \dots, \ghate^{\Mirrl}_t}$ the collection of group-wise Horvitz-Thompson estimators starting with group $s$ and going up to group $t$.
    
    We now proceed by taking the expectation of~\eqref{eq:proof:emp-berns-cross-fit-rv} and using the tower property of conditional expectations
    \begin{align}
        \Ex{E_{m_1}^{\Mirrl}\brackone} &= \E_{\cD_2, \eta}\Brac{\Ex{\prod_{t=1}^{m_1}\exps{\lambda^{\Mirrl}_2 \Paren{\ghate^{\Mirrl}_t - \vartheta_t} - \gamma_{E,c}(\lambda^{\Mirrl}_2)  \Paren{\ghate^{\Mirrl}_t - \hmu_{t-1}^{\Mirrl}\brackone}^2}\svert \cD_2, \eta}}\\
        &= \E_{\cD_2, \eta}\Brac{\E_{\bghate^{\Mirrl}_{1:m_1-1}}\Brac{\Ex{\prod_{t=1}^{m_1} \exps{\lambda^{\Mirrl}_2 \Paren{\ghate^{\Mirrl}_t - \vartheta_t} - \gamma_{E,c}(\lambda^{\Mirrl}_2)  \Paren{\ghate_t^{\Mirrl} - \hmu_{t-1}^{\Mirrl}\brackone}^2} \svert \bghate^{\Mirrl}_{1:m_1 - 1}, \cD_2, \eta} \svert \cD_2, \eta}} \\
        &= \E_{\cD_2, \eta}\left[\E_{\bghate^{\Mirrl}_{1:m_1 - 1}}\left[\prod_{t=1}^{m_1 - 1} \exps{\lambda^{\Mirrl}_2 \Paren{\ghate^{\Mirrl}_t - \vartheta_t} - \gamma_{E,c}(\lambda^{\Mirrl}_2)  \Paren{\ghate^{\Mirrl}_t - \hmu_{t-1}^{\Mirrl}\brackone}^2}\right.\right. \\
        &\qquad \left.\left.  \Ex{\exps{\lambda^{\Mirrl}_2\Paren{\ghate^{\Mirrl}_{m_1} - \vartheta_{m_1}} - \gamma_{E,c}(\lambda^{\Mirrl}_2)  \Paren{\ghate^{\Mirrl}_{m_1} - \hmu_{m_1 - 1}^{\Mirrl}\brackone}^2} \svert \bghate^{\Mirrl}_{1:m_1 - 1}, \cD_2, \eta} \svert \cD_2, \eta \right]\right] \mper
    \end{align}
    Our aim, now, is to show that 
    \begin{equation}
        \Ex{\exps{\lambda^{\Mirrl}_2 \Paren{\ghate^{\Mirrl}_{m_1} - \vartheta_{m_1}} - \gamma_{E,c}(\lambda^{\Mirrl}_2)  \Paren{\ghate^{\Mirrl}_{m_1} - \hmu_{{m_1}-1}^{\Mirrl}\brackone}^2} \svert \bghate^{\Mirrl}_{1:m_1- 1}, \cD_2, \eta}
    \end{equation}
    is upper bounded by $1$.
    To do so, we define the shorthands $\delta_t \coloneqq \ghate^{\Mirrl}_t - \vartheta_t$ and $\epsilon_t \coloneqq \vartheta_t - \hmu_{t-1}^{\Mirrl}\brackone$, and apply Fan's inequality~\citep{fan2015exponential} which states that for any $\lambda \in [0,1)$ and $\zeta \geq -1$ the following inequality holds: $\exps{\lambda \zeta - \gamma_{E, c}(\lambda) \zeta^2} \leq 1 + \lambda \zeta$. The conditions of Fan's inequality hold in our setting as we can always rescale our random variables or further restrict the the interval from which $\lambda$ takes its values.
    We now proceed to apply Fan's inequality as follows:
    \begin{align}
        &\Ex{\exps{\lambda^{\Mirrl}_2 \Paren{\ghate^{\Mirrl}_{m_1} - \vartheta_{m_1}} - \gamma_{E,c}(\lambda^{\Mirrl}_2)  \Paren{\ghate^{\Mirrl}_{m_1} - \hmu_{{m_1}-1}^{\Mirrl}\brackone}^2} \svert \bghate^{\Mirrl}_{1:m_1- 1}, \cD_2, \eta} \\
        &= \Ex{\exps{\lambda^{\Mirrl}_2 \Paren{\delta_{m_1} - \epsilon_{m_1}} - \gamma_{E,c}(\lambda^{\Mirrl}_2)  \Paren{\delta_{m_1} - \epsilon_{m_1}}^2} \svert \bghate^{\Mirrl}_{1:m_1- 1}, \cD_2, \eta}  \exps{\lambda^{\Mirrl}_2 \epsilon_{m_1}} \\
        &\leq \Ex{\Paren{1 + \lambda^{\Mirrl}_2 \Set{\delta_{m_1} - \epsilon_{m_1}}}  \exps{\lambda^{\Mirrl}_2 \epsilon_{m_1}} \svert \bghate^{\Mirrl}_{1:m_1 - 1}, \cD_2, \eta} \label{eq:proof:cs-fan-fan-application}\\
        &= \Paren{1 + \Ex{\lambda^{\Mirrl}_2 \delta_{m_1} \svert \bghate^{\Mirrl}_{1:m_1 - 1}\cD_2, \eta} - \lambda^{\Mirrl}_2 \epsilon_{m_1}}  \exps{\lambda^{\Mirrl}_2 \epsilon_{m_1}} \\
        &= \Paren{1 - \lambda^{\Mirrl}_2\epsilon_{m_1}}  \exps{\lambda^{\Mirrl}_2 \epsilon_{m_1}} \label{eq:proof:cs-fan-unbiased-estimator} \\
        &\leq \exps{-\lambda^{\Mirrl}_2 \epsilon_{m_1}}  \exps{\lambda^{\Mirrl}_2 \epsilon_{m_1}} =1 \label{eq:proof:cs-fan-exp-ub} \mcom
    \end{align}
    where in \eqref{eq:proof:cs-fan-fan-application} we have used Fan's inequality (given in the proof of \citep[Proposition 4.1]{fan2015exponential}); \eqref{eq:proof:cs-fan-unbiased-estimator} follows because $\ghate^{\Mirrl}_t$ is an unbiased estimator of $\vartheta_t$; and \eqref{eq:proof:cs-fan-exp-ub} follows from the upper-bound $1-x \leq \exps{-x}$.

    The above steps, in turn, imply the following upper-bound
    \begin{align}
        \Ex{E_{m_1}^{\Mirrl}\brackone \svert \cD_2, \eta} &\leq \Ex{\prod_{t=1}^{m_1 - 1} \exps{\lambda^{\Mirrl}_2 \Paren{\ghate^{\Mirrl}_t - \vartheta_t} - \gamma_{E,c}(\lambda^{\Mirrl}_2)  \Paren{\ghate^{\Mirrl}_t - \hmu_{t-1}^{\Mirrl}\brackone}^2} \svert \cD_2, \eta} \\
        &= \Ex{E_{m_1 - 1}^{\Mirrl}\brackone \svert \cD_2, \eta}\mper
    \end{align}
    By recursively applying the above arguments we can conclude that
    \begin{equation}
        \Ex{E_{m_1}^{\Mirrl}\brackone \svert \cD_2, \eta} \leq 1 \mper
    \end{equation}
    That is, $E_{m_1}^{\Mirrl}\brackone$ is a conditional $e$-value, from which it follows that $E_{m_1}^{\Mirrl}\brackone$ is also a marginal $e$-value:
    \begin{equation}
        \Ex{E_{m_1}^{\Mirrl}\brackone} \leq 1 \mper
    \end{equation}

    \paragraphref{Step 2: Applying Markov's inequality.}    
    Applying Markov's inequality to $E_{m_1}$, we have that with probability at least $(1-\alpha)$,
    \begin{equation}
    \lambda^{\Mirrl}_2 \sum_{t=1}^{m_1} \Paren{\ghate^{\Mirrl}_t - \vartheta_t} - \gamma_{E,c}(\lambda^{\Mirrl}_2)  V_{m_1}^{\Mirrl}\brackone < \logs{\flatfrac{1}{\alpha}} \mper
    \end{equation}
    Rearranging the above, we have that with probability at least $(1-\alpha)$,
    \begin{equation}\label{eq:eb-lower-bound-1}
    \sum_{t=1}^{m_1} \vartheta_t > \sum_{t=1}^{m_1} \ghate^{\Mirrl}_t - \frac{\gamma_{E,c}(\lambda^{\Mirrl}_2)  V_{m_1}^{\Mirrl}\brackone + \logs{\flatfrac{1}{\alpha}}}{\lambda^{\Mirrl}_2} \mper
    \end{equation}
    Applying the same reasoning but with $\Dcal_1$ and $\Dcal_2$ swapped, we have that with probability at least $(1-\alpha)$,
    \begin{equation}\label{eq:eb-lower-bound-2}
    \sum_{t=m_1 + 1}^{\widebar T} \vartheta_t > \sum_{t = m_1 + 1}^{\widebar T} \ghate^{\Mirrl}_t - \frac{\gamma_{E,c}(\lambda^{\Mirrl}_1)  V_{m_2}^{\Mirrl}\bracktwo + \logs{\flatfrac{1}{\alpha}}}{\lambda^{\Mirrl}_1}
    \end{equation}
    Union bounding and taking the sum of the left-hand sides of both \eqref{eq:eb-lower-bound-1} and \eqref{eq:eb-lower-bound-2}, we have that with probability at least $(1-2\alpha)$,
    \begin{equation}
    \sum_{t=1}^{\widebar T} \vartheta_t > \sum_{t=1}^{\widebar T} \ghate^{\Mirrl}_t - \frac{\gamma_{E,c}(\lambda^{\Mirrl}_2) V_{m_1}^{\Mirrl}\brackone + \log(1/\alpha)}{\lambda^{\Mirrl}_2} - \frac{\gamma_{E,c}(\lambda^{\Mirrl}_1) V_{m_2}^{\Mirrl}\bracktwo + \log(1/\alpha)}{\lambda^{\Mirrl}_1} \mper
    \end{equation}
    Dividing both sides by $n$ and writing the expression in terms of the individual treatment effect estimators we obtain that with probability $(1-2\alpha)$
    \begin{equation}
      \ate > \frac{1}{n} \sum_{t=1}^T \sum_{i \in g_t} \hite^\Mirrl_\etainvi - \Paren{\frac{\gamma_{E,c}(\lambda^{\Mirrl}_2) V_{m_1}^{\Mirrl}\brackone + \logs{1/\alpha}}{n \lambda^{\Mirrl}_2} + \frac{\gamma_{E,c}(\lambda^{\Mirrl}_1) V_{m_2}^{\Mirrl}\bracktwo + \logs{1/\alpha}}{n \lambda^{\Mirrl}_1}} \eqqcolon L_n^{\EBCI} \mcom
      \label{eq:proof:emp-bern-ci-l}
    \end{equation}
    where $\lambda^{\Mirrl}_1, \lambda^{\Mirrl}_2$ are as given in \eqref{eq:main-paper-emp-Berns-gamma-lambda-definition}. Thus, \eqref{eq:proof:emp-bern-ci-l} forms a lower $(1-2\alpha)$-confidence set.

    \paragraphref{Step 3: Deriving the upper-confidence set.}
    In order to derive the upper confidence set, we repeat steps $1$ and $2$ but with 
    \begin{equation}
        \ghate^{\Mirru}_t = \sum_{i \in g_t} \hite_\etainvi^\Mirru \mper
    \end{equation}
    Putting these steps together we obtain that with probability $(1-2\alpha)$
    \begin{equation}
        \ate < \frac{1}{n}\sum_{t=1}^T \sum_{i \in g_t} \hite_\etainvi^\Mirru + \frac{\gamma_{E,c}(\lambda^{\Mirru}_2) V_{m_1}^{\Mirru}\brackone + \logs{1/\alpha}}{n \lambda^{\Mirru}_2} + \frac{\gamma_{E,c}(\lambda^{\Mirru}_1) V_{m_2}^{\Mirru}\bracktwo + \logs{1/\alpha}}{n \lambda^{\Mirru}_1} \eqqcolon U_n^{\EBCI}\mper
        \label{eq:proof:emp-bern-ci-u}
    \end{equation}
    We, thus, have that $U_n^{\EBCI}$ forms an upper $(1-2\alpha)$-confidence set.
    
    Taking a union bound over the lower- and upper-confidence sets we immediately have that $\Brac{L_n^{\EBCI}, U_n^{\EBCI}}$ forms a $(1-2\alpha)$-confidence interval; thus, completing our proof.
\end{proof}

\begin{proof}[\proofref{corollary:eb-scaling}]
    We will now see why the bound scales as in \eqref{eq:eb-scaling}. Throughout this proof we will be using the following shorthands
    \begin{equation}
        b^*_1 \coloneqq \frac{\gamma_{E,c}(\lambda^{*}_2) V_{m_1}^*\brackone + \log(1/\alpha)}{n\lambda^{*}_2}; \quad b^*_2 \coloneqq \frac{\gamma_{E,c}(\lambda^*_1) V_{m_2}^*\bracktwo + \log(1/\alpha)}{n\lambda^*_1} \mper
    \end{equation}
    for $* \in \Set{\Mirrl, \Mirru}$.
    
    We start by recalling our assumption that $m_1 = m_2 = n/2$. Furthermore, from our assumption that $\hsigma_1^{*2} = \omega(1/n)$ and $\hsigma_2^{*2} = \omega(1/n)$ for $* \in \Set{\Mirrl, \Mirru}$ and two applications of L'H\^opital's rule we can see that $\flatfrac{\gamma_{E,c}(\lambda^{*}_2)}{\Paren{\lambda^{*2}_2 / 2}} \to 1$. 
    We proceed by analyzing $\sqrt{n}  b^*_1$:
    \begin{align}
        \sqrt{n}\frac{\gamma_{E,c}(\lambda^{*}_2) V_{m_1}^*\brackone + \logs{\flatfrac{1}{\alpha}}}{n \lambda^{*}_2} &= \frac{V_{m_1}^*\brackone \Paren{\flatfrac{\lambda^{*2}_2}{2}}  \flatfrac{\gamma_{E,c}(\lambda^{*}_2)}{\Paren{\flatfrac{\lambda^{*2}_2}{2}}} + \logs{\flatfrac{1}{\alpha}}}{\sqrt{n} \lambda^{*}_2} \\
        &\asymp \frac{V_{m_1}^*\brackone \Paren{\flatfrac{\lambda^{*2}_2}{2}} + \logs{\flatfrac{1}{\alpha}}}{\sqrt{n} \lambda^{*}_2} \mper
    \end{align}
    We now recall that by definition $V_{m_1}^{*}\brackone= m_1 \hsigma_1^{*2}$ and that by assumption we have that $m_1 = m_2 = n/2$. Substituting in the value of $\lambda_2^{*}$ and using the previous observations to simplify the terms we have that:
    \begin{align}
        &= \frac{m_1 \hsigma_1^{*2} \Paren{\flatfrac{\logs{1/\alpha}}{m_2 \hsigma_2^{*2}}} + \logs{1/\alpha}}{\sqrt{n}\lambda^{*}_2} \\
        &= \frac{\Set{\Paren{\flatfrac{\hsigma_1^{*2}}{\hsigma_2^{*2}}} + 1} \logs{1/\alpha}}{\sqrt{n} \lambda^{*}_2}\\
        &= \frac{\Set{\Paren{\flatfrac{\hsigma_1^{*2}}{\hsigma_2^{*2}}} + 1} \logs{1/\alpha}}{\sqrt{n} \sqrt{\flatfrac{2 \logs{1/\alpha}}{\Set{(n/2) \hsigma_2^{*2}}}}} \\
        &= \frac{\Set{\Paren{\flatfrac{\hsigma_1^{*2}}{\hsigma^*_2} + \hsigma^*_2}} \logs{1/\alpha}}{2\sqrt{\logs{1/\alpha}}} \\
        &= \Paren{\frac{\hsigma_1^{*2}}{\hsigma_2^*} + \hsigma_2^*}  \frac{\sqrt{\logs{1/\alpha}}}{2} \mper
    \end{align}
    By symmetry we obtain the following characterization on $\sqrt{n} b_2$:
    \begin{equation}
        \sqrt{n}  b^*_2 \asymp \Paren{\frac{\hsigma_2^{*2}}{\hsigma^*_1} + \hsigma^*_1}  \frac{\sqrt{\logs{1/\alpha}}}{2} \mper
    \end{equation}
    Putting the above together and looking at the sum of $b^*_1$ and $b^*_2$ scaled by $\sqrt{n}$ we obtain
    \begin{equation}
        \sqrt{n}\Paren{b^*_1 + b^*_2} \asymp \frac{\sqrt{\logs{1/\alpha}}}{2}   \Paren{\frac{\hsigma_1^{*2}}{\hsigma^*_2} + \frac{\hsigma_2^{*2}}{\hsigma^*_1} + \hsigma^*_1 + \hsigma^*_2}
    \end{equation}
    for $* \in \Set{\Mirrl, \Mirru}$. Thus, completing our proof.
\end{proof}

\section{Proofs of rates of estimation}
\subsection{Upper bound on the $L_2(\PP)$-norm}\label{section:proof-l2p-upper-bound}
In this subsection we present the proofs of \eqref{eq:estimation-upper-bound} which we restate as lemmas.

\begin{lemma}[An estimation error bound from a sub-Gaussian concentration inequality]
\label{lemma:l2p-error-bound}
    Let $\hate$ be the Horvitz-Thompson estimator as in \eqref{eq:mbcr-horvitz-thompson-estimator}, and $\ate$ represent the average treatment effect. Then, under mini-batch complete randomization with $\pi = \flatfrac{1}{K}$ for some integer $K \geq 2$, the $L_2(\PP)$-norm between the Horvitz-Thompson estimator and the average treatment effect can be upper bounded as
    \begin{equation}
        \normtp{\hate - \ate} \leq \frac{2}{\sqrt{n\pi}}\mper
    \end{equation}
\end{lemma}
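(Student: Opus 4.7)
The plan is to leverage the sub-Gaussian concentration inequality already established in the proof of \cref{thm:hoeffding-mbcr} and convert it to a second-moment bound via the layer-cake identity. Recall that when $\pi = 1/K$ for an integer $K \geq 2$, we have $\widebar n_1 = 0$ (so $\widebar G = 0$), and the proof of \cref{thm:hoeffding-mbcr} shows that the centered estimator $\hate - \ate$ is sub-Gaussian with variance proxy $(TG^2 + \widebar G^2)/n^2 = 1/(n\pi)$. Inverting the Markov-inequality step in that proof (two-sided, via the same union bound) yields the tail inequality
\begin{equation*}
    \PP\Paren{\Abs{\hate - \ate} \geq t} \leq 2 \exp\Set{-\tfrac{n\pi t^2}{2}}\quad\text{for all } t \geq 0.
\end{equation*}

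From here, the next step is to apply the layer-cake representation $\E\Brac{X^2} = \int_0^\infty 2t\, \PP(\Abs{X} \geq t)\,\mathrm{d}t$ to $X = \hate - \ate$, which gives
\begin{equation*}
    \E\Brac{(\hate - \ate)^2} \leq \int_0^\infty 4t \exp\Set{-\tfrac{n\pi t^2}{2}}\,\mathrm{d}t = \frac{4}{n\pi},
\end{equation*}
after the standard substitution $u = n\pi t^2/2$. Taking square roots yields the desired bound $\normtp{\hate - \ate} \leq 2/\sqrt{n\pi}$.

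There is no serious obstacle here; the only care required is bookkeeping the constants. In particular, one could alternatively invoke the derivative-matching argument on the moment generating function to obtain the sharper bound $\E\Brac{(\hate - \ate)^2} \leq 1/(n\pi)$, but the layer-cake route produces the constant $2$ stated in the lemma and has the virtue of following directly from the concentration inequality (as the lemma's title emphasizes). The analogous Bernoulli-randomization bound in \eqref{eq:estimation-upper-bound} would be obtained by the same argument applied instead to the sub-Bernoulli tail bound extracted from \cref{thm:sub-bernoulli-cis} (or, more directly, to the variance of the Horvitz--Thompson estimator under Bernoulli randomization, which is a sum of independent bounded terms).
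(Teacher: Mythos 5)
Your proposal is correct and follows essentially the same route as the paper: it extracts the sub-Gaussian tail bound $2\exp\{-n\pi t^2/2\}$ from the $e$-value/Markov argument underlying \cref{thm:hoeffding-mbcr} and then integrates the tail (your layer-cake form $\int_0^\infty 2t\,\PP(|X|\ge t)\,\mathrm{d}t$ is just the paper's $\int_0^\infty \PP(X^2\ge x)\,\mathrm{d}x$ after the substitution $x=t^2$), arriving at the same bound $4/(n\pi)$ and hence $2/\sqrt{n\pi}$.
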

\begin{proof}[\proofref{lemma:l2p-error-bound}]
    To prove the desired bound, we will rely on the results from \cref{section:proof-hoeffding-mbcr}, and remind the reader that $\ghate_t = \sum_{i \in g_t} \estimator_{\etainvi}$ takes values in $[-G, G]$ where $G = \flatfrac{1}{\pi}$.
    
    We begin the proof by using sub-Gaussian concentration results from \cref{section:proof-hoeffding-mbcr} to bound the tail probability for any $\delta > 0$
    \begin{align}
        \PP\Paren{\frac{1}{n} \sum_{t=1}^T \sum_{i \in g_t} \Paren{\estimator_i - \ite_i} \geq \delta} &= \PP\Paren{\sum_{t=1}^T \sum_{i \in g_t}\Paren{\estimator_i - \ite_i} \geq n \delta} \\
        &= \PP\Paren{\exps{\lambda \sum_{t=1}^T \sum_{i \in g_t}\Paren{\estimator_i - \ite_i}} \geq \exps{\lambda n \delta}} \\
        &= \PP\Paren{\exps{\lambda \sum_{t=1}^T \sum_{i \in g_t} \Paren{\estimator_i - \ite_i} - \frac{TG^2 \lambda^2}{2}} \geq \exps{\lambda n \delta - \frac{TG^2 \lambda^2}{2}}} \\
        &\leq \frac{\Ex{\exps{\sum_{t=1}^T \sum_{i \in g_t} \Paren{\estimator_i - \ite_i} - \frac{TG^2 \lambda^2}{2}}}}{\exps{\lambda n \delta - \frac{TG^2 \lambda^2}{2}}} \label{eq:proof:l2p-markov}\\
        &\leq \frac{1}{\exps{\lambda n \delta - \frac{TG^2 \lambda^2}{2}}} \label{eq:proof:l2p-e-value-bound} \mcom
    \end{align}
    where \eqref{eq:proof:l2p-markov} follows from Markov's inequality and \eqref{eq:proof:l2p-e-value-bound} is due to the exponential random variable being an $e$-value. Letting $\lambda \coloneqq \flatfrac{\delta}{G}$, plugging back in, and simplifying we obtain the following upper bound
    \begin{equation}
        \PP\Paren{\frac{1}{n}\sum_{t=1}^T \sum_{i \in g_t} \Paren{\estimator_i - \ite_i} \geq \delta} \leq \exps{-\frac{\delta^2 n}{2G}} \mper
    \end{equation}
    Taking a union bound we have
    \begin{equation}
        \PP\Paren{\Abs{\frac{1}{n}\sum_{t=1}^T \sum_{i \in g_t} \Paren{\estimator_i - \ite_i}} \geq \delta} \leq 2 \exps{-\frac{\delta^2 n}{2G}} \mper
    \end{equation}
    We now use the expectation-integrated tail identity for nonnegative random variables in order to relate our tail bound to the expectation, giving us
    \begin{align}
        \E_{\PP}\Brac{\Paren{\hate - \ate}^2} &= \int_0^\infty \PP\Paren{\Paren{\hate - \ate}^2 \geq x} dx \\
        &= \int_0^\infty \PP\Paren{\Abs{\hate - \ate} \geq \sqrt{x}} dx \\
        &\leq 2 \int_0^\infty \exps{-\frac{n}{2G}  x} dx\\
        &= \frac{4G}{n} \mper
    \end{align}
    Letting $G = \flatfrac{1}{\pi}$, because of the mini-batch complete randomization procedure, we have that the $L_2(\PP)$-norm $\normtp{\hate - \ate} \equiv \sqrt{\E_{\PP}\Brac{\Paren{\hate - \ate}^2}}$ can be upper bounded as 
    \begin{equation}
        \normtp{\hate - \ate} \leq \frac{2}{\sqrt{n\pi}}\mcom
    \end{equation}
    completing the proof.
\end{proof}

In the following lemma, we provide a proof for deriving an upper bound on $\normtp{\widehat \psi - \psi}$ under Bernoulli randomization that only relies on upper-bounding the variance of $\widehat \psi$ directly. While \cref{lemma:l2p-error-bound} suffices under Bernoulli randomization as well, the following lemma arrives at an upper bound with a smaller constant and it has a simpler proof so we provide it to complement \cref{lemma:l2p-error-bound}.
\begin{lemma}[An upper bound on the RMSE of the Horvitz-Thompson estimator under Bernoulli randomization]
\label{lem:rmse-upper-bound-ht-bernoulli}
    Let $\hate$ be the Horvitz-Thompson estimator from \eqref{eq:ht-estimator} under Bernoulli randomization with some propensity score $\pi \in (0, 1/2]$. We then have that the root mean square error of $\hate$ can be upper bounded as
    \begin{equation}
        \normtp{\hate - \ate} \coloneqq \sqrt{\E_{\PP}\Brac{\Paren{\hate - \ate}^2}} \leq \sqrt{\frac{2}{n\pi}} \mper
    \end{equation}
\end{lemma}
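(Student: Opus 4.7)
The plan is to bypass any concentration argument and instead compute the variance of $\hate$ directly, exploiting the independence structure that Bernoulli randomization affords. First I would note that under Bernoulli randomization the treatment assignments $Z_1, \dots, Z_n$ are mutually independent, and combined with the product structure of $P$ in \cref{assumption:generalized-potential-outcomes}, this makes the Horvitz-Thompson summands $\hite_i \coloneqq Y_i(Z_i/\pi - (1-Z_i)/(1-\pi))$ independent across $i\in[n]$. Since $\hate$ is unbiased for $\ate$ under both the design-based and $\iid$ superpopulation settings (a special case of \cref{proposition:unbiased-ht} since Bernoulli randomization makes each $\hite_i$ have mean $\psi_i$), the RMSE equals the standard deviation and
\begin{equation}
    \E_{\PP}\Brac{\Paren{\hate - \ate}^2} = \Var\Paren{\hate} = \frac{1}{n^2}\sum_{i=1}^n \Var\Paren{\hite_i}\mper
\end{equation}

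The second step is to bound $\Var(\hite_i) \leq \E[\hite_i^2]$ for each $i \in [n]$. Exploiting the fact that $Z_i(1-Z_i) = 0$ almost surely, the cross term in $\hite_i^2$ vanishes, and conditioning on the potential outcomes gives
\begin{equation}
    \E\Brac{\hite_i^2} = \frac{\E\Brac{Y_i(1)^2}}{\pi} + \frac{\E\Brac{Y_i(0)^2}}{1-\pi} \leq \frac{1}{\pi} + \frac{1}{1-\pi}\mcom
\end{equation}
where the inequality uses $Y_i(0), Y_i(1) \in [0,1]$ from \cref{assumption:generalized-potential-outcomes}. In the design-based setting, the same identity holds with $y_i(z)$ in place of $Y_i(z)$ and with the expectation replaced by $\E_\DB[\cdot]$. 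Finally, since we assume $\pi \in (0, 1/2]$, we have $1-\pi \geq 1/2$, so $1/(1-\pi) \leq 2 \leq 1/\pi$, and thus $\E[\hite_i^2] \leq 2/\pi$.

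Combining the two displays yields $\Var(\hate) \leq 2/(n\pi)$, from which the claimed bound follows by taking square roots. There is no serious obstacle here; the only delicate point is the case distinction (or unified bookkeeping) between the design-based and $\iid$ superpopulation settings, which is handled uniformly by the product form of $P$ in \cref{assumption:generalized-potential-outcomes}. Compared to \cref{lemma:l2p-error-bound}, the improvement in the constant from $4$ to $2$ arises because we bypass the sub-Gaussian tail integration and instead directly bound the second moment using the sharper inequality $\E[\hite_i^2] \leq 2/\pi$, which is tight up to the factor coming from $1/(1-\pi) \leq 1/\pi$.
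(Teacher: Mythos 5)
Your proposal is correct and follows essentially the same route as the paper's own proof: both use unbiasedness to reduce the RMSE to the variance, exploit the independence of the Bernoulli assignments (together with the product form of $P$) to sum the per-unit variances, and then bound the second moment via $\E[\hite_i^2] = \E[Y_i(1)^2]/\pi + \E[Y_i(0)^2]/(1-\pi) \leq 1/\pi + 1/(1-\pi) \leq 2/\pi$ for $\pi \in (0,1/2]$. The only cosmetic difference is that the paper retains the $-\ite_i^2$ term from the variance decomposition and drops it at the end, whereas you drop it immediately via $\Var(\hite_i) \leq \E[\hite_i^2]$.
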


\begin{proof}[Proof of \cref{lem:rmse-upper-bound-ht-bernoulli}]
    We begin by noting that because the Horvitz-Thompson estimator \eqref{eq:ht-estimator} is an unbiased estimator of the average treatment effect under Bernoulli randomization, we have the following series of equalities
    \begin{equation}
        \normtp{\hate - \ate} = \sqrt{\E_{\PP}\Brac{\Paren{\hate - \ate}^2}} = \sqrt{\var{\hate}} \mper
    \end{equation}

    As such, all we need to do is upper bound the variance of the Horvitz-Thompson estimator under Bernoulli randomization. We proceed as follows:
    \begin{align}
        \var{\hate} &= \var{\frac{1}{n} \sum_{i=1}^n \hite_i} \\
        &= \frac{1}{n^2} \sum_{i=1}^n \var{\hite_i}
        \label{eq:proof-rmse-upper-bound-ht-bernoulli-variance} \\
        &= \frac{1}{n^2} \sum_{i=1}^n \Paren{\E_{\PP}\Brac{\hite_i^2} - \E_{\PP}\Brac{\hite_i}^2} \\
        &= \frac{1}{n^2} \sum_{i=1}^n \Paren{\E_{\PP}\Brac{\hite_i^2} - \ite_i^2} \mcom
        \label{eq:proof-rmse-upper-bound-ht-bernoulli-variance-final-decomposition}
    \end{align}
    where \eqref{eq:proof-rmse-upper-bound-ht-bernoulli-variance} follows from our assumption that the potential outcomes are independently drawn (Assumption~\ref{assumption:generalized-potential-outcomes}) and that treatment assignment under Bernoulli randomization is performed independently for each unit. We now focus on the second moment of the individual treatment effect estimators.
    \begin{align}
        \E_{\PP}\Brac{\hite_i^2} &= \E_{\PP}\Brac{\frac{Y_i(1)^2 Z_i}{\pi^2} + \frac{Y_i(0)^2 (1-Z_i)}{(1-\pi)^2}} \\
        &= \frac{1}{\pi^2}\E_{\PP}\Brac{Y_i(1)^2 Z_i} + \frac{1}{(1-\pi)^2}\E_{\PP}\Brac{Y_i(0)^2 (1-Z_i)} \\
       &= \frac{\E_{\PP}\Brac{Y_i(1)^2}}{\pi} + \frac{\E_{\PP}\Brac{Y_i(0)^2}}{1-\pi} \\
       &\leq \frac{1}{\pi} + \frac{1}{1-\pi}
       \label{eq:proof-rmse-upper-bound-ht-bernoulli-potential-outcomes-bound} \\
       &\leq \frac{2}{\pi}
       \label{eq:proof-rmse-upper-bound-ht-bernoulli-one-over-pi-bound} \mcom
    \end{align}
    where \eqref{eq:proof-rmse-upper-bound-ht-bernoulli-potential-outcomes-bound} follows from our assumption that the potential outcomes take values in the unit interval (Assumption~\ref{assumption:generalized-potential-outcomes}), and \eqref{eq:proof-rmse-upper-bound-ht-bernoulli-one-over-pi-bound} follows from the fact that $1/\pi \geq 1/(1-\pi)$ whenever $\pi \in (0,1/2]$ (i.e., the regime we are considering). Hence, combining the above inequalities with \eqref{eq:proof-rmse-upper-bound-ht-bernoulli-variance-final-decomposition} we have that
    \begin{align}
        \var{\hate} &\leq \frac{1}{n^2} \sum_{i=1}^n \Paren{\frac{2}{\pi} - \ite_i^2} \\
        &\leq \frac{1}{n^2} \sum_{i=1}^n \Paren{\frac{2}{\pi}} 
        \label{eq:proof-rmse-upper-bound-ht-bernoulli-ite-squared-positivity-bound}\\
        &= \frac{2}{n\pi} \mcom
    \end{align}
    where \eqref{eq:proof-rmse-upper-bound-ht-bernoulli-ite-squared-positivity-bound} follows from the fact that $-\ite_i^2 \leq 0$. Hence, putting all of the above steps together we have the following upper bound on the root mean square error
    \begin{equation}
        \normtp{\hate - \ate} = \sqrt{\var{\hate}} \leq \sqrt{\frac{2}{n\pi}} \mcom
    \end{equation}
    completing our proof.
    
\end{proof}

\begin{lemma}
    \label{lemma:hoeffding-estimation-bound}
    Let $\hite_i \sim \PP_i$ be a bounded random variable taking values in $[-1, 1]$, and denotes its mean by $\ite_i \coloneqq \Ex{\hite_i}$. Denote the average over $n$ of these, independent, random variables as $\hate \coloneqq \Paren{1/n}\sum_{i=1}^n \hite_i$ and the average of their means as $\ate \coloneqq \Paren{1/n}\sum_{i=1}^n \ite_i$. Then, the $L_2(\PP)$-norm between the sample average and the average of the means is bounded as
    \begin{equation}
        \normtp{\hate - \ate} \leq \frac{2}{\sqrt{n}}\mper
    \end{equation}
\end{lemma}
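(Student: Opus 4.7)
The plan is to bound the $L_2(\PP)$-norm via a standard variance decomposition. First, observe that $\hate$ is unbiased for $\ate$, since $\E_{\PP}[\hite_i] = \ite_i$ for each $i$ and the average of the means is $\ate$. Consequently, $\normtp{\hate - \ate}^2 = \E_{\PP}\bigl[(\hate - \ate)^2\bigr] = \Var(\hate)$.

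Next, invoke independence of the $\hite_i$'s to decompose the variance as
\[
\Var(\hate) = \frac{1}{n^2} \sum_{i=1}^n \Var(\hite_i).
\]
Since $\hite_i \in [-1,1]$, I would bound each term in one of two equivalent ways: either by Popoviciu's inequality, giving $\Var(\hite_i) \le (1-(-1))^2/4 = 1$, or simply by $\Var(\hite_i) \le \E_{\PP}[\hite_i^2] \le 1$. Either yields $\Var(\hate) \le 1/n$, so $\normtp{\hate - \ate} \le 1/\sqrt{n} \le 2/\sqrt{n}$, which suffices to establish the claim.

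Alternatively, to recover the stated constant $2$ directly via the same machinery as \cref{lemma:l2p-error-bound}, one could apply Hoeffding's lemma to conclude that each centered summand $\hite_i - \ite_i$ is sub-Gaussian with variance proxy $1$, so that $\hate - \ate$ is sub-Gaussian with variance proxy $1/n$. The expectation-integrated tail identity then gives
\[
\E_{\PP}\bigl[(\hate - \ate)^2\bigr] = \int_0^\infty \PP\bigl(|\hate - \ate| \ge \sqrt{t}\bigr) \, dt \le \int_0^\infty 2 e^{-nt/2} \, dt = \frac{4}{n},
\]
from which $\normtp{\hate - \ate} \le 2/\sqrt{n}$ follows immediately, mirroring the approach used earlier in the appendix.

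No substantive obstacle is expected: this is a textbook bound on the variance of a normalized sum of bounded independent random variables, and both routes above are routine.
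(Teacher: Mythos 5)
Your proposal is correct, and your second route is essentially the paper's own proof: the paper applies Hoeffding's inequality to the centered sum, union-bounds to get the two-sided tail $2\exp(-n\delta^2/2)$, and integrates the tail to obtain $\E_{\PP}[(\hate-\ate)^2]\le 4/n$, hence the constant $2/\sqrt{n}$. Your first route is genuinely different and simpler: since $\hate$ is unbiased for $\ate$, the squared $L_2(\PP)$-norm is exactly $\Var(\hate)$, which by independence equals $n^{-2}\sum_i \Var(\hite_i)\le 1/n$ (via Popoviciu or $\Var(\hite_i)\le \E[\hite_i^2]\le 1$), giving the sharper bound $1/\sqrt{n}\le 2/\sqrt{n}$. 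What the direct variance argument buys is both brevity and a better constant---the paper's factor of $2$ is an artifact of passing through a sub-Gaussian tail bound and the tail-integration identity, which is lossy for second moments---and the improvement would propagate (harmlessly, but slightly tightening the oracle term) into the lower bound of \cref{theorem:minimax-lower-bound-design-based}. The only thing to note is that the exponential-tail machinery is not needed at all here, since the lemma only asks for a second-moment bound rather than a high-probability statement.
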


\begin{proof}[\proofref{lemma:hoeffding-estimation-bound}]
    The proof proceeds in a similar way as that of \cref{lemma:l2p-error-bound}. 
    
    We will first use Hoeffding's inequality to provide an upper bound on the tail probability
    \begin{align}
        \PP\Paren{\frac{1}{n} \sum_{i=1}^n \Paren{\hite_i - \ite_i} \geq \delta} &= \PP\Paren{\sum_{i=1}^n \Paren{\hite_i - \ite_i} \geq n\delta}\\
        &\leq \exps{-\frac{2n^2\delta^2}{4n}} \label{eq:proof-hoeffding-bound}\\
        &= \exps{-\frac{n\delta^2}{2}}\mcom
    \end{align}
    where \eqref{eq:proof-hoeffding-bound} follows from Hoeffding's inequality. Taking a union bound we obtain
    \begin{equation}
        \PP\Paren{\Abs{\frac{1}{n} \sum_{i=1}^n \Paren{\hite_i - \ite_i}} \geq \delta} \leq 2 \exps{-\frac{n\delta^2}{2}}\mper
    \end{equation}
    We now use the expectation-integrated tail identity for nonnegative random variables to relate our tail bound to the expectation, giving us
    \begin{align}
        \E_{\PP}\Brac{\Paren{\hate - \ate}^2} &= \int_{0}^\infty \PP\Paren{\Paren{\hate - \ate}^2 \geq x} dx \\
        &= \int_{0}^\infty \PP\Paren{\Abs{\hate - \ate} \geq \sqrt{x}}dx \\
        &\leq 2 \int_{0}^\infty \exps{-\frac{nx}{2}} dx\\
        &= \frac{4}{n} \mper
    \end{align}
    We, thus, have that the $L_2(\PP)$-norm $\normtp{\hate - \ate} \equiv \sqrt{\E_{\PP}\Brac{\Paren{\hate - \ate}^2}}$ can be upper bounded as
    \begin{equation}
        \normtp{\hate - \ate} \leq \frac{2}{\sqrt{n}} \mcom
    \end{equation}
    completing our proof.
\end{proof}

\subsection{Minimax lower-bound under the stochastic setting}
\label{section:proof-minimax-bound-stochastic}
In this subsection we provide a proof for a more general setting than the one considered in \cref{theorem:minimax-bound-stochastic}. Albeit, the setting of \cref{theorem:minimax-bound-stochastic} can be achieved by letting $d(x, y) = \Abs{x-y}$ and $\Phi(x) = x^2$, in the setting we consider below.

\begin{proof}[\proofref{theorem:minimax-bound-stochastic}]
\label{proof:minimax-bound-stochastic}
    We are tasked with finding a lower bound on the minimax risk given by
    \begin{equation}
        \minimaxrisk \coloneqq \inf_{\estimator} \sup_{\PP \in \cP_\iid} \E_{\PP}\Brac{\Phi\Paren{d(\estimator, \ateP)}} \mcom
    \end{equation}
   where $\cP_\iid$ is the set of $\iid$ distributions as described in \cref{def:iid-stochastic-setting}; $d : \Psi \times \Psi \to \R_{\geq 0}$ is a semi-metric over the parameter space $\Psi$; $\Phi : \R_{\geq 0} \to \R_{\geq 0}$ is an increasing function; and the infimum is taken over all possible estimators.

    Our proof follows a standard Le Cam-style argument but we will write out each step explicitly for the sake of completeness. To this end, let $\delta > 0$ be a constant to be chosen later. Looking at the above expectation for any estimator $\estimator$ and any distribution $\PP \in \cP_\iid$ and applying Markov's inequality,
    \begin{equation}
        \E_{\PP}\Brac{\Phi\Paren{d(\estimator, \ateP)}} \geq \Phi(\delta)  \PP\Paren{\Phi\Paren{d(\estimator, \ateP)} \geq \Phi(\delta)} = \Phi(\delta)  \PP\Paren{d(\estimator, \ateP) \geq \delta}\mcom
    \end{equation}
    where the equality follows from the increasing nature of $\Phi(\cdot)$. Taking the supremum over $\PP \in \cP_\iid$ yields
    \begin{equation}
        \sup_{\PP \in \cP_\iid} \E_{\PP}\Brac{\Phi\Paren{d(\estimator, \ateP)}} \geq \Phi(\delta) \sup_{\PP \in \cP_\iid}  \PP\Paren{d(\estimator, \ateP) \geq \delta}\mper
    \end{equation}
   It, thus, suffices to provide a lower bound for $\sup_{\PP \in \cP_\iid} \PP\Paren{d(\estimator, \ateP) \geq \delta}$.  

   We will let $\Psi^\brackM \coloneqq \Set{\atePo,\dots,\atePM}$ be a set of $2\delta$-separated parameters contained in the space $\Psi$. That is, for all $\atePi, \atePj \in \Psi^\brackM$ such that $i \neq j$ we have that $d(\atePi, \atePj) \geq 2\delta$.
   We will associate a representative distribution $\PP_i$ to each parameter $\atePi \in \Psi^\brackM$, and we will denote the collection of these distributions as $\cP_\iid^\brackM$. Notice how an average always lower bounds a supremum. So, taking an average over the distributions in $\cP_\iid^\brackM$ gives us
   \begin{equation}
       \Phi(\delta) \sup_{\PP \in \cP_\iid} \PP\Paren{d(\estimator, \ateP) \geq \delta} \geq \Phi(\delta)  \frac{1}{M} \sum_{i=1}^M \PP_i\Paren{d(\estimator, \ateP) \geq \delta} \mper
    \label{eq:proof:lecam-average-lower-bound}
   \end{equation}
   Now, looking to lower-bound the average of probabilities in \eqref{eq:proof:lecam-average-lower-bound} by the average of testing errors, we define the test $\theta$ given by
   \begin{equation}
       \theta \coloneqq \argmin_{k \in [M]} d(\estimator, \atePk) \mper
   \end{equation}
   By construction of $\theta$, if $d(\estimator, \atePk) < \delta$, then it must be the case that $\theta = k$ by virtue that all $\atePj \in \Psi^\brackM$ are $2\delta$-separated. In other words, $\Set{d(\estimator, \atePk) < \delta} \subseteq \Set{\theta = k}$, and hence the reverse inclusion holds for their complements
   \begin{equation}
       \Set{d(\estimator, \atePk) \geq \delta} \supseteq \Set{\theta \neq k} \mcom
   \end{equation}
   implying the following inequality on the $\PP_k$-probabilities of the above events
   \begin{equation}
       \PP_k\Paren{d(\estimator, \atePk) \geq \delta} \geq \PP_k(\theta \neq k)
   \end{equation}
   for each $k \in [M]$. Plugging the above lower bound into \eqref{eq:proof:lecam-average-lower-bound} we obtain
   \begin{equation}
       \sup_{\PP \in \cP_\iid} \E_{\PP}\Brac{\Phi\Paren{d(\estimator, \ateP)}} \geq \Phi(\delta)  \frac{1}{M} \sum_{j=1}^M \PP_j\Paren{\theta \neq j} \mper
       \label{eq:proof:lecam-test-lower-bound}
   \end{equation}
   We now take the infimum over all possible estimators on the left hand side of \eqref{eq:proof:lecam-test-lower-bound}, and the infimum over the set of all tests induced by the estimators on the right hand side. Notice how the infimum over all possible tests is by construction smaller than the infimum over all possible estimators, giving us
   \begin{equation}
        \inf_{\estimator} \sup_{\PP \in \cP_\iid} \E_{\PP}\Brac{\Phi\Paren{d(\estimator, \ateP)}} \geq \Phi(\delta)  \inf_{\theta} \frac{1}{M} \sum_{j=1}^M \PP_j\Paren{\theta \neq j}
    \label{eq:proof:lecam-lower-bound}
   \end{equation}

   Now that we have lower bounded the minimax estimation risk by the average testing error over $M$ distributions, we are ready to follow Le Cam's two point method. To do so, we will let $\cP_\iid^\brackM \coloneqq \Set{\PP_0, \PP_1}$, implying that the lower bound in \eqref{eq:proof:lecam-lower-bound} can be characterized as
   \begin{equation}
       \Phi(\delta)  \inf_{\theta} \frac{1}{2} \Set{\PP_0\Paren{\theta \neq 0} + \PP_1\Paren{\theta \neq 1}}\mper
   \end{equation}
   Notice how in the above equation we are considering the Bayes risk. So, relating the Bayes risk to the total variation (TV) distance we have that
   \begin{equation}
       \Phi(\delta)  \inf_{\theta} \frac{1}{2} \Set{\PP_0\Paren{\theta \neq 0} + \PP_1\Paren{\theta \neq 1}} = \Phi(\delta)  \frac{1}{2} \Set{1 - \Norm{\PP_1 - \PP_0}_{\tv}}\mcom
   \end{equation}
   noting that the right-hand side no longer depends on any particular estimator $\estimator$ or test $\theta$. Putting all of the above together and applying Pinsker's inequality to upper-bound the total variation distance in terms of the Kullback-Leibler (KL) divergence, we have
   \begin{equation}
        \inf_{\estimator} \sup_{\PP \in \cP_\iid} \E_{\PP}\Brac{\Phi\Paren{d(\estimator, \ateP}} \geq \frac{\Phi(\delta)}{2}  \Set{1 - \Norm{\PP_1 - \PP_0}_{\tv}}
    \geq \frac{\Phi(\delta)}{2}  \Set{1 - \sqrt{\frac{1}{2}D_{\KL}\Paren{\PP_0 \| \PP_1}}}\mper
   \end{equation}
   
   In order to continue with the two-point Le Cam argument, we now proceed to characterize the distributions $\PP_1$ and $\PP_0$ so that the estimands $\atePz$ and $\atePo$ are $2\delta$-separated.
   Indeed, let $\PP_0$ and $\PP_1$ be the probability distributions such that $Y_i(0) \simiid \mathrm{Bernoulli}(1/2)$ and $Y_i(0) \independent Y_i(1)$ under both $\PP_0$ and $\PP_1$ and so that
    \begin{align}
        \{ Y_i(1) \}_{i=1}^n &\simiid \mathrm{Bernoulli}\left ( \frac{1-2\delta}{2} \right ) \quad \text{under $\PP_0$, and}\\
        \{ Y_i(1) \}_{i=1}^n &\simiid \mathrm{Bernoulli}\left ( \frac{1+2\delta}{2} \right ) \quad \text{under $\PP_1$},
    \end{align}
    noting that $\widebar \psi$ is $2\delta$-separated under the two distributions. So, it remains to inspect $D_{\KL}\Paren{\PP_0 \| \PP_1}$.
    
    Indeed, exploiting the fact that the vector $\bZ \coloneqq (Z_1, \dots, Z_n)$ is independent of the potential outcomes $\{ (Y_i(0), Y_i(1) \}_{i=1}^n$ which are themselves $\iid$, we will proceed to upper bound $D_{\KL}(\PP_0 \| \PP_1)$.
    
    Notice that conditional on $\bZ$, the vector $\bY \coloneqq \Paren{Y_1, \dots, Y_n}$ consists of exactly $n_0$ $\iid$ draws from the marginal distribution of $Y(0)$ and $n_1$ $\iid$ draws from the marginal distribution of $Y(1)$ and hence the Kullback-Leibler divergence can be written as
    
    \begin{align}
        D_{\KL}\Paren{\PP_0 \| \PP_1} &= \E_{\PP_0}\Brac{\logs{\frac{d\PP_0\left (\bZ, \bY\right )}{d\PP_1\left (\bZ, \bY\right )}}} \\
        &= \E\Brac{\E_{\PP_0}\Brac{\logs{\frac{d\PP_0\left (\bY \svert \bZ \right )  \cancel{d\PP_0\Paren{\bZ}}}{d\PP_1\left (\bY \svert \bZ\right )  \cancel{d\PP_1\Paren{\bZ}}}} \svert \bZ}} \label{eq:proof:minimax-kl-decomp-marginal-z}\\
        &= \E\Brac{\E_{\PP_0}\Brac{\logs{\frac{d\PP_0\Paren{\bY(1) \svert \bZ}}{d\PP_1\Paren{\bY(1) \svert \bZ}}} + \logs{\frac{\cancel{d\PP_0\Paren{\bY(0) \svert \bZ}}}{\cancel{d\PP_1\Paren{\bY(0) \svert \bZ}}}} \svert \bZ}}\label{eq:proof:minimax-kl-decomp-marginal-y0}\\
        &= \E\Brac{n_1 \E_{\PP_0}\Brac{\logs{\frac{d\PP_0\Paren{Y(1) \svert \bZ}}{d\PP_1\Paren{Y(1) \svert \bZ}}} \svert \bZ}}\label{eq:proof:minimax-kl-decomp-tensorization}\\ 
        &= n\pi  D_{\KL}\Paren{\PP_0\Paren{Y(1)} \| \PP_1\Paren{Y(1)}}, \label{eq:proof:minimax-kl-decomp-independence-from-z}
    \end{align}
    where \eqref{eq:proof:minimax-kl-decomp-marginal-z} follows because the treatment assignment vectors have the same marginal distribution under $\PP_0$ and $\PP_1$; \eqref{eq:proof:minimax-kl-decomp-marginal-y0} uses the fact that the potential outcomes under control are $\iid$ draws from the same marginal distribution; \eqref{eq:proof:minimax-kl-decomp-tensorization} follows from the tensorization property of the Kullback-Leibler divergence; and \eqref{eq:proof:minimax-kl-decomp-independence-from-z} follows from our ignorability assumption (Assumption~\ref{assumption:ignorability}) and our assumption that $\Ex{n_1} = n\pi$.
    Now, continuing from \eqref{eq:proof:minimax-kl-decomp-independence-from-z}, we can upper-bound the KL divergence for any $\delta \in (0, 1/4)$ as
    \begin{align}
        D_{\KL} \Paren{\PP_0 \| \PP_1}  &= n\pi  D_{\KL}\Paren{\PP_0\Paren{Y(1)} \| \PP_1\Paren{Y(1)}} \\
        &= n\pi  \Paren{\flatfrac{(1-2\delta)}{2}}  \logs{\frac{\flatfrac{(1-2\delta)}{2}}{\flatfrac{(1+2\delta)}{2}}} + \Paren{\flatfrac{(1+2\delta)}{2}}  \logs{\frac{\flatfrac{(1+2\delta)}{2}}{\flatfrac{(1-2\delta)}{2}}}\\
        &= n\pi  2\delta  \logs{\flatfrac{(1 + 2\delta)}{(1-2\delta)}} \\
        &\leq n\pi 9\delta^2 \mper
    \end{align}
    Putting all of the previous steps together and returning to the minimax risk $\minimaxrisk$ combined with our implied upper-bound on the total variation distance, we have for any $\delta \in (0, 1/4)$,
    \begin{equation}
        \minimaxrisk \equiv \inf_{\estimator} \sup_{\PP \in \cP_\iid} \E_{\PP}\Brac{\Phi\Paren{d(\estimator, \ateP)}} \geq \frac{\Phi(\delta)}{2}  \Set{1 - \sqrt{\frac{1}{2} n_1  9\delta^2}} \mper
    \end{equation}
    Setting $\delta^2 = \flatfrac{1}{(16 n\pi)}$ we obtain the following lower bound
    \begin{equation}
        \minimaxrisk \geq \frac{\Phi\Paren{\flatfrac{1}{\sqrt{16 n\pi}}}}{2}  \Set{1 - \frac{3}{4 \sqrt{2}}} \mcom
    \end{equation}
    and if we let $\Phi(t) = t^2$ the above can be simplified to
    \begin{equation}
        \minimaxrisk \geq \frac{1}{n\pi} \Paren{\frac{1}{32} - \frac{3}{128 \sqrt{2}}} \geq \frac{1}{72 n\pi}\mcom
    \end{equation}
    which completes the proof of \cref{theorem:minimax-bound-stochastic}.
\end{proof}

\subsection{Minimax lower bound under the design-based setting}
\label{section:proof-minimax-lower-bound-design-based}
\begin{proof}[\proofref{lemma:minimax-relation-stochastic-design-based}]
    We are now interested in relating the minimax risk of the stochastic setting to the minimax risk under the design-based setting
    \begin{equation}
    \label{eq:proof:minimax-design-based}
        \minimaxrisk \coloneqq \inf_{\estimator}~\sup_{\PP \in \cP_{\DB}} \E_{\PP}\Brac{\Phi\Paren{d(\estimator, \sate)}} \mcom
    \end{equation}
    where $\cP_\DB$ is the set of distributions as described in \cref{def:design-based-setting}. In particular, notice that for any $\PP \in \cP_\DB$ the estimand $\sate$ is no longer $\EE_{\PP}[Y(1) - Y(0)]$ as it was in the proof of \cref{theorem:minimax-bound-stochastic}, but is instead now given by
    \begin{equation}
        \sate = \frac{1}{n}\sum_{i=1}^n (y_i(1) - y_i(0)),
    \end{equation}
    where the particular values of $(y_i(0),y_i(1))_{i=1}^n$ are entirely determined by $\PP \in \cP_\DB$

    The proof will proceed by reducing the problem of lower bounding \eqref{eq:proof:minimax-design-based} to that of lower bounding the minimax risk over the collection of distributions $\cP_\iid$ and showing that the difference thereof is negligible due to $\sate$ being a ``good'' estimator of the superpopulation mean $\ateP$ in a sense that will be made formal shortly.
    For this reason, the proof will contain discussions of two collections of distributions $\cP_\iid$ and $\cP_\DB$ defined in \cref{def:iid-stochastic-setting,def:design-based-setting}, respectively. Moreover, we will also be referring to two average treatment effects: (1) the \emph{sample} average treatment effect; (2) the \emph{population} average treatment effect. We will denote the sample average treatment effect as $\sate \coloneqq (\flatfrac{1}{n}) \sum_{i=1}^n (Y_i(1) - Y_i(0))$ and the population average treatment effect as $\ateP \coloneqq \E_{\PP} \Brac{Y(1) - Y(0)}$.

    Having established the necessary notation and concepts, we now proceed to upper bound the risk of any estimator $\estimator$ over the collection of $\iid$ distributions $\cP_\iid$ via the triangle inequality
    \begin{equation}
        \sup_{\PP \in \cP_\iid} \E_{\PP}\Brac{\Phi\Paren{d(\estimator, \ateP)}} \leq \sup_{\PP \in \cP_\iid} \E_{\PP}\Brac{\Phi\Paren{d(\estimator, \sate)}} + \underbrace{\sup_{\PP \in \cP_\iid} \E_{\PP}\Brac{\Phi\Paren{d(\sate, \ateP)}}}_{(\star)}\mper
    \end{equation}
    Notice how $(\star)$ does not depend on the estimator $\estimator$ at all, and hence we have after taking infima over estimators,
    \begin{equation}
        \inf_{\estimator}~\sup_{\PP \in \cP_\iid} \E_{\PP}\Brac{\Phi\Paren{d(\estimator, \ateP)}} \leq \inf_{\estimator}~\underbrace{\sup_{\PP \in \cP_\iid} \E_{\PP}\Brac{\Phi\Paren{d(\estimator, \sate)}}}_{(\dagger)} + \sup_{\PP \in \cP_\iid} \E_{\PP}\Brac{\Phi\Paren{d(\sate, \ateP)}}\mper
    \end{equation}
    Furthermore, notice that we can upper-bound $(\dagger)$ in terms of a supremum over $\cP_\DB$ as follows. Applying \cref{lemma:cond-expectation-bounds-expectation} with $X = \Phi\Paren{d(\estimator, \sate)}$ and $\cA$ being the set of all possible  $\{0, 1\}^n \times \{0, 1\}^n$-valued potential outcome tuples $(Y_i(0), Y_i(1))_{i=1}^n$, we have that for any $\PP \in \cP_\iid$
    \begin{equation}
        \E_{\PP}\Brac{\Phi\Paren{d(\estimator, \sate)}} \leq \sup_{\mathbb{Q} \in \cP_\DB} \E_{\mathbb{Q}}\Brac{\Phi\Paren{d(\estimator, \sate}}
    \end{equation}
    and hence the same inequality holds when taking a supremum over $\PP \in \cP_\iid$.
    Thus, putting all of the above steps together we obtain the following upper bound on the minimax risk under the stochastic setting
    \begin{align}
        &\inf_{\estimator}~\sup_{\PP \in \cP_\DB} \E_{\PP}\Brac{\Phi\Paren{d(\estimator, \sate)}} + \sup_{\PP \in \cP_\iid} \E_{\PP}\Brac{\Phi\Paren{d(\sate, \ateP)}}\\
        &\qquad \geq \inf_{\estimator}~\sup_{\PP \in \cP_\iid} \E_{\PP}\Brac{\Phi\Paren{d(\estimator, \ateP)}}
        \label{eq:proof:minimax-risk-stochastic-bound}
    \end{align}
    We can equivalently write \eqref{eq:proof:minimax-risk-stochastic-bound} as
    \begin{align}
        &\inf_{\estimator}~\sup_{\PP \in \cP_\DB} \E_{\PP}\Brac{\Phi\Paren{d(\estimator, \sate)}} \\
        &\quad \geq \inf_{\estimator}~\sup_{\PP \in \cP_\iid} \E_{\PP}\Brac{\Phi\Paren{d(\estimator, \ateP)}} - \sup_{\PP \in \cP_\iid} \E_{\PP}\Brac{\Phi\Paren{d(\sate, \ateP)}} \mcom
    \label{eq:proof:minimax-risk-stochastic-bound-rearranged}
    \end{align}
    which is what we wanted to show; thus, completing our proof.
\end{proof}

\begin{proof}[\proofref{theorem:minimax-lower-bound-design-based}]
    The proof of the theorem consists of lower bounding the right hand side of \eqref{eq:proof:minimax-risk-stochastic-bound-rearranged}.
    From \cref{theorem:minimax-bound-stochastic} we have the following lower bound on the minimax risk under the stochastic setting
    \begin{equation} 
        \fM\Paren{\estimator \Paren{\cP_{\iid}}} \geq \frac{\Phi\Paren{\flatfrac{1}{\sqrt{16 n \pi}}}}{2}  \Set{1 - \frac{3}{4 \sqrt{2}}} \mcom
    \end{equation}
    which implies the following general lower-bound on \eqref{eq:proof:minimax-risk-stochastic-bound}
    \begin{equation}
        \inf_{\estimator}~\sup_{\PP \in \cP_\DB} \E_{\PP}\Brac{\Phi\Paren{d(\estimator, \sate)}} \geq \frac{\Phi\Paren{\flatfrac{1}{\sqrt{16 n\pi}}}}{2}  \Set{1 - \frac{3}{4 \sqrt{2}}} - \sup_{\PP \in \cP_\iid} \underbrace{\E_{\PP}\Brac{\Phi\Paren{d(\sate, \ateP)}}}_{(\star)}\mper
    \end{equation}
    The above inequality is general in nature, as we have not instantiated the semi-metric $d(\cdot,\cdot)$ nor the increasing function $\Phi(\cdot)$. Moreover, notice how the second term in the inequality consists of an ``oracle'' estimator term of the average treatment effect. In other words, it consists of the average of the difference of the potential outcomes of $n$ units.

    We now instantiate the problem, letting $d(x, y) = \Abs{x - y}$ and $\Phi(x) = x^2$, so that the quantity we are controlling is the squared $L_2(\PP)$-norm. From \cref{lemma:hoeffding-estimation-bound} we have the following upper bound on $(\star)$
    \begin{equation}
        \normtp{\sate - \ateP} \leq \frac{2}{\sqrt{n}}
    \end{equation}
    and from \cref{theorem:minimax-bound-stochastic} we have the following lower bound on the minimax risk under the stochastic setting
    \begin{equation}
        \inf_{\estimator}~\sup_{\PP \in \cP_\iid} \E_{\PP}\Brac{\Paren{d(\estimator, \ateP)}^2} \geq \frac{1}{72 n\pi}\mper
    \end{equation}
    Putting both of these bounds together yields the desired lower bound on the minimax risk under the design-based setting
    \begin{equation}
        \inf_{\estimator} \sup_{\PP \in \cP_\DB} \normtp{\estimator - \sate} \geq \frac{1}{\sqrt{72 n\pi}} - \frac{2}{\sqrt{n}} \mper
    \end{equation}
    Thus, completing the proof of \cref{theorem:minimax-lower-bound-design-based}.

\end{proof}

\section{Other and auxiliary results}
\subsection{Equivalence between mini-batch complete randomization and complete randomization}\label{section:equivalence-mbcr-cr}
\begin{proof}[\proofref{proposition:mbcr-cr-equivalence}]
To show the equivalence between mini-batch complete randomization and complete randomization, we need to argue that the distribution they induce over all the possible assignments satisfying the conditions that there are $n_1$ ones and $n_0$ zeros are the same. First, recall that the probability of any assignment $\bz$ under complete randomization is $\PP_{\CR}\Paren{\bZ = \bz} = \flatfrac{1}{{n \choose n_1}}$. Hence, all we need to do is argue that mini-batch complete randomization produces the same assignment probability.

To show the equivalence, notice how mini-batch complete randomization consists of sampling from two, independent, uniform distributions over two different sets of all possible permutations, i.e., we first sample $\beta$ and then sample $\eta$. Now, if we condition on $\beta$, and apply the permutation $\eta$ to $a$, we can see that 
\begin{align}
    \PP_{\MBCR}\Paren{\bZ = \bz \svert \beta} = \flatfrac{1}{{n \choose n_1}} \mper
\end{align}
This is because $\bZ \equiv \Paren{Z_1, \dots, Z_n} = \Paren{a_{\rho(1)}, \dots, a_{\rho(n)}}$. That is, $\bZ$ is the result from sampling uniformly at random from the set that contains vectors with $n_1$ ones and $n_0$ zeros. As a last step, notice how since $\PP_{\MBCR}\Paren{\bZ = \bz \svert \beta} = \flatfrac{1}{{n \choose n_1}}$ is a constant, marginalizing over $\beta$ does not change this quantity. Hence, we have shown that mini-batch complete randomization is equivalent to complete randomization, completing our proof.
\end{proof}

\subsection{Unbiasedness of the Horvitz-Thompson estimator}\label{section:unbiased-ht}
\begin{proof}[\proofref{proposition:unbiased-ht}]
    Our strategy is to show that 
    \begin{equation}
        \hite_\etainvi \coloneqq Y_{\eta^{-1}(i)} \Paren{\frac{Z_{\beta(i)}}{\flatfrac{1}{G}} - \frac{\Paren{1-Z_{\beta(i)}}}{1 - \flatfrac{1}{G}}}
    \end{equation}
    is a conditionally, on $\eta$, unbiased estimator of the individual treatment effect, \\$\ite_{\etainvi} \equiv \Ex{Y_{\etainvi}(1) - Y_{\etainvi}(0) \svert \eta}$. From this, it then follows that $\hate$ is an unbiased estimator of the average treatment effect.

    We proceed by taking the conditional expectation
    \begin{align}
        \Ex{\hite_\etainvi \svert \eta} &= \Ex{Y_{\etainvi} \Paren{\frac{Z_{\beta(i)}}{1/G} - \frac{(1-Z_{\beta(i)})}{1 - 1/G}} \svert \eta} \\
        &= \Ex{\frac{Y_\etainvi (1) Z_{\beta(i)}}{1/G} - \frac{Y_\etainvi(0) (1-Z_{\beta(i)})}{1 - 1/G} \svert \eta} \label{eq:proof-ate-sutva} \\
        &= \frac{1}{1/G}\Ex{Y_{\etainvi}(1) Z_{\beta(i)} \svert \eta} - \frac{1}{1 - 1/G} \Ex{Y_{\etainvi}(0) (1 - Z_{\beta(i)}) \svert \eta} \\
        &= \frac{1}{1/G} \Ex{Y_{\etainvi}(1) \svert \eta} \Ex{Z_{\beta(i)}} - \frac{1}{1 - 1/G} \Ex{Y_{\etainvi}(0) \svert \eta} \Ex{1 - Z_{\beta(i)}} \label{eq:proof-ate-ignorability} \\
        &= \frac{1/G}{1/G} \Ex{Y_{\etainvi}(1) \svert \eta} - \frac{1 - 1/G}{1 - 1/G} \Ex{Y_{\etainvi}(0) \svert \eta} \\
        &= \Ex{Y_{\etainvi}(1) \svert \eta} - \Ex{Y_{\etainvi}(0) \svert \eta} \mcom
    \end{align}
    where \eqref{eq:proof-ate-sutva} follows from the stable unit treatment value (SUTVA) assumption, and \eqref{eq:proof-ate-ignorability} follows from the ignorability assumption and the independence between $\beta$ and $\eta$. We have, thus, shown that $\Ex{\hite_{\etainvi} \svert \eta} = \ite_{\etainvi}$, and a similar result follows for the estimators in the group $\widebar g$. 
    
    Due to the linearity of expectations, we can show that the Horvitz-Thompson estimator is indeed a, conditionally, unbiased estimator of the average treatment effect,
    \begin{align}
        \Ex{\hate \svert \eta} &\equiv \Ex{\frac{1}{n} \sum_{t = 1}^T \sum_{i \in g_t} \hite_{\etainvi} + \frac{1}{n}\sum_{i \in \widebar g} \hite_{\etainvi} \svert \eta} \\
        &= \frac{1}{n} \Paren{\sum_{t = 1}^T \sum_{i \in g_t} \Ex{\hite_{\etainvi} \svert \eta} + \sum_{i \in \widebar g} \Ex{\hite_{\etainvi} \svert \eta}}\\
        &= \frac{1}{n} \Paren{\sum_{t=1}^T \sum_{i \in g_t} \ite_{\etainvi} + \sum_{i \in \widebar g} \ite_{\etainvi}} \\
        &= \frac{1}{n} \sum_{i=1}^n \ite_i \\
        &= \ate
    \end{align}
    By the law of total expectation we have that $\Ex{\hate} = \Ex{\Ex{\hate \svert \eta}} = \ate$; thus, showing that the Horivtz-Thompson estimator is a conditionally and marginally unbiased estimator of the average treatment effect.
\end{proof}

\subsection{Conditional Expectation bounds the Expectation}
\begin{lemma}
\label{lemma:cond-expectation-bounds-expectation}
    Let $X \sim \PP$ be a random variable and $\cA$ be the set of exclusive and exhaustive events $A$. Then, we have that we can bound the expectation of $X$ by the supremum of $\Ex{X \svert A}$ over all events in $\cA$. Formally, that is
    \begin{equation}
        \Ex{X} \leq \sup_{A \in \cA} \Ex{X \svert A} \mper
    \end{equation}
\end{lemma}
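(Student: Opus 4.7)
The plan is to invoke the law of total expectation relative to the partition $\cA$ and then upper bound each conditional expectation by the supremum, exploiting the fact that the probabilities of the events in $\cA$ sum to one.

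First, I would observe that since $\cA$ consists of mutually exclusive and exhaustive events, the tower property gives
\begin{equation}
    \E[X] = \sum_{A \in \cA} \E[X \mid A]\, \PP(A),
\end{equation}
provided $\cA$ is countable; for an uncountable partition one would instead condition on the appropriate $\sigma$-algebra and integrate against $\PP$, but the argument is identical in structure.

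Next, for every $A \in \cA$ the trivial bound $\E[X \mid A] \leq \sup_{A' \in \cA} \E[X \mid A']$ holds. Substituting this into each summand and factoring out the supremum, which does not depend on the index of summation, yields
\begin{equation}
    \E[X] \;\leq\; \sup_{A' \in \cA} \E[X \mid A'] \sum_{A \in \cA} \PP(A) \;=\; \sup_{A' \in \cA} \E[X \mid A'],
\end{equation}
where the final equality uses exhaustivity of $\cA$ to conclude $\sum_{A \in \cA} \PP(A) = 1$. This is exactly the claim, so the proof concludes here.

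There is no serious obstacle: the result is essentially a one-line consequence of the law of total expectation combined with the observation that a weighted average is bounded above by the maximum weight. The only mild subtlety to flag, if desired, is a measurability/integrability remark ensuring the conditional expectations are well defined and the supremum is attained or approximated within the indexing set $\cA$; for the application in the preceding proof (where $\cA$ ranges over all $\{0,1\}^n \times \{0,1\}^n$-valued potential outcome tuples, a finite set) this is immediate.
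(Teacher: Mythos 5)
Your proof is correct and follows essentially the same route as the paper: apply the law of total expectation over the partition $\cA$, bound each conditional expectation by the supremum, and use $\sum_{A \in \cA}\PP(A)=1$. The measurability/countability caveat you add is a reasonable refinement but does not change the argument.
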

\begin{proof}[\proofref{lemma:cond-expectation-bounds-expectation}]
    Let $X \sim \PP$ be a random variable and $\cA$ be the set of exclusive and exhaustive events $A$. Notice how by the law of total expectation we have the following identity
    \begin{equation}
        \Ex{X} = \sum_{A \in \cA} \Ex{X \svert A}  \PP(A) \label{eq:proof-total-exp}\mper
    \end{equation}
    We can then take the supremum of $\EE[X \mid A]$ over all events in $\cA$ giving us the following upper bound on \eqref{eq:proof-total-exp}
    \begin{align}
        \sum_{A \in \cA} \Ex{X \svert A}  \PP(A) &\leq \sum_{A \in \cA} \sup_{B \in \cA} \Ex{X \svert B}  \PP(A) = \sup_{B \in \cA} \Ex{X \svert B} \mper
    \end{align}
    Thus, completing our proof. 
\end{proof}

\subsection{Bounds on the group-wise Horvitz-Thompson estimator}
\begin{lemma}[Bounds on the group-wise estimator.]
\label{lemma:group-wise-estimator-bound}
    Under Assumption~\ref{assumption:generalized-potential-outcomes}, the group-wise Horvitz-Thompson estimator, $\ghate$, under mini-batch complete randomization can be bounded as
    \begin{equation}
        -G \leq \sum_{i \in g_t} Y_{\etainvi} \Paren{\frac{Z_\betai}{\flatfrac{1}{G}} - \frac{1 - Z_{\betai}}{1 - \flatfrac{1}{G}}} \leq G \mper
    \end{equation}
\end{lemma}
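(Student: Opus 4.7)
The plan is to exploit the key structural property of mini-batch complete randomization: by construction of $\beta$, within each group $g_t$ of size $G$ (for $t \in [T]$), exactly one unit is assigned to treatment and the remaining $G-1$ units are assigned to control. This lets us split the group-wise sum into two deterministic pieces whose ranges can each be bounded using the fact that $Y_{\eta^{-1}(i)} \in [0,1]$ almost surely (Assumption~\ref{assumption:generalized-potential-outcomes}).

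First I would write
\begin{equation}
\sum_{i \in g_t} Y_{\eta^{-1}(i)} \Paren{\frac{Z_{\beta(i)}}{1/G} - \frac{1-Z_{\beta(i)}}{1-1/G}} = G \sum_{i \in g_t} Y_{\eta^{-1}(i)} Z_{\beta(i)} - \frac{G}{G-1} \sum_{i \in g_t} Y_{\eta^{-1}(i)} (1-Z_{\beta(i)}),
\end{equation}
and then invoke the fact that $\sum_{i \in g_t} Z_{\beta(i)} = 1$ and $\sum_{i \in g_t} (1-Z_{\beta(i)}) = G-1$ by the construction of $\beta$ described in \cref{section:mini-batch-complete-randomization}. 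Combined with $Y_{\eta^{-1}(i)} \in [0,1]$, the first term lies in $[0,G]$ (since it reduces to $G$ times the outcome of the unique treated unit) and the second term lies in $[-(G-1) \cdot G/(G-1), 0] = [-G, 0]$.

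The upper bound then follows by taking the maximum of the first term (namely $G$) together with the maximum of the negated second term (namely $0$), and the lower bound follows symmetrically by taking the minimum of each term (namely $0$ and $-G$, respectively). Together these yield the claimed $[-G,G]$ bound.

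There is no substantive obstacle here: the lemma is essentially a bookkeeping exercise once one uses the defining property of the within-group permutation $\beta$. The only mild subtlety is to notice that the two extreme values $G$ and $-G$ cannot be achieved simultaneously by any single realization, but since we only need an almost-sure enclosure (not a tight envelope) this observation is not needed for the statement as written; it is exactly what makes this lemma sufficient for the sub-Gaussian variance proxy $(2G)^2/4 = G^2$ used in the proof of \cref{thm:hoeffding-mbcr}.
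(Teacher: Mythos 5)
Your proof is correct and follows essentially the same argument as the paper: both bound the group-wise sum using the fact that exactly one unit in $g_t$ is treated, the remaining $G-1$ are controls, and $Y_{\etainvi}\in[0,1]$, so the treated part lies in $[0,G]$ and the control part in $[-G,0]$. Your explicit decomposition is just a slightly more formal bookkeeping of the same extremal reasoning the paper uses.
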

\begin{proof}[\proofref{lemma:group-wise-estimator-bound}]
   Recall that in each group $g_t$ only one unit is assigned to the treatment while the rest, $G-1$ units, are assigned to the control. Moreover, recall that the potential outcomes take values within the unit interval $y_i(q) \in [0,1]$ for $q \in {0,1}$. We can, thus, see that $\ghate_t$ can be upper bounded as
   \begin{equation}
       \ghate_t \leq \frac{1}{\flatfrac{1}{G}} = G \mcom
   \end{equation}
    which is achieved when the outcome of the unit under treatment takes on the value $1$, while the outcomes of the units under control are all equal to $0$. The group-wise estimator is lower bounded as
    \begin{equation}
        \ghate_t \geq \frac{1}{1- \flatfrac{1}{G}}  -(G-1) = -G  \frac{G-1}{G-1} \mcom
    \end{equation}
    which is achieved when the outcome of the unit under treatment takes $0$ as its value, and the outcomes of units under control are all $1$s. Thus, completing the proof.
\end{proof}

\end{document}